\newcommand*{\addFileDependency}[1]{
  \typeout{(#1)}
  \@addtofilelist{#1}
  \IfFileExists{#1}{}{\typeout{No file #1.}}
}
\newcommand*{\myexternaldocument}[1]{%
    \externaldocument{#1}%
    \addFileDependency{#1.tex}%
    \addFileDependency{#1.aux}%
}
\newtheorem{theorem}{Theorem}[section]
\newtheorem{corollary}{Corollary}[theorem]
\newtheorem{lemma}[theorem]{Lemma}
\newtheorem{prop}{Proposition}
\newtheorem{assumption}{Assumption}
\newtheorem{definition}[theorem]{Definition}
\title{ Short-term Temporal Dependency Detection under Heterogeneous Event Dynamic with Hawkes Processes }
\author[1,*]{\href{mailto:<albertchenyu@gmail.com>?Subject=Your UAI 2023 paper}{Yu Chen}{}}
\author[1,*]{Fengpei Li}
\author[1]{Anderson Schneider}
\author[1]{Yuriy Nevmyvaka}
\author[2]{Asohan Amarasingham}
\author[3]{Henry Lam}
\affil[1]{%
    Machine Learning Research, Morgan Stanley, New York, NY
}
\affil[2]{%
    Department of Mathematics, The City College of New York, New York, NY
}
\affil[3]{%
    Department of Industrial Engineering \& Operations Research, Columbia University, New York, NY
}
\affil[*]{%
    Authors have equal contribution
}
\begin{document}
\raggedbottom

\maketitle

\begin{abstract}
\vspace{-0.2in}
Many \textit{event sequence} data exhibit mutually exciting or inhibiting patterns. Reliable detection of such temporal dependency is crucial for scientific investigation. The \textit{de facto} model is the Multivariate Hawkes Process (MHP), whose impact function naturally encodes a causal structure in Granger causality. However, the vast majority of existing methods use direct or nonlinear transform of \textit{standard} MHP intensity with constant baseline, inconsistent with real-world data. Under irregular and unknown heterogeneous intensity, capturing temporal dependency is hard as one struggles to distinguish the effect of mutual interaction from that of intensity fluctuation. In this paper, we address the short-term temporal dependency detection issue. We show the maximum likelihood estimation (MLE) for cross-impact from MHP has an error that can not be eliminated but may be reduced by order of magnitude, using heterogeneous intensity not of the target HP but of the interacting HP. Then we proposed a robust and computationally-efficient method modified from MLE that does not rely on the prior estimation of the heterogeneous intensity and is thus applicable in a data-limited regime (e.g., few-shot, no repeated observations). Extensive experiments on various datasets show that our method outperforms existing ones by notable margins, with highlighted novel applications in neuroscience.
\end{abstract}


\section{Introduction}\label{sec:intro}

A substantial amount of timestamp data come as a sequence of irregular and asynchronous events. They are recorded in continuous time and observed in domains such as computational biology (e.g., action potential/neuron spike trains \citep{kass2001spike,pillow2008spatio}, genomic events \citep{reynaud2010adaptive}), quantitative finance (e.g., limit order book modeling for high-frequency trading \citep{bacry2015hawkes,bowsher2007modelling}, credit risk modeling \citep{errais2010affine}), social network analysis (e.g., social media user activity \citep{farajtabar2015coevolve,zhou2013learning}), e-healthcare (\citep{ wang2018supervised}) or seismology (e.g., earthquake aftershock \citep{ogata1988statistical}). Besides asynchronicity, the sequence data often exhibit mutual interaction patterns where the occurrence of one event can excite or inhibit that of another. For example, the news-driven trading in behavioral finance studied the mutual excitation between investor sentiment shocks and the negative price jumps \citep{yang2018applications}, while in the cortical network, inhibitory connectivity in firing-rate between neurons and synapse controls memory maintenance \citep{mongillo2018inhibitory}. These interaction patterns have been called: \textit{temporal dependency} \citep{zuo2020transformer}, \textit{cross-correlation} \citep{zhang2020self}, \textit{coupling effect} \citep{pillow2008spatio} or \textit{Granger causality} \citep{xu2016learning}. As noted in \citep{eichler2017graphical}, although stand-alone notions of Granger causality can not establish causal-effect links, the detection of temporal dependency remains useful for prediction or scientific investigation. 

Temporal point process (TPP)  \citep{cox1980point} is a powerful tool for modeling event sequence and Multivariate Hawkes process (MHP) \citep{hawkes1971point}, as a special type of TPP, has been widely used as the \textit{de facto} tool for capturing the temporal dependency among events (see above, e.g.,\citep{bacry2015hawkes,farajtabar2015coevolve,wang2022hawkes,zuo2020transformer}). An MHP models the occurrence probability using a history-dependent conditional \textit{intensity} and its impact function is particularly well-suited to detect the mutual excitatory effect. The inhibitory effect can also be incorporated for MHP, but some nonlinear link function is required to map the MHP intensity into $\mathbb R^+$ (e.g., notably a clipping function $x^+=\max(x,0)$ in \citep{10.3150/13-BEJ562} or sigmoid function in \citep{zhouj2021efficient}).

Despite the expressiveness of \textit{impact functions} (also called \textit{coupling filter}, \textit{trigger kernel}, \textit{influence function}, see \citep{pillow2008spatio, zhouj2021efficient, zhou2013learning}), the background component in MHP intensity is assumed to be time-invariant. Possibly due to the extra modeling difficulty entailed, virtually all existing studies on MHP, implicitly or explicitly, use direct or nonlinear transform of standard MHP intensity with constant baseline, including the modern DL-based methods (e.g., Transformer HP \citep{zuo2020transformer} HP in infinite relational model or Dirichlet mixture model \citep{blundell2012modelling,xu2017dirichlet}, sigmoid nonlinear MHP with P\'olya-Gamma variable augmentation \citep{zhouj2021efficient}, self-attentive HP and recurrent neural network \citep{zhang2020self}). Notable exceptions with heterogeneity include \citep{mei2017neural} with neurally self-modulating HP with LSTM and \citep{zhou2021nonlinear} where a state switching latent process is proposed (yet still assuming constant background within each state) and \citep{hawkes2018hawkes} where the heterogeneous background is briefly discussed as a generalization of MHP to represent "exogenous economic activity". 

However, real-world event dynamics are often decisively \textit{temporally heterogeneous}. For example, Twitter has information bursts spurred by exogenous events (e.g., breaking news or sports games)\citep{wang2022hawkes}, the firing of neurons is commonly driven by varying visual stimuli \citep{siegle2021survey}, and trading activity has a diurnal variation (e.g., more trades occur around market open/close than around noon \citep{bowsher2007modelling}). Under unknown heterogeneous dynamics, temporal dependency detection is difficult as one struggles to distinguish the effect of mutual interaction from that of background intensity fluctuation (e.g., did the arrival of orders for stock A stimulate that for stock B, or did they both simply experience a spike in trading activity?).

In this paper, we show that the maximum likelihood estimation (MLE) from standard MHP for short-term temporal dependency detection has non-negligible errors under heterogeneous event dynamics. However, this error decreases by an order of magnitude (in terms of impact window or kernel width) if the heterogeneous background between the \textit{target} HP (recipients of the impact) and \textit{source} HP (initiators of the impact) is \textit{uncorrelated} (or \textit{orthogonal} in the Hilbert space sense, $L_2[0,T]$ or $C[0,T]$, where $T$ is observation horizon). Thus, loosely speaking, MHP can still estimate short-term cross-impact reasonably well, until the heterogeneous intensity between the target HP and source HP shares a common/correlated varying background. Building on this insight, we proposed a robust and computationally-efficient method modified from MLE, which utilizes a nonparametric estimate of heterogeneous intensity not of the target HP but of the source HP. By focusing on the background intensity of the source, we not only reduce the inference difficulty due to the coupling between the target HP background and impact function but also the error by regressing the common varying background out of the target HP intensity. 

The contribution of this paper can be summarized as:
\begin{itemize}
    \item To the best of our knowledge, our work is first to formally report and analyze the error of MLE from MHP in short-term temporal dependency detection due to heterogeneous background. We investigate the relation between the error and the correlation (or inner product) of backgrounds among interacting HPs, upon which we propose a novel method to reduce the error.

    \item Through extensive experiments, we show our method exhibits superior performance and is robust, cost-efficient, applicable in a data-limited regime (e.g., lack of repeated observations), and suitable for inference.
    
    \item Finally, we apply our method on mouse visual cortices data and discover the distant interactions between neurons on a fine timescale (e.g., within 50 ms) in both top-down and bottom-up pathways, showcasing its novelty and direct applicability in neuroscience.

\end{itemize}

\section{Related Work}
\textbf{Hawkes process.} Many efforts have been devoted to detecting the temporal dependency among point processes, e.g. \citep{chwialkowski2014kernel,gunawardana2011model}. Among point processes, Hawkes processes stand out as the \textit{de facto} tool for modeling complex temporal dependency in event sequences. The paper \citep{eichler2017graphical} established the link between Granger causality and impact function in MHP and many methods are proposed to learn the temporal dependency in MHP, via group sparsity, \citep{xu2016learning}, nonparametric learning using Euler-Lagrange equation \citep{zhou2013learningtrigger}, isotonic nonlinear link function \citep{wang2016isotonic}, online learning \citep{yang2017online} and modern DL-based methods (see intro,  \citep{zuo2020transformer,blundell2012modelling,xu2017dirichlet,zhang2020self}.  However, these methods use direct or nonlinear transform of \textit{standard} MHP time-invariant base intensity, overlooking the heterogeneity in event dynamics. Notably, \citep{mei2017neural} implicitly allows for heterogeneity. Latent variable augmentation is proposed in \citep{zhouj2021efficient,zhou2021nonlinear,zhou2022efficient,zhou2020efficient} to incorporate the time-varying background, but the modeling of heterogeneity typically relies on piecewise constants. Moreover, most methods are data-intensive (e.g., as reported in \citep{yang2017online}, methods as \citep{zhou2013learningtrigger} require more than $10^5 d$ arrival data to obtain good results on $d\leq 5$ event streams) and computationally-extensive (e.g., MCMC, EM algorithm or complex neural architecture) which is unsuitable for inference in data-limited regimes. Indeed, often in practice, only short/unrepeated sequences are available \citep{salehi2019learning}, which not only amplifies the
risk of overfitting but also makes estimation of heterogeneous background infeasible.

\begin{figure}[ht]
\centering
\begin{subfigure}{0.47\textwidth}
\centering
\includegraphics[width=0.99\linewidth]{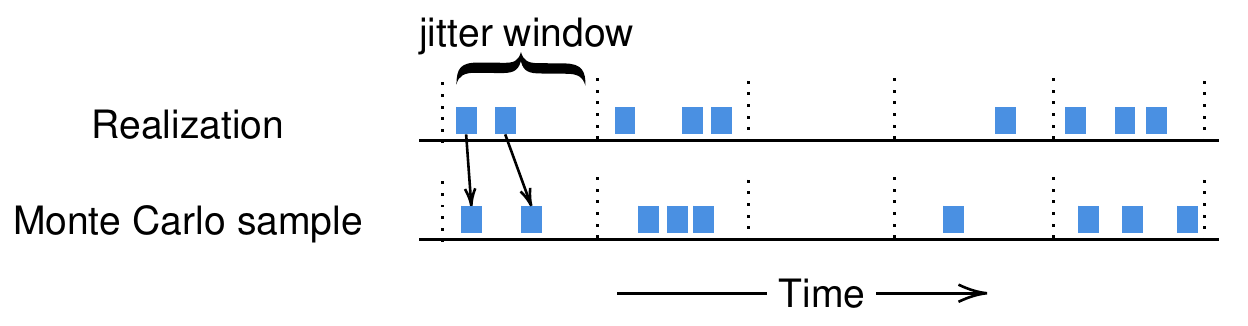}
\end{subfigure}
\vspace{-0.17in}
\caption{\small{Construction of Monte Carlo samples from the null in condition inference based CCG. Blue dots are timestamps}.
}
\vspace{-0.2in}
\label{fig:jitter_demo}
\end{figure}

\textbf{Conditional inference based cross-correlogram (CCG).} Heterogeneous dynamic is ubiquitous in neuroscience \citep{farajtabar2015coevolve}. Due to the limitations of TPP and MHP in this regime, a popular method in neuroscience for detecting temporal dependency is cross-correlogram (CCG) via conditional inference. Conditional hypothesis testing with a carefully designed null hypothesis can bypass the background heterogeneity issue. Particularly, given realizations of two point processes, CCG assesses the correlation of timing in the temporal dependence between events, conditioning on the unobserved fluctuating background activity. As shown in \citep{amarasingham2012conditional}, the method relies on conditional inference and bootstrap, where Monte Carlo samples from the null are generated by: shifting the timestamps within each \textit{jitter window} (prior knowledge on the timescale of interactions) by a random amount, small enough to preserve the local background intensity, but big enough to break the fine time interaction pattern (see Figure \ref{fig:jitter_demo}). However, this method requires prior knowledge of timescales and only works when the timescale of background activity frequency (jitter window width) is larger than that of the interaction effect (See Figure \ref{fig:sim_demo} and discussion below. Additional details in Appendix 
\ref{appendix:ccg}, \ref{appendix:small_sigma_I}).
Also, the outcome of the hypothesis test does not measure the strength of the coupling effect directly.



\section{Analysis and Methods }
\subsection{Basic Concepts}
A temporal point process is a stochastic process whose realization consists of a list of discrete event timestamps $\{t_n\}_{n\in\mathbb N}\subseteq \mathbb R^+$, which can be equivalently represented by a counting process $\{N(t), 0\leq t\leq T\}$ \citep{daley2008introduction}. Formally, given a probability triple $(\Omega, \{\mathcal H_t\}_{0\leq t\leq T},\mathbb P)$, $N(t):=N((0,t],\omega) $ is a realization (i.e., $\omega\in\Omega$) of counting measure $N$ for the number of points in $(0,t]$ and $\mathcal H_t$ is the $\sigma$-algebra generated from $N(B)$ for Borel subsets $B\subseteq(0,t]$ ( or $(-\infty,t]$, we do not distinguish them here). The intensity of the point process is
$\lambda(t):=\lim_{\delta\to 0}\frac{1}{\delta} \mathbb P(N(t+\delta)-N(t)>0|\mathcal H_t)$. It can be shown (see \citep{ogata1978asymptotic}) for $\mathcal H_t$-progressively measurable $\lambda(t),f(t)$ with left continuous (thus predictable) $f(t)$ that $\mathbb E[dN(t)|\mathcal H_t]=\lambda(t)dt$ and
\begin{align}\label{ogte}
    \mathbb E\int_0^T f(t)dN(t) = &\mathbb E\int_0^T f(t)\mathbb E[dN(t)|\mathcal H_t]\nonumber\\
   =&\mathbb E\int_0^T f(t)\lambda(t)dt.
\end{align}

For the multivariate Hawkes process, the density has the form
\begin{equation}\label{standardmhp}
    \lambda_j(t) = \alpha_j +\sum_{i=1}^d \int_0^t h_{i\to j}(t-s)dN_i(s)
\end{equation}
for $1\leq i,j\leq d$, where $d$ is the dimension (number of event streams), $\alpha_j$ is the \textit{baseline} intensity for process $N_j$ and $h_{i\to j}$ is the impact function from $N_i$ to $N_j$. Standard MHP models mutual excitatory behavior and requires $h_{i\to j}\geq 0$ to avoid negative intensity which is meaningless. However, one can simply set $\lambda \leftarrow \max(\lambda,0)$ \citep{10.3150/13-BEJ562} to extend MHP for modeling mutual inhibitory behavior.

\subsection{Heterogeneous event dynamic}

The standard MHP assumes the baseline intensity $\alpha$ to be a constant \eqref{standardmhp}, which is incongruous with the heterogeneous event dynamic frequently observed in real-world scenarios. To accommodate heterogeneity, instead of using \eqref{standardmhp} as building blocks to construct a complex structure, we directly proposed a generalized MHP intensity for $1\leq i,j \leq d$:
\begin{equation}\label{gMHP}
    \lambda_j(t) = \alpha_j +f_j(t)+\sum_{i=1}^d \int_0^t h_{i\to j}(t-s)dN_i(s)
\end{equation}
where $f_j(t)$ is the fluctuation in the background intensity. For now, we do not restrict whether $f_j$ is stochastic or deterministic, but simply assume it is $\mathcal H_t$-adapted. For identifiability between $\alpha$ and $f$, we assume $\int_0^T\mathbb E[f(t)]dt=0$ (or more generally $\int_0^P f(t)dt=0$ if it is deterministic and perodic with peroid $P$ or $\mathbb E[f]=0$ if $f(t)$ is stationary).

The main approaches for learning MHP falls under two directions: maximum likelihood-based (MLE) approaches \citep{ogata1978asymptotic,zhou2013learning,yang2017online} and moment-matching flavored approaches based on higher-order statistics \citep{da2014hawkes}. Due to the unknown statistical property of $f$, the moment-based methods are not applicable for \eqref{gMHP}. To investigate the applicability of the MLE approach for \eqref{gMHP}, we study a representative model for subsequent discussion. However, we emphasize that our proposed method applies generally to models from \eqref{gMHP}.

\subsection{Representative Model} \label{sec:representatie_model}
Consider two point processes $N_i$, $N_j$ as shown in Figure \ref{fig:couple_process_diagram}. 
The intensity functions are,
\begin{equation}
\begin{aligned}
\lambda_{j}(t)
&= \alpha_j
+ f_{j}(t) 
+ \int_0^t h_{i\to j}(t-s) dN_{i}(s) \\
%
\lambda_{i}(t) 
&= \alpha_i
+ f_{i}(t)
\end{aligned}
\label{eq:true_model}
\end{equation}

\label{subsec:jitter_model}
\begin{figure}[ht]
\centering
\begin{subfigure}{0.48\textwidth}
\centering
\includegraphics[width=0.7\linewidth]{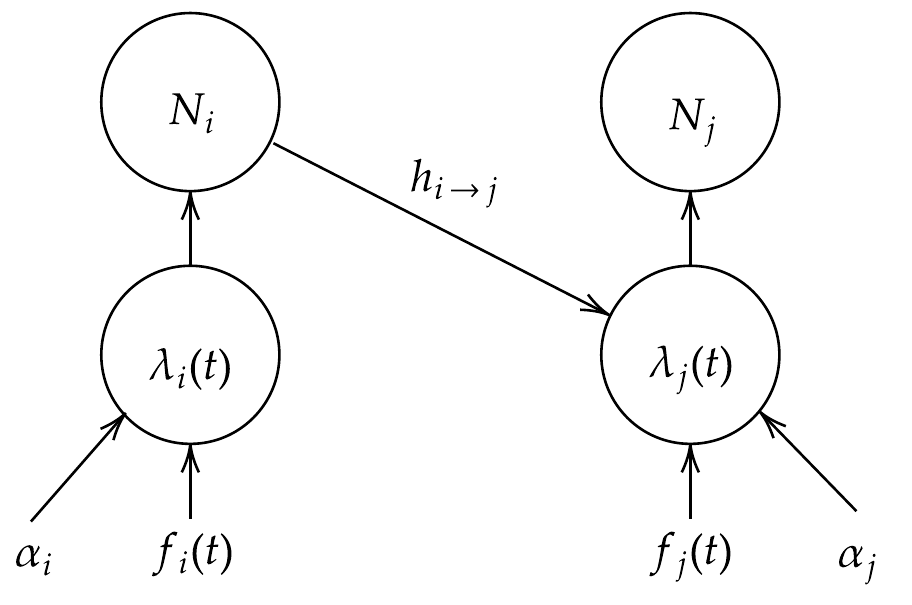}
\end{subfigure}
\vspace{-0.18in}
\caption{\small Illustrative MHP with a heterogeneous background. Two events stream $N_i, N_j$ with intensities $\lambda_i(t), \lambda_j(t)$, baseline $\alpha_i+f_{i}(t), \alpha_j+f_{j}(t)$ and the one-way impact function $h_{i\to j}$. }
\label{fig:couple_process_diagram}
\vspace{-0.15in}
\end{figure}

where $h_{i\to j}$ is the impact function and $f_{i}(t), f_{j}(t)$ are unknown fluctuations. There are various methods of learning 
the form $h_{i\to j}$ with data-driven and nonparametric techniques (\citep{zhou2013learningtrigger,xu2016learning,yang2017online}. To facilitate the discussion of MLE, we assume the form of impact has been learned within a 1-D parametric family $h_{i\to j}(\cdot)\in\{\theta\cdot \mathbf{1}_{[0,\sigma_h]}(\cdot)\}_{\theta\in\Theta}$ which is widely applied in neuroscience (here, $\mathbf{1}_{[0,\sigma_h]}(t) = 1$ if $0\leq t \leq \sigma_h$ and 0 otherwise). We set the ground truth impact to be $h_{i\to j}=c\cdot \mathbf{1}_{[0,\sigma_h]}$ for a given $c>0$. Moreover, we focus on the recovery of impact function (i.e., estimation of $c$) and treat other parameters as \textit{nuisance} parameters, as in \textit{profile likelihood}\citep{murphy2000profile}.

In MHP \eqref{standardmhp}, not considering $f_j$, one  parameterizes $\lambda_j$ as
\begin{equation}\label{like}
    \lambda_{\boldsymbol\theta }(t) = \theta_1 + \theta_2\int_0^t\mathbf{1}_{[0,\sigma_h]}(t-s)dN_i(s),
\end{equation}
which is misspecified and maximizes the log-likelihood:
\begin{align*}
   \hat{\boldsymbol{\theta}}=&\operatorname*{argmin}_{\boldsymbol\theta }\ell({\boldsymbol\theta }; \mathcal H_T)\nonumber\\
   =& -\int_0^T \lambda_{\boldsymbol\theta }(t)dt+\int_0^T \log\lambda_{\boldsymbol\theta }(t)dN_j(t),
\end{align*}
see, e.g., \citep{ogata1978asymptotic}. In the misspecified model, one would expect $\hat{\boldsymbol{\theta}}$ converges to $\boldsymbol\theta_{KL} $, the minimizer in KL-divergence information criterion \citep{white1982maximum}:
\begin{equation*}
    \boldsymbol\theta_{KL} =\operatorname*{argmin}_{\boldsymbol\theta }\Lambda(\boldsymbol\theta):=\mathbb E\ell({\boldsymbol\theta }),
\end{equation*}
under suitable regularity conditions, including $\mu$-strong convexity and $L$-Lipschitz gradient of $\Lambda$. We want to quantify the error between $[\boldsymbol\theta_{KL}]_2$ and $c$. We list technical conditions in Appendix \ref{appendix:proofs}, along with proofs for the following results.

\begin{prop}\label{prop1}
    Under regularity conditions specified in Appendix, for deterministic $f_i$ and $f_j$ in \eqref{gMHP}, the error satisfies
    \begin{equation*}
        |[\boldsymbol\theta_{KL}]_2-c|=\Theta\bigg(\bigg|\int_0^T\frac{ f_i(t)f_j(t)}{\alpha_j+c} dt\cdot \sigma_h+ o(\sigma_h)\bigg|\bigg).
    \end{equation*}
\end{prop}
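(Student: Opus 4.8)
The plan is to characterize $\boldsymbol\theta_{KL}$ by stationarity of the population objective and then convert the parameter error into the size of $\nabla\Lambda$ at the ground-truth value $\boldsymbol\theta^\star=(\alpha_j,c)$. First I would put $\Lambda(\boldsymbol\theta)=\mathbb E\,\ell(\boldsymbol\theta)$ into workable form by applying the identity \eqref{ogte} to the $dN_j$ integral, so that $\Lambda(\boldsymbol\theta)=\mathbb E\int_0^T[-\lambda_{\boldsymbol\theta}(t)+\lambda_j(t)\log\lambda_{\boldsymbol\theta}(t)]\,dt$ with $\lambda_j$ the true intensity from \eqref{eq:true_model}. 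Writing $g(t):=\int_0^t\mathbf 1_{[0,\sigma_h]}(t-s)\,dN_i(s)$ for the number of $N_i$-events in $[t-\sigma_h,t]$, we have $\lambda_{\boldsymbol\theta}=\theta_1+\theta_2 g$ and $\lambda_j=\alpha_j+f_j+cg$, and the stationarity conditions $\nabla\Lambda(\boldsymbol\theta_{KL})=0$ become $\mathbb E\int_0^T(\lambda_j/\lambda_{\boldsymbol\theta}-1)\,dt=0$ and $\mathbb E\int_0^T g(\lambda_j/\lambda_{\boldsymbol\theta}-1)\,dt=0$.

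Next I would invoke the $\mu$-strong convexity and $L$-Lipschitz gradient of $\Lambda$. Since $\nabla\Lambda(\boldsymbol\theta_{KL})=0$, the elementary sandwich $L^{-1}\|\nabla\Lambda(\boldsymbol\theta^\star)\|\le\|\boldsymbol\theta_{KL}-\boldsymbol\theta^\star\|\le\mu^{-1}\|\nabla\Lambda(\boldsymbol\theta^\star)\|$ shows $\|\boldsymbol\theta_{KL}-\boldsymbol\theta^\star\|=\Theta(\|\nabla\Lambda(\boldsymbol\theta^\star)\|)$, reducing everything to a gradient evaluation at the truth. Here $\lambda_{\boldsymbol\theta^\star}=\alpha_j+cg$, so $\lambda_j/\lambda_{\boldsymbol\theta^\star}-1=f_j/(\alpha_j+cg)$ and the two components take the clean forms $[\nabla\Lambda(\boldsymbol\theta^\star)]_1=\mathbb E\int_0^T\frac{f_j}{\alpha_j+cg}\,dt$ and $[\nabla\Lambda(\boldsymbol\theta^\star)]_2=\mathbb E\int_0^T\frac{g\,f_j}{\alpha_j+cg}\,dt$.

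The core computation is to expand $[\nabla\Lambda(\boldsymbol\theta^\star)]_2$ in $\sigma_h$. Because $\lambda_i=\alpha_i+f_i$ is history-independent, $N_i$ is an inhomogeneous Poisson process and $g(t)$ is Poisson with mean $\mu(t)=\int_{t-\sigma_h}^t\lambda_i(s)\,ds=\sigma_h\lambda_i(t)+O(\sigma_h^2)$; hence $\mathbb P(g=0)=1-O(\sigma_h)$, $\mathbb P(g=1)=\mu(t)+O(\sigma_h^2)$, $\mathbb P(g\ge2)=O(\sigma_h^2)$. Only the $g=1$ event contributes at first order, and there $\alpha_j+cg$ collapses to $\alpha_j+c$, giving $[\nabla\Lambda(\boldsymbol\theta^\star)]_2=\frac{1}{\alpha_j+c}\int_0^T f_j(t)\mu(t)\,dt+O(\sigma_h^2)$. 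Substituting $\mu(t)=\sigma_h(\alpha_i+f_i(t))+O(\sigma_h^2)$ and using the identifiability normalization $\int_0^T f_j\,dt=0$ to kill the $\alpha_i\!\int f_j$ term leaves exactly $\sigma_h\int_0^T\frac{f_i f_j}{\alpha_j+c}\,dt+o(\sigma_h)$. Since $[\nabla\Lambda(\boldsymbol\theta^\star)]_1$ is of the same $O(\sigma_h)$ order, the sandwich transfers this rate to $|[\boldsymbol\theta_{KL}]_2-c|$, matching the claim.

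I expect the difficulty to reside not in the gradient computation but in the convexity sandwich. Two points need care. First, passing from the vector statement $\|\boldsymbol\theta_{KL}-\boldsymbol\theta^\star\|=\Theta(\|\nabla\Lambda(\boldsymbol\theta^\star)\|)$ to a $\Theta$ bound on the single coordinate $[\boldsymbol\theta_{KL}]_2$ requires handling the $\theta_1$–$\theta_2$ coupling so that the $\theta_2$-component is not cancelled. Second, and more delicate, is justifying that $\mu$ and $L$ are genuinely $\Theta(1)$ uniformly in $\sigma_h$: the Hessian entry $[\nabla^2\Lambda(\boldsymbol\theta^\star)]_{22}=-\mathbb E\int_0^T g^2\lambda_j/\lambda_{\boldsymbol\theta^\star}^2\,dt$ is itself $\Theta(\sigma_h)$, because $\theta_2$ only affects the likelihood on the probability-$O(\sigma_h)$ windows where $g\ge1$. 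This near-flatness of $\Lambda$ in the $\theta_2$ direction is the crux; I would address it by verifying the stated regularity conditions on a suitably rescaled $\theta_2$-coordinate, so that the extracted rate is exactly $\Theta(\sigma_h)$ rather than being inflated by an inverse power of $\sigma_h$.
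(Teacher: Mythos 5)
Your proposal is correct and follows essentially the same route as the paper's own proof: the paper likewise reduces the problem to evaluating $\partial\Lambda/\partial\theta_2$ at the truth $(\alpha_j,c)$, expands it to first order in $\sigma_h$ using the fact that $g(t)=\int_0^t\mathbf 1_{[0,\sigma_h]}(t-s)\,dN_i(s)$ is Poisson with mean $\lambda_i(t)\sigma_h+o(\sigma_h)$ (packaged there as Lemma \ref{p_on_b} on $\mathbb E[1/(a+bg)]$, whereas you condition on $\{g=0\},\{g=1\},\{g\ge 2\}$ --- an equivalent computation), kills the $\alpha_i\int f_j$ term via the normalization $\int_0^T f_j(t)\,dt=0$, and then converts the gradient magnitude into parameter error through the strong-convexity/Lipschitz sandwich of Assumption \ref{ass}. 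The two delicate points you flag at the end --- extracting the single coordinate $[\boldsymbol\theta_{KL}]_2$ from the vector-norm bound, and the $\sigma_h$-uniformity of $\mu$ and $L$ given that the Hessian entry in the $\theta_2$ direction is itself $\Theta(\sigma_h)$ --- are genuine, but the paper's proof does not address them either; it simply cites Assumption \ref{ass} at the corresponding step.
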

\begin{prop}\label{prop2}
    Under the same condition as in Proposition \ref{prop1}, if $f_i$ and $f_j$ are stationary, the error satisfies 
    \begin{equation*}
        |[\boldsymbol\theta_{KL}]_2-c|=\Theta\bigg(\bigg|\frac{\text{Cov}(f_i,f_j)}{ \alpha_j +c}\sigma_h + o(\sigma_h)\bigg|\bigg).
    \end{equation*}
\end{prop}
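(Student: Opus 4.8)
The plan is to reuse the machinery behind Proposition~\ref{prop1}: characterize $\boldsymbol\theta_{KL}$ through the stationarity conditions $\nabla\Lambda(\boldsymbol\theta_{KL})=0$, perform a small-$\sigma_h$ expansion, and read off the leading deviation of $[\boldsymbol\theta_{KL}]_2$ from $c$. Writing $G(t):=\int_0^t\mathbf 1_{[0,\sigma_h]}(t-s)\,dN_i(s)=N_i(t)-N_i(t-\sigma_h)$, so that $\lambda_{\boldsymbol\theta}(t)=\theta_1+\theta_2 G(t)$ and the true intensity is $\lambda_j(t)=\alpha_j+f_j(t)+cG(t)$, I would first apply the identity \eqref{ogte} to convert the $dN_j$ term of $\Lambda(\boldsymbol\theta)=\mathbb E\,\ell(\boldsymbol\theta)$ into an ordinary time integral against $\lambda_j$. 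Differentiating in $\theta_1$ and $\theta_2$ then yields the two first-order conditions
\begin{equation*}
\mathbb E\!\int_0^T\frac{\lambda_j(t)}{\lambda_{\boldsymbol\theta}(t)}\,dt=T,\qquad \mathbb E\!\int_0^T\frac{G(t)\,\lambda_j(t)}{\lambda_{\boldsymbol\theta}(t)}\,dt=\mathbb E\!\int_0^T G(t)\,dt.
\end{equation*}

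The crucial structural difference from Proposition~\ref{prop1} is handled next. Because $f_i,f_j$ are now stationary (and the exogenous backgrounds drive $N_i$, hence $G$), the integrands above have expectations constant in $t$ up to transient boundary effects on $\{t<\sigma_h\}$, whose total contribution is $O(\sigma_h)$ and is absorbed into the $o(\sigma_h)$ remainder after normalization. Consequently $\mathbb E\!\int_0^T X(t)\,dt=T\,\mathbb E[X]$ for each integrand $X$, the horizon $T$ cancels between the two sides of every optimality equation, and the system collapses to the purely algebraic pair $\mathbb E[\lambda_j/\lambda_{\boldsymbol\theta}]=1$ and $\mathbb E[G\lambda_j/\lambda_{\boldsymbol\theta}]=\mathbb E[G]$ in the stationary expectation. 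This cancellation is exactly why the $\int_0^T(\cdot)\,dt$ appearing in Proposition~\ref{prop1} is replaced by a single scalar here.

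The heart of the argument is the leading-order evaluation of the mixed moment $\mathbb E[G(t)f_j(t)]$, which is where the covariance enters. Conditioning on the whole background paths $\{f_i,f_j\}$ makes $N_i$ a conditionally inhomogeneous Poisson process with intensity $\alpha_i+f_i$, so $\mathbb E[G(t)\mid f_i,f_j]=\int_{t-\sigma_h}^t(\alpha_i+f_i(s))\,ds$; Taylor expanding and using $\mathbb E[f_j]=0$ gives
\begin{equation*}
\mathbb E[G(t)f_j(t)]=\mathbb E\!\Big[f_j(t)\!\int_{t-\sigma_h}^t f_i(s)\,ds\Big]=\mathbb E[f_i(t)f_j(t)]\,\sigma_h+o(\sigma_h)=\text{Cov}(f_i,f_j)\,\sigma_h+o(\sigma_h),
\end{equation*}
where stationarity makes $\mathbb E[f_i(t)f_j(t)]$ the constant $\text{Cov}(f_i,f_j)$. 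Alongside the elementary expansions $\mathbb E[G]=\alpha_i\sigma_h$, $\mathbb P(G\ge 2)=O(\sigma_h^2)$ and $\mathbb E[G^2]=\mathbb E[G]+O(\sigma_h^2)$, I would substitute these into the two algebraic conditions, split each expectation over the events $\{G=0\}$ and $\{G=1\}$ (discarding the $O(\sigma_h^2)$ mass on $\{G\ge2\}$), and solve the resulting $2\times 2$ system perturbatively for $(\theta_1,\theta_2)$. The $f_j$--$G$ coupling feeds in only through $\mathbb E[G f_j]$, producing $[\boldsymbol\theta_{KL}]_2-c=\text{Cov}(f_i,f_j)\sigma_h/(\alpha_j+c)+o(\sigma_h)$, with the denominator $\alpha_j+c$ arising as the effective intensity on $\{G=1\}$.

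Finally, the two-sided $\Theta(\cdot)$ claim follows because the leading coefficient $\text{Cov}(f_i,f_j)/(\alpha_j+c)$ is generically nonzero, while $\mu$-strong convexity and $L$-Lipschitzness of $\Lambda$ bound the perturbation remainder, pinning the error between constant multiples of $|\text{Cov}(f_i,f_j)\sigma_h/(\alpha_j+c)|$. I expect the main obstacle to be the third step: establishing $\mathbb E[G f_j]=\text{Cov}(f_i,f_j)\sigma_h+o(\sigma_h)$ rigorously and uniformly, i.e. controlling simultaneously the contribution of multiple counts ($G\ge 2$), the spread of $f_i$ across the window in $\int_{t-\sigma_h}^t f_i(s)\,ds\approx f_i(t)\sigma_h$, and the non-stationary boundary layer near $t=0$ and $t=T$, and verifying that all of these are genuinely $o(\sigma_h)$ after dividing by $T$.
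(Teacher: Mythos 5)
Your ingredients are the right ones and largely coincide with the paper's: the first-order conditions obtained via \eqref{ogte}, the conditional-Poisson expansion of the window count $G$ (this is exactly what the paper's Lemma \ref{p_on_b} provides), the mixed moment $\mathbb E[G f_j]=\operatorname{Cov}(f_i,f_j)\sigma_h+o(\sigma_h)$, and an appeal to Assumption \ref{ass}. The gap is in the assembly: you claim that substituting these expansions into the two moment equations and ``solving the resulting $2\times 2$ system perturbatively'' yields $[\boldsymbol\theta_{KL}]_2-c=\operatorname{Cov}(f_i,f_j)\sigma_h/(\alpha_j+c)+o(\sigma_h)$. It does not. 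In the $\theta_2$-equation, \emph{every} term carries a factor of $\sigma_h$, because $G$ is nonzero only on an event of probability $O(\sigma_h)$: to leading order it reads
\begin{equation*}
\frac{\big[(\alpha_j+c)\alpha_i+\operatorname{Cov}(f_i,f_j)\big]\sigma_h}{\theta_1+\theta_2}=\alpha_i\sigma_h+O(\sigma_h^2),
\end{equation*}
so the $\sigma_h$'s cancel and the equation pins $\theta_1+\theta_2=\alpha_j+c+\operatorname{Cov}(f_i,f_j)/\alpha_i+O(\sigma_h)$, while the $\theta_1$-equation gives $\theta_1=\alpha_j+O(\sigma_h)$. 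The honest perturbative solution therefore produces an order-one deviation $[\boldsymbol\theta_{KL}]_2-c=\operatorname{Cov}(f_i,f_j)/\alpha_i+O(\sigma_h)$ --- the classical omitted-variable bias $\operatorname{Cov}(G,f_j)/\operatorname{Var}(G)$, whose numerator and denominator are both $\Theta(\sigma_h)$ --- with denominator $\alpha_i$, not $\alpha_j+c$, and not a factor $\sigma_h$ in sight. Equivalently: the curvature of $\Lambda$ in the $\theta_2$-direction is itself $O(\sigma_h)$ (since $\mathbb E[G^2\lambda_j/\lambda_{\boldsymbol\theta}^2]=O(\sigma_h)$), so an $O(\sigma_h)$ violation of stationarity translates into an $O(1)$ shift of the stationary point. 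Your derivation, carried out faithfully, contradicts the statement you set out to prove.

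The paper never solves the system. Its proof (i) profiles out the nuisance, showing via \eqref{horder0}--\eqref{horder1} that the optimal constant baseline at $\theta=c$ is $\mathbb E f_1=\alpha_j$ up to $O(\sigma_h)$; (ii) evaluates the (profile) gradient \emph{at the reference point} $\theta=c$ and shows it is off zero by $\operatorname{Cov}(f_i,f_j)\sigma_h/(\alpha_j+c)+o(\sigma_h)$, which is where the denominator $\alpha_j+c$ legitimately arises; and then (iii) invokes Assumption \ref{ass} --- $\mu$-strong convexity and $L$-Lipschitz gradient with constants treated as fixed, i.e.\ uniform in $\sigma_h$ --- to convert that gradient offset into a two-sided bound on $|[\boldsymbol\theta_{KL}]_2-c|$. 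In that argument the assumption is not a remainder-control device, as you use it; it is the step that delivers the $\Theta(\sigma_h)$ rate, precisely because it postulates away the $O(\sigma_h)$ degeneracy of the Hessian that your explicit solving exposes. So to prove the proposition as stated you must follow the gradient-offset-plus-Assumption-\ref{ass} route and drop the system-solving; and note that the obstacle you flag at the end (rigor of $\mathbb E[Gf_j]$, boundary layers, $\{G\ge 2\}$) is not the real difficulty --- those estimates are fine --- the difficulty is the ill-conditioning of the very system you propose to solve.
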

where the big-$\Theta$ notation stands for growth function with the same rate in upper and lower bound, i.e. $f(x)=\Theta(g(x))$ if there exists $0\leq m\leq M$ s.t. $mg(x)\leq f(x)\leq Mg(x), \forall x$.

Proposition \ref{prop1} and \ref{prop2} suggest that, under heterogeneous event dynamics, the error in estimating impact function scales linear with $\sigma_h$, with the coefficient determined by the "inner product" between $f_i$ and $f_j$. In fact, if we define $\langle f_i,f_j\rangle = \mathbb E\int_0^T f_i(t)f_j(t)dt$, then we can unify (and generalize to a non-stationary case) the result in Proposition \ref{prop1} and \ref{prop2}. We see that, for short-term temporal dependency detection (i.e., $\sigma_h \to 0$), the ratio between estimation error and interaction timescale $\sigma_h$ is non-vanishing and non-negligible unless the two HPs have \textit{uncorrelated} background (i.e., $\langle f_i,f_j\rangle =0$).

How could one reduce the order of this error term? The most natural way is to observe or estimate $f_j$ directly. Indeed, given access to $f_j$, MLE is no longer misspecified. However, as discussed in \citep{zhou2020efficient}, the "exogenous component" (i.e., the baseline intensity) and the "endogenous" component (i.e., impact function) are "coupled" in the likelihood that hampers inference. In \citep{zhou2020efficient}, a \textit{branching} structure is used to decouple these two components in HP, which does not apply to MLE because when the same, typically limited data are used to estimate both $f_j$ and $h_{i\to j}$, the results are generally non-reliable (indeed, a naive use of MLE for fitting both would result in delta measures around the event timestamps for $N_j$). However, since the correlation between $f_i$ and $f_j$ results in a large error, one conjectures whether estimation of $f_i$, or entities highly correlated with $f_i$, could help regress out the common varying intensity out of $f_j$. Indeed, we have the following:
\begin{prop}\label{prop3}
    Under the same condition as in Proposition \ref{prop2}, if we let $r:=\max\{\|f_i-\mathbb Ef_i\|_\infty,\|f_j-\mathbb Ef_j\|_\infty\}$, if we have access to $g = \frac{ f_i-\mathbb E[f_i]}{\sqrt{\text{Var}(f_i)}}$ (i.e., normalized basis for $f_i$) in the likelihood \eqref{like} so that one parameterizes
    \begin{equation}\label{like2}
    \lambda_{\boldsymbol\theta }(t) = \theta_0 + \theta_1 g+\theta_2\int_0^t\mathbf{1}_{[0,\sigma_h]}(t-s)dN_i(s),
\end{equation}
then 
\begin{align*}
    [\boldsymbol\theta_{KL}]_1  =& \mathbb E[ (f_j-\mathbb E[f_j])g] +o(r^2+\sigma_h), \nonumber\\
    [\boldsymbol\theta_{KL}]_2  = &o(r^2+\sigma_h).
\end{align*}
\end{prop}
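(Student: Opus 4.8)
The plan is to characterize $\boldsymbol\theta_{KL}$ by its stationarity conditions $\nabla\Lambda(\boldsymbol\theta_{KL})=0$ and to read $[\boldsymbol\theta_{KL}]_1,[\boldsymbol\theta_{KL}]_2$ off as weighted least-squares regression coefficients. Writing $M(t):=\int_0^t\mathbf 1_{[0,\sigma_h]}(t-s)\,dN_i(s)=N_i(t)-N_i(t-\sigma_h)$ and the regressors $\psi_0\equiv1,\ \psi_1\equiv g,\ \psi_2\equiv M$, I differentiate $\ell(\boldsymbol\theta)$, take expectations, and apply \eqref{ogte} to the predictable integrands $\psi_k/\lambda_{\boldsymbol\theta}$, turning $\mathbb E\int\psi_k\lambda_{\boldsymbol\theta}^{-1}\,dN_j$ into $\mathbb E\int\psi_k\lambda_{\boldsymbol\theta}^{-1}\lambda_j\,dt$. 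Taking the representative target intensity as $\lambda_j=\alpha_j+f_j$ (so that a nonzero $[\boldsymbol\theta_{KL}]_2$ would be a purely spurious effect of the $f_i$--$f_j$ correlation), the three conditions read $\mathbb E\int_0^T\big(\lambda_j/\lambda_{\boldsymbol\theta}-1\big)\psi_k\,dt=0$. Expanding $\lambda_{\boldsymbol\theta}^{-1}$ about the baseline $\alpha_j$ and treating $\theta_0-\alpha_j,\theta_1 g,\theta_2 M,f_j$ as $O(r)+O(\sigma_h)$, these conditions state, at leading order, that $\lambda_{\boldsymbol\theta}$ is the orthogonal projection of $\lambda_j$ onto $\mathrm{span}\{1,g,M\}$ for the inner product $\langle a,b\rangle:=\mathbb E\int_0^T\alpha_j^{-1}a\,b\,dt$; since $\alpha_j\cdot 1$ already lies in the span, $[\boldsymbol\theta_{KL}]_1$ and $[\boldsymbol\theta_{KL}]_2$ are exactly the regression coefficients of $f_j$ onto $(g,M)$.

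To extract $[\boldsymbol\theta_{KL}]_1$ I use the Frisch--Waugh formula $[\boldsymbol\theta_{KL}]_1=\langle f_j,g^{\perp}\rangle/\langle g^{\perp},g^{\perp}\rangle$ with $g^{\perp}:=g-\mathrm{proj}_{\{1,M\}}g$. Because $\mathbb E g=0$ and the only coupling is through $\langle g,M\rangle=\Theta(r\sigma_h)$, dominated by $\langle M,M\rangle=\Theta(\sigma_h)$, one obtains $g^{\perp}=g+O(r\sqrt{\sigma_h})$ in $L_2$; combined with $\mathrm{Var}(g)=1$ this yields $[\boldsymbol\theta_{KL}]_1=\mathbb E[f_j g]+o(r^2+\sigma_h)=\mathbb E[(f_j-\mathbb E f_j)g]+o(r^2+\sigma_h)$, the first displayed identity (using $\mathbb E g=0$). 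The residual $\rho:=f_j-[\boldsymbol\theta_{KL}]_1\,g$ is then orthogonal to $\mathrm{span}\{1,g\}$.

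The second identity is the crux. By Frisch--Waugh again, $[\boldsymbol\theta_{KL}]_2=\langle f_j,M^{\perp}\rangle/\langle M^{\perp},M^{\perp}\rangle$ with $M^{\perp}:=M-\mathrm{proj}_{\{1,g\}}M$. The decisive structural fact is $M=\bar M+\tilde M$, where the $\mathcal H_t$-compensator is $\bar M(t)=\int_{t-\sigma_h}^{t}\lambda_i(s)\,ds=\sigma_h\alpha_i+\sigma_h\sqrt{\mathrm{Var}(f_i)}\,g(t)+O(\sigma_h^2)$, so $\bar M$ lies in $\mathrm{span}\{1,g\}$ up to $O(\sigma_h^2)$, while $\tilde M:=M-\bar M$ is the compensated counting noise --- a martingale difference uncorrelated with the exogenous background, with second moment $\langle\tilde M,\tilde M\rangle=\Theta(\sigma_h)$. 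Hence $M^{\perp}=\tilde M+O(\sigma_h^2)$, and, since $M^{\perp}\perp\mathrm{span}\{1,g\}$ makes the $g$-component of $f_j$ drop out, the numerator equals $\langle\rho,M^{\perp}\rangle=\langle\rho,\tilde M\rangle+\langle\rho,O(\sigma_h^2)\rangle=0+O(r\sigma_h^2)$, the first term vanishing by exogeneity and \eqref{ogte}. Dividing by the denominator $\langle M^{\perp},M^{\perp}\rangle=\Theta(\sigma_h)$ gives $[\boldsymbol\theta_{KL}]_2=O(r\sigma_h)=o(r^2+\sigma_h)$: once $g$ absorbs the shared background, the correlation-driven bias of Proposition~\ref{prop2} is removed and no short-term impact survives.

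The main obstacle is precisely the near-collinearity $M\approx\sigma_h\alpha_i+\sigma_h\sqrt{\mathrm{Var}(f_i)}\,g$, which sends $M^{\perp}$ --- and therefore the denominator --- to zero with $\sigma_h$, so $[\boldsymbol\theta_{KL}]_2$ is pinned down only by the small orthogonal martingale noise (of $L_2$-size $\sqrt{\sigma_h}$) and the $O(\sigma_h^2)$ curvature of the compensator; obtaining the stated rate therefore requires matching the $O(r\sigma_h^2)$ numerator against the $\Theta(\sigma_h)$ denominator with explicit constants and proving $\langle\rho,\tilde M\rangle=0$ rigorously from the exogeneity of the background via \eqref{ogte}. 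I would also discharge the linearization: the weight $\lambda_{\boldsymbol\theta}^{-1}$ is not constant, so the exact KL-minimizer differs from the weighted projection by nonlinear corrections, which I bound to $o(r^2+\sigma_h)$ using the $\mu$-strong-convexity and $L$-Lipschitz-gradient hypotheses on $\Lambda$ in a standard implicit-function-theorem perturbation argument, together with the sup-norm bounds $\|f_i-\mathbb E f_i\|_\infty,\|f_j-\mathbb E f_j\|_\infty\le r$ that control the third-moment remainder terms.
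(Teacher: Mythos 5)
Your overall architecture is genuinely different from the paper's and much of it is sound: where the paper works with a population profile likelihood over the nuisance coefficients $\boldsymbol\eta(\theta)$, expands the Poisson count expectation exactly (its Lemma on $\mathbb E[(a+b\int\mathbf 1_{[0,\sigma_h]}dN_2)^{-1}]$) and then a second-order ratio expansion $R(X,Y)$, you recast the first-order conditions as a weighted-$L_2$ projection, apply Frisch--Waugh, and split $M$ via its Doob--Meyer decomposition $M=\bar M+\tilde M$. That last idea is the right replacement for the paper's Poisson lemma: the martingale part correctly supplies the $\Theta(\sigma_h)$ denominator (the atom $\mathbb E[\tilde M(t)^2]=\Theta(\sigma_h)$, not $\Theta(\sigma_h^2)$), and your exogeneity argument $\langle\rho,\tilde M\rangle=0$ (conditional on the backgrounds, $N_i$ is Poisson, so $\mathbb E[\tilde M(t)\mid f_i,f_j]=0$) is cleaner than the corresponding step in the paper.

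There are, however, two genuine gaps. First, you prove only the null case: you set the true target intensity to $\lambda_j=\alpha_j+f_j$, i.e.\ $c=0$, whereas the proposition inherits the setting of Proposition \ref{prop2} with a general true impact $c$ (the paper's proof evaluates the profile-likelihood gradient at $\theta=c$ and concludes $|[\boldsymbol\theta_{KL}]_2-c|=o(r^2+\sigma_h)$). The extension is not automatic in your framework, because for $c\neq 0$ the term $\theta_2 M$ in $\lambda_{\boldsymbol\theta}$ is \emph{not} pointwise small ($M\geq 1$ with probability $\Theta(\sigma_h)$), so the expansion of $\lambda_{\boldsymbol\theta}^{-1}$ about $\alpha_j$ that underlies your projection picture is invalid; the paper's exact Poisson computation is what produces the correct $\alpha_j+c$ denominators in Propositions \ref{prop1}--\ref{prop2}. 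Second, and more seriously, your plan to discharge the linearization ``by strong convexity and Lipschitz gradient via a standard implicit-function-theorem argument'' would fail: those hypotheses only convert a gradient offset into a parameter offset, and the quadratic term in expanding $\lambda_{\boldsymbol\theta}^{-1}$ leaves a gradient offset of generic size $\Theta(r^2)$ --- e.g.\ the $k=1$ condition picks up $\theta_1\mathbb E[g^2\rho]/\bar\lambda^2$ with $\theta_1=O(r)$ and $\rho=O(r)$, a third-moment term that is \emph{not} $o(r^2+\sigma_h)$ (take $\sigma_h\ll r^2$). These terms vanish only because of a special conditional-moment structure, which is exactly what the paper's proof verifies: it shows its candidate $\tilde{\boldsymbol\eta}_c$ annihilates the full second-order functional $R(X,Y)$ using the identity $\mathbb E[f_1\mid\mathcal G]=\sum_i[\tilde{\boldsymbol\eta}_c]_ig_i$ (projection coinciding with conditional expectation on the span of the basis). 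Without proving this cancellation --- the actual crux of the $o(r^2)$ accuracy, for $[\boldsymbol\theta_{KL}]_1$ in particular --- your error bookkeeping stops at first order and the stated rates do not follow.
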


Although we can not directly observe $f_i$, Proposition \ref{prop3} suggests that using $f_i$ as a basis may reduce the error. Moreover, the form of $[\boldsymbol\theta_{KL}]_1 \approx \langle f_j,g\rangle$ also suggests using a "project $f_j$ on $f_i$" as basis to modify the MLE.

\subsection {Proposed Method}
Inspired by the analysis above, we now propose our modification for estimating impact. In particular, we minimize the following expression modified from likelihood function $\tilde{\ell}$:,
\begin{gather}
\min_{h_{i\to j}, \beta_j, \beta_w, \sigma_w } 
\left\{
-\sum_{s \in N_j } \log \tilde{\lambda}_{j}(s)
+ \int_0^T \tilde{\lambda}_{j}(s) \mathrm{d} s  \right\} 
\label{eq:target_likelihood} \\
\tilde{\lambda}_{j}(t) := 
\Big( \beta_j
+ \beta_w \; \overline{\textbf{s}_{i}}(t) 
+ \int_0^t h_{i\to j}(t-s) dN_{i}(s) 
\Big)_+
\label{eq:regression_intensity} \\
\overline{\textbf{s}_{i}}(t)
= \int_0^T W(t-s;\sigma_w) dN_i(s)
\label{eq:mean_coarsen_regressor}
\end{gather}
where $\overline{\textbf{s}_{i}}$ can be regarded as the coarsened point process smoothed by a Gaussian kernel 
$W(\tau;\sigma_w)=\frac{1}{\sqrt{2\pi\sigma_w^2}} \exp(-\frac{\tau^2}{2\sigma_w^2})$ with scale $\sigma_w$, serving as a substitute basis for $f_i$. We also specify an algorithm that can be implemented in continuous time, which does not require one to discretize the time points \citep{eden2008continuous, foufoula1986continuous}, so that the memory requirement is proportional to the number of time points instead of the number of time bins. The optimization algorithm is detailed in Appendix \ref{appendix:jitter_optimization_algo}. Empirical and theoretical analysis of the estimator will be discussed in section \ref{subsec:simulation_study}.
\vspace{-0.1in}




\subsection{Other use cases of the method} \label{subsec:use_cases}
\vspace{-0.15in}

Before experiments, we present some generality in the application of the method, with details left in Appendix.

\paragraph{Hypothesis testing (in Appendix \ref{appendix:hypothesis_testing}) } 
We compare our model with conditional inference via CCG and standard MHP, in hypothesis testing. Both our model and CCG have proper uniform p-value distribution under the null \citep[Theorem 10.14]{wasserman2004all}, where the standard MHP fails. Moreover, our method is also more powerful/sensitive at detecting weak signals with small sample sizes, see Figure \ref{fig:sim_demo}. Figure \ref{fig:sim_demo} shows a simulation example of fine timescale interaction between two point processes.
Synthetic data is generated by HP with one process inhibiting the other and common fluctuating background in Figure \ref{fig:sim_demo}A.
Figure \ref{fig:sim_demo}B is the result of the conditional inference via cross-correlogram (CCG).
The curve is mostly in the negative region indicating some inhibitory influence, yet the majority part of the curve stays within the acceptance band (i.e., not statistically significant).
Figure \ref{fig:sim_demo}CD are the result of the standard MHP vs our method, where impact function is represented as lag period. As shown, our method accurately detects the inhibitory relation and the estimated error is close to the true function (red curve), with the improvement compared to CCG in the statistical power and 
standard MHP in terms of error.
A similar observation in real data will be shown in Figure \ref{fig:neural_demo}.
A more detailed comparison between these models is in Appendix \ref{appendix:hypothesis_testing}.

\begin{figure}[ht]
\centering
\begin{subfigure}{0.48\textwidth}
\centering
\includegraphics[width=0.99\linewidth]{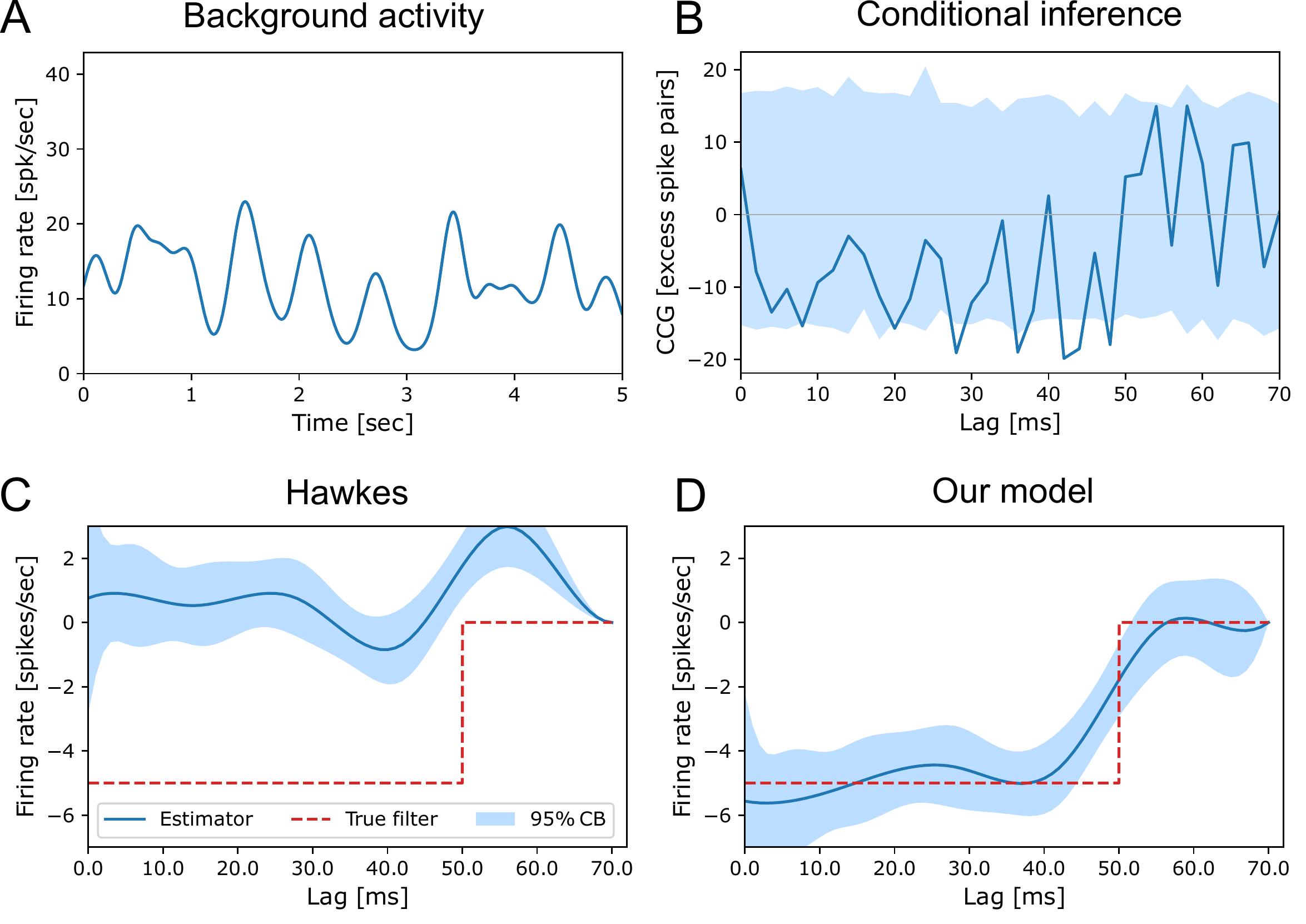}
\end{subfigure}
\caption{\small Impact function estimation with background fluctuation in simulation.
\textbf{A}: Shared background intensity.
\textbf{B}: CCG-based conditional inference. The 95\% acceptance band is constructed using Monte Carlo samples from the null distribution.
\textbf{C}: Standard MHP. The red curve is the ground truth.
\textbf{D}: Our model.
The band in C and D is also 95\% pointwise CI.
}
\vspace{-0.2in}
\label{fig:sim_demo}
\end{figure}

\paragraph{Non-parametric fitting for the impact function (in Appendix \ref{appendix:nonparametric})} 
Our method does not have constraints on modeling the impact function, which can be easily extended to non-parametric fitting.
One option is the general additive model using splines \citep[ch. 5]{pillow2008spatio, hastie2009elements}.
By leveraging the integral trick (Appendix \ref{appendix:jitter_optimization_algo}), time points do not need to be discretized and computational cost is small.

\paragraph{Bayesian inference (in Appendix \ref{appendix:bayesian}) } 
The method can be adopted for Bayesian inference where the uncertainty of the smoothing kernel width $\sigma_w$ is evaluated using a sampling-based inference algorithm.
The simulation shows that incorporating the uncertainty of  $\sigma_w$ does not affect the estimation of the temporal dependency significantly.
\vspace{-0.2in}



\section{Experiments }
In this section, we empirically verify the method through multiple simulation studies,
then apply the new tool to the neuroscience dataset where we discover the network of interacting neurons on a fine timescale.
For simulations, continuous-time point processes are generated using Lewis' thinning algorithm \citep{lewis1979simulation, ogata1981lewis}.
The gradient descent-based optimization algorithm is in Appendix section \ref{appendix:jitter_optimization_algo}.
Our code is available
\url{https://github.com/AlbertYuChen/point_process_coupling_public}.

\subsection{Simulation study } \label{subsec:simulation_study}

\subsubsection{Toy Example with background fluctuation } \label{subsec:background_artifacts}
In this synthetic dataset, the dynamic baselines have known form so that their correlation or the "inner product" between the source and target processes, as discussed in Section \ref{sec:representatie_model}, can be calculated in closed-form.
The background activities are
$f_i(t) = A \sin(2\pi (t - \phi_{\mathrm{rnd}})), 
f_j(t) = A \sin(2\pi (t -  \phi_{\mathrm{rnd}} - \phi_{\mathrm{lag}}))$,
where $A$ is the amplitude,
$T$ is the length of the trial.
We sample $\phi_{\mathrm{rnd}}\sim \mathrm{Unif(0, 1)}$ and set it to vary from trial to trial so the same background is never repeatedly observed. Here
$\phi_{\mathrm{lag}}$ controls the correlation between $f_i,f_j$, which we quantify using the \textit{normalized} dot product
$\langle f_i, f_j \rangle
:= \frac{1}{T A^2} \int_0^T f_i(s) f_j(s) \mathrm{d}s$.
When $\phi_{\mathrm{lag}}= 0$ and $0.5$, the dot product achieves the largest positive and negative value respectively;
when $\phi_{\mathrm{lag}}= 0.25$, the dot product is zero.

For the problem we are considering, short-term temporal dependency detection with dynamic background, there really is no "state-of-the-art" model as we are not mainly interested in predicting future observations, but we aim at gaining insight into the relationship between features and responses for scientific discovery, which is a more challenging task \citep{fan2020statistical}.
Although many recent point process models, such as \citep{mei2017neural, zhang2020self, zuo2020transformer}, are designed for prediction task, 
one popular representative deep learning-based model by \cite{mei2017neural} using recurrent neural networks is included as the baseline model.
The performance of three models are compared: standard MHP, our model, and Neural Hawkes \citep{mei2017neural}. 
Some other deep learning models are not considered due to the convoluted black-box structure.
For example in \citep{zhang2020self}, the intensity function is
\begin{equation*}
\begin{aligned}
\lambda_i(t)=&\text{softplus}(\mu_{u,i+1}+ \\ 
& (\eta_{u,i+1}-\mu_{u,i+1})\exp(-\gamma_{u,i+1}(t-t_i))),
\end{aligned}
\end{equation*}
where the variables $\mu,\eta,\gamma$ are all functions of latent variables obtained
through attention network. Another example is \citep{zuo2020transformer}, where the intensity function is
\begin{equation*}
    \lambda_k(t)=f_k(\alpha_k\frac{t-t_j}{t_j}+\boldsymbol w_k^T \boldsymbol h(t_j)+b_k),
\end{equation*}
where $t_j$ is the last event (not necessarily type k) and $h$ is the latent variable that carries more history information extracted from transformers.
Just by observing the intensity form above, one realizes that these models, designed for the event sequence prediction, are very difficult to draw inference on the coupling effect.
The method in \citep{mei2017neural} is the simplest framework we found where one can split out the coupling effect with minimum modification of the model.

The impact function is the square window impact function with a given width, so only the amplitude needs to be estimated.
Neural Hawkes takes intervals of the superimposed point processes one by one in sequence. The impact function from source to target is modeled as
\begin{gather}
\boldsymbol{c}(t) = 
\bar{\boldsymbol{c}}_{i+1} + 
(\boldsymbol{c}_{i+1} - \bar{\boldsymbol{c}}_{i+1}) 
\mathbb{I}_{[0,\sigma_h]}(t-t_i^{\mathrm{source}}), \\
\boldsymbol{h}(t) = \boldsymbol{o}_i \odot \mathrm{tanh}(\boldsymbol{c}(t)), \\
\lambda_{\mathrm{target}}
= \left(\boldsymbol{W}_{\mathrm{target}}^T \boldsymbol{h} \right)_+,
\end{gather}
which is slightly modified for the context (original kernel in \citep{mei2017neural} is exponential).
The impact function is extracted from the model (the original model does not directly offer an estimated parameter) as
$h_{\mathrm{source}\to \mathrm{target}}(t)
= \boldsymbol{W}_{\mathrm{target}}^T
\left[ \boldsymbol{o}_i \odot \mathrm{tanh}(
(\boldsymbol{c}_{i+1} - \bar{\boldsymbol{c}}_{i+1}) 
\mathbb{I}_{[0,\sigma_h]}(t) \right]
$ which could capture a time point's impact on the intensity.
Instead of modeling multiple points in the history at once as in the standard MHP, Neural Hawkes considers non-linear mapping, which only receives one last interval, while the history effect is carried over $\boldsymbol{c}_{i+1}$, $\bar{\boldsymbol{c}}_{i+1}$, and $\boldsymbol{o}_{i}$ through a recurrent neural network. The result is shown in Figure \ref{fig:bias_comparison} while
details are left in Appendix \ref{subsec:sinusoid}.

\begin{figure}[ht]
\centering
\includegraphics[width=0.7\linewidth]{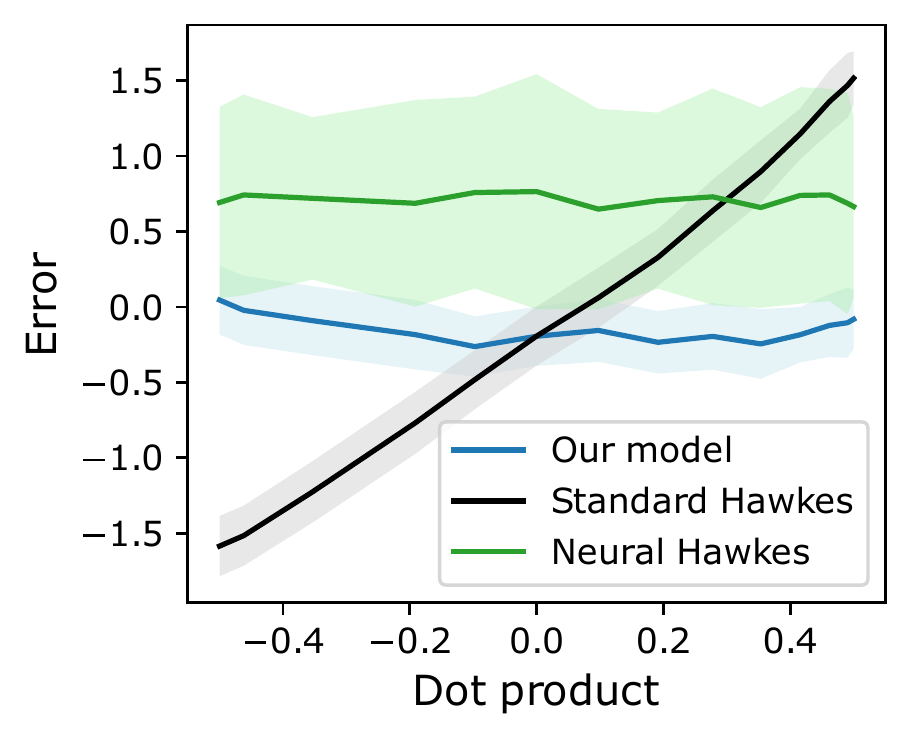}
\vspace{-0.15in}
\caption{\small A comparison of impact function estimation between standard MHP, Neural Hawkes, and our model under dynamic background.
The confidence band is created from 100 simulations.
}
\vspace{-0.12in}
\label{fig:bias_comparison}
\end{figure}

As shown, the bias of standard MHP is nearly linearly correlated with the dot product, as suggested by theoretical analysis. The error of Neural Hawkes is less susceptible to this correlation, which corroborates the ability of a recurrent structure to capture the interaction effect despite dynamic background. However, the error and variance of the impact estimation from Neural Hawkes are visibly non-negligible. This is likely due to the fact that neural network models typically need large datasets for training.
In contrast, our model performs satisfactorily in this example.

\subsubsection{Background kernel smoothing } \label{subsec:background_kernel_smoothing}
The kernel-smoothed basis in \eqref{eq:mean_coarsen_regressor} plays a key role in our method.
This section studies the relationship between the kernel width and the error of the estimator. In special cases, we are able to approximate the behavior of the estimator with an analytical formula.
Following the model framework in \eqref{eq:true_model}, assuming the background activity is generated similar to the \textit{linear Cox process} \citep{diggle1985kernel} or the \textit{cluster process} \citep[Definition 6.3.I.]{daley2003introduction}:
\begin{equation}
f_{i}=f_{j}:= 
\sum_i \phi_{\sigma_I}(t - t^c_i)
\label{eq:linear_cox_main}
\end{equation}
where $\phi_{\sigma_I}(\cdot)$ is some positive and even function, i.e., $\phi_{\sigma_I}(\cdot) >0$ and $ \phi_{\sigma_I}(\tau) = \phi_{\sigma_I}(-\tau)$. Here
$t^c_i$ are the centers of the windows generated by a Poisson process with intensity $\rho$.
$f_{i}$ is second-order stationary with a \textit{reduced covariance density} defined as follows (also see Appendix \ref{appendix:theoretical_derivations}).
\begin{equation}
{\small
\begin{aligned}
\breve{c}_{\Lambda}(u) :=& 
\mathbb{E}[f_{i}(x) f_{i}(x + u)] 
- \mathbb{E}[f_{i}(x)] \mathbb{E}[f_{i}(x + u)] \\
=& \rho [\phi_{\sigma_I} \ast \phi_{\sigma_I}](u) \\
\breve{c}_{N}(u) :=& 
\mathbb{E}\left[ 
\frac{d N_{i}( x) d N_{i}(x + u) }{(\mathrm{d}t)^2} \right]
 - \mathbb{E}\left[\frac{d N_{i}(x)}{\mathrm{d}t}\right] 
 \mathbb{E}\left[\frac{d N_{i}(x+u)}{\mathrm{d}t}\right]  \\
=& \rho \cdot [\phi_{\sigma_I} \ast \phi_{\sigma_I}](u)
+ (\rho+\alpha_i) \delta(u)
\end{aligned}
}
\end{equation}
which describes the smoothness of background activity, and
$\alpha_i$ is the constant in \eqref{eq:true_model}.
If adjacent points with lag $u$ have larger covariance $\breve{c}_{\Lambda}(u)$, the background would be smoother.
The impact functions are
$h_{i\to j}(t) = \alpha_{i\to j} h(t)$, with amplitude to be fitted, for example
$h(t) = \mathbb{I}_{[0,\sigma_h]}(t)$.
Then the error in model \eqref{eq:target_likelihood} may be approximated as,
\begin{equation}
\mathrm{error}(\hat\alpha_{i\to j}) \approx 
\frac{ 
\langle W, W \rangle_{\breve{c}_{N}} 
    \langle h, \mathbf{1} \rangle_{\breve{c}_{\Lambda}} 
- \langle h, W \rangle_{\breve{c}_{N}} 
    \langle W, \mathbf{1} \rangle_{\breve{c}_{\Lambda}}
}{
\langle W, W \rangle_{\breve{c}_{N}} 
\langle h, h^- \rangle_{\breve{c}_{N}}
- \langle W, \mathbf{1} \rangle_{\breve{c}_{\Lambda}}^2
}
\label{eq:bias_formula}
\end{equation}
$\mathbf{1}$ is the constant and
$h^-(\tau)=h(-\tau)$.
The special inner product here are defined as
$\langle g_1, g_2 \rangle_{\breve{c}} 
:= \int [g_1 \ast g_2](s) \breve{c}(s) \mathrm{d}s$ with $\ast$ denoting the convolution.
The derivation of the analytical formula is in Appendix \ref{appendix:theoretical_derivations}. Simulation and analytical results are presented in Figure \ref{fig:kernel_smoothing_eg}.

\begin{figure}[ht]
\centering
\includegraphics[width=0.9\linewidth]{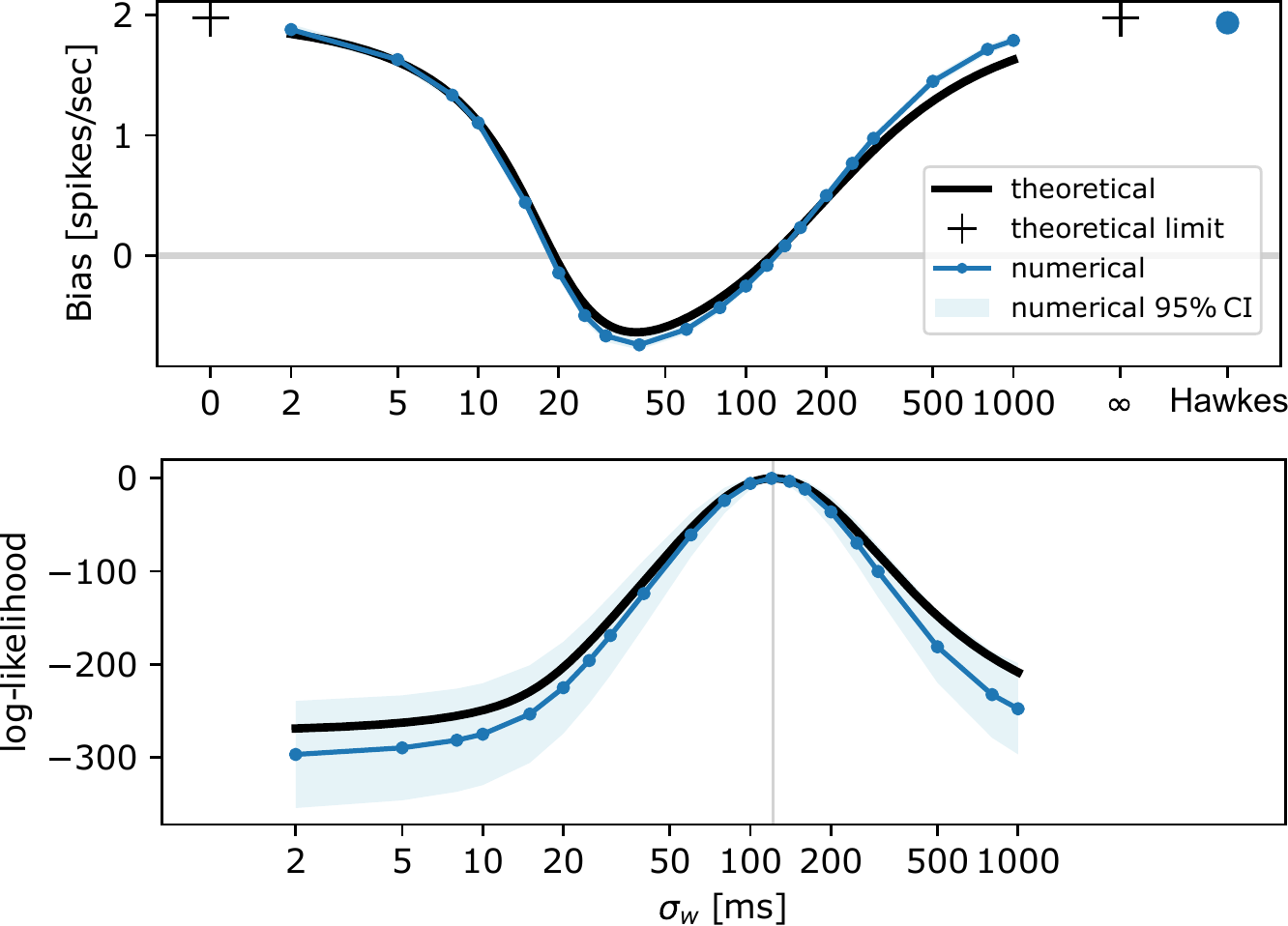}
\vspace{-0.1in}
\caption{\small Error and likelihood of the estimator as functions of background smoothing kernel width $\sigma_w$ in \eqref{eq:mean_coarsen_regressor}. Numerical and theoretical results as in \eqref{eq:bias_formula} are shown in blue and dark respectively. The error of standard MHP is the blue dot on the right.
}
\label{fig:kernel_smoothing_eg}
\end{figure}

The error and log-likelihood are plotted as functions of the smoothing kernel width $\sigma_w$ in \eqref{eq:mean_coarsen_regressor}.
The MLE, indicated by the vertical line in Figure \ref{fig:kernel_smoothing_eg}, achieves a small error that agrees with the example in section \ref{subsec:background_artifacts}.
Interestingly, when the kernel width is too small or too large, including the theoretical limits by taking $\sigma_w \to 0$ or $\sigma_w \to\infty$, the model fails under heterogeneity. In this case, the error is close to that of standard MHP.
Details are in Appendix \ref{appendix:sim_linear_cox}.

\begin{figure}[ht]
\centering
\includegraphics[width=0.9\linewidth]{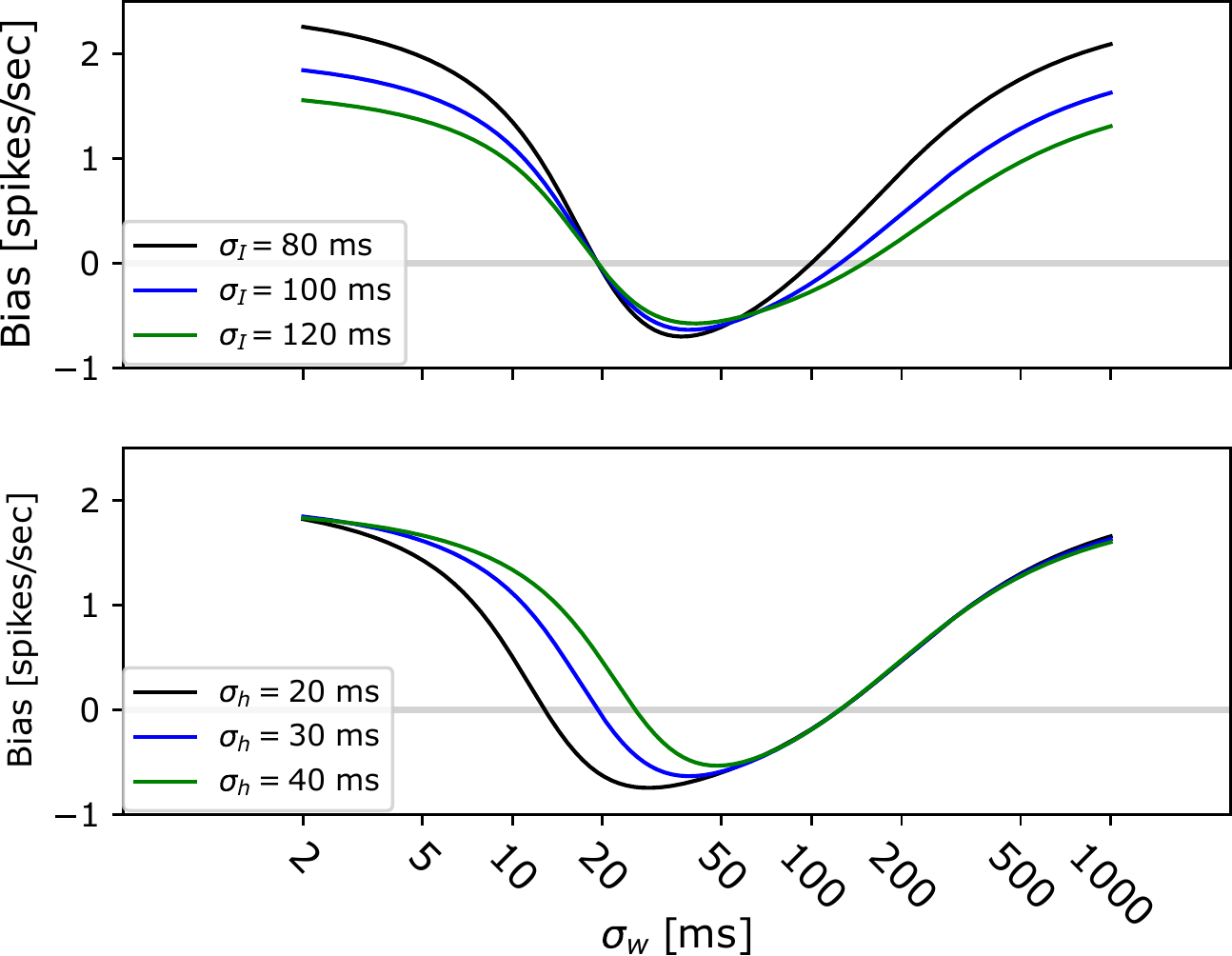}
\vspace{-0.1in}
\caption{Error function with different background timescales $\sigma_I$ (top) or coupling effect timescales $\sigma_h$ (bottom).
Numerical results match the theoretical results well, so only theoretical results are presented according to \eqref{eq:bias_formula}.
}
\vspace{-0.12in}
\label{fig:bias_tuning}
\end{figure}

In Figure \ref{fig:kernel_smoothing_eg},
when $\sigma_w$ is between 20 ms and 120 ms, the error can be negative.
The error as a function of the background smoothing kernel has two roots. The roots are related to the timescale of the coupling effect $\sigma_h$ and the timescale of the background $\sigma_I$ as in \eqref{eq:linear_cox_main}.
In Figure \ref{fig:bias_tuning},
if $\sigma_I$ increases, the root on the right, corresponding to the MLE, will move toward the right, as the background smoothing kernel $W$ captures the fluctuation of the background.
If $\sigma_h$ increases, the root on the left will move toward the right.
This can be intuitively interpreted by \eqref{eq:bias_formula}.
Let $W_h$ be the kernel with $\sigma_w \approx \sigma_h$, then
$\langle W_h, W_h \rangle_{\breve{c}_{N}}
    \langle h, \mathbf{1} \rangle_{\breve{c}_{\Lambda}}
\approx \langle h, W_h \rangle_{\breve{c}_{N}}
    \langle W_h, \mathbf{1} \rangle_{\breve{c}_{\Lambda}}$.
So $\sigma_w = \sigma_h$ is close to the root of \eqref{eq:bias_formula}.
Changing the amplitude of the impact function $\alpha_{i\to j}$ in a certain range does not influence the bias curve.
More details are in Appendix \ref{appendix:timescales}.

\subsubsection{Two-way cross connections and self-connections }
The model in Figure \ref{fig:couple_process_diagram} only shows one cross-connection $i\to j$.
This simulation scenario includes the most general two-way MHP cross/self connections between processes ($i\to j$ and $j\to i$), and self-connections ($i\to i$ and $j\to j$). The comparison between our model and standard MHP is in Table \ref{tab:full_model_results}. Details of the experiment are in Appendix \ref{appendix:full_connection}.
Our model considerably outperforms the standard MHP in estimating cross-impact connections.
However, both models perform poorly on self-connection estimation, as they are considered nuisance parameters in our method.

\begin{table}[H]
\centering
{\scriptsize
\begin{tabular}{cccclcc}
\cline{1-3} \cline{5-7}
 & \multicolumn{2}{c}{Our model} &  &  & \multicolumn{2}{c}{Standard Hawkes} \\ \cline{1-3} \cline{5-7}
 & $i$ & $j$ &  &  & $i$ & $j$ \\ \cline{1-3} \cline{5-7} 
$i$ & $1.70(0.18)$ & $\textbf{0.21}(0.14)$  & 
    & $i$ & $2.39(0.18)$ & $2.39(0.19)$  \\ \cline{1-3} \cline{5-7} 
$j$ & $\textbf{0.22}(0.15)$ & $1.66(0.18)$ &  
    & $j$ & $2.40(0.19)$ & $2.39(0.18)$ \\ \cline{1-3} \cline{5-7} 
\end{tabular}
}
\vspace{-0.1in}
\caption{Comparison between our model and standard MHP model in full connection task.
Rows are source nodes, columns are target nodes.
Each cell shows the mean absolute error with standard deviation.
Unit in spikes/sec.
}
\vspace{-0.2in}
\label{tab:full_model_results}
\end{table}


\subsubsection{Multivariate Hawkes model}
It is natural to extend our bivariate regression-type method to a multivariate regression-type model.
The coupling effect in multivariate processes can be regarded as a form of graph structure recovery in graphical models, where each point process is considered a node.
From this perspective, e.g., \citep[sec. 19.4.4]{meinshausen2006high, murphy2012machine}, multivariate regression extends the bivariate case by studying \textit{pairwise}
conditional relations for all possible pairs.
More specifically, given a pair of random variables $X,Y$, let $Z$ represent the totality of all other random variables excluding $X,Y$. The multivariate regression infers if a \textit{bivariate} relation $X\perp Y|Z$ holds, also known as the \textit{global Markov property} \citep{koller2009probabilistic}.
A similar concept in standard MHP can be found in \citep{eichler2017graphical}.
In our MHP setting, this is equivalent to estimating the impact functions between $N_i$ and $N_j$ given the observations of all other processes and so that their effect enters as the dynamic background. Notice that the standard MHP cannot model this extension because even if the baseline intensity of each point process is constant, the totality of random effect from all other nodes excluding two nodes will not necessarily give a constant baseline to the nodes under consideration.
Consider the intensity function in the multivariate point process,
\begin{equation}
\begin{aligned}
\lambda_{j}(t)
&= \alpha_j
+ \int_0^t h_{i\to j}(t-s) dN_{i}(s) \\
&\underbrace{+ f_{j}(t) 
+ \sum_{r\neq i, j} \int_0^t h_{r\to j}(t-s) dN_{r}(s) 
}_{\tilde{f}_j(t) }
\end{aligned}
\label{eq:multivariate}
\end{equation}
where $f_j$ together with input from other processes are treated as a new background $\tilde{f}_j(t)$. This perspective exactly reduces the MHP to model \eqref{eq:regression_intensity}.

\begin{table}[H]
\centering
{\footnotesize
\begin{tabular}{c|cc}
\hline
 & Bias (std) [spikes/sec] & RMSE (std) [spikes/sec] \\ \hline
Hawkes & 1.52 (0.040)  & 1.54 (0.41) \\
Ours & \textbf{0.028} (0.040)  & \textbf{0.25} (0.33) \\ \hline
\end{tabular}
}
\caption{Comparison between the performance between the standard Hawkes model and our model. }
\label{tab:multivariate}
\end{table}
The performance of the model is evaluated using simulation dataset, which involves 6 processes and all processes are driven by fluctuating background. The coupling effects between nodes can be positive, negative or zero.
Table \ref{tab:multivariate} shows that our method outperforms the standard Hawkes model in multivariate processes scenario.
Details of the experiment is in Appendix \ref{appendix:multivariate_regression_sim}.


\subsubsection{Other simulation scenarios} \label{subsec:other_simulations}
Other properties of the model and empirical verifications are briefly summarized in this section due to the page limit.


\textbf{Varying-timescale background.} See Appendix \ref{appendix:varying_sigma_I}.
We violate the settings in section \ref{subsec:background_kernel_smoothing} by relaxing the fixed background timescale $\sigma_I$ in \eqref{eq:linear_cox_main} to randomly changing timescale to test the robustness of the model. 

\textbf{Fast-changing background.} See Appendix \ref{appendix:small_sigma_I}.
In extreme cases, the background activity $f_{i}$ can have fast-changing activities.
In this situation, the conditional inference-based method will be limited by its formalization of the null hypothesis, which implicitly assumes the timescale of the coupling effect is smaller than the timescale of the background. We show our model is still able to accurately estimate the cross-impact effect while the conditional inference-based method fails. 

\textbf{Asymptotic Normality.} See Appendix \ref{appendix:normality}.
Similar to profile likelihood, the approximate normality of the estimator is observed in simulations. The property may be convenient for model inference, details are also in Appendix \ref{appendix:hypothesis_testing}. 

\textbf{Selection of impact function length.} See Appendix \ref{appendix:unmatch_filter_length}. In practice, the timescale of the interaction effect is typically unknown. When users are not confident with the prior knowledge of the timescale of the coupling effect, our methods can be adapted to use a shorter impact function or non-parametric fitting first, as shown in Appendix \ref{appendix:nonparametric}.




\subsection{Neuropixels data }

Spiking neural activities usually come with non-stationary background signals due to external stimuli or inter-area interactions.
With the recent advance in high-density electrophysiological recording technology, such as Neuropixels, hundreds of neurons from multiple brain regions can be recorded simultaneously. This offers opportunities to further investigate the interactions between brain areas \citep{siegle2021survey, chen2022population}.
However, the point-to-point coupling effect on fine timescale across regions is not well studied.

Here, we apply our method to the hierarchical mouse visual system across 5 brain areas: V1, LM, RL, AL, and AM in ascending order with V1 as the primary visual cortex processing simple visual features, and AM as the high-order cortex handling sophisticated signals \citep{harris2019hierarchical, siegle2021survey} (Figure \ref{fig:neuropixels}).
We aim to fit the coupling effect across brain regions and discover the excitatory or inhibitory interactions on a fine timescale.
Details are in Appendix \ref{appendix:jitter_app}.

Figure \ref{fig:neural_demo}A demonstrates the averaged activities of 3 brain regions with large correlation as a clue for potential background fluctuation. Results are very similar to the simulation in Figure \ref{fig:sim_demo},  CCG in Figure \ref{fig:neural_demo}B shows some negative but not statistically significant effects.
Our method is more sensitive in detecting the effect between 0 and 50 ms lag.
In contrast, due to the background artifacts, the standard Hawkes model detects non-significant or slightly positive coupling effect.

\begin{figure}[H]
\centering
\begin{subfigure}{0.48\textwidth}
\centering
\includegraphics[width=0.99\linewidth]{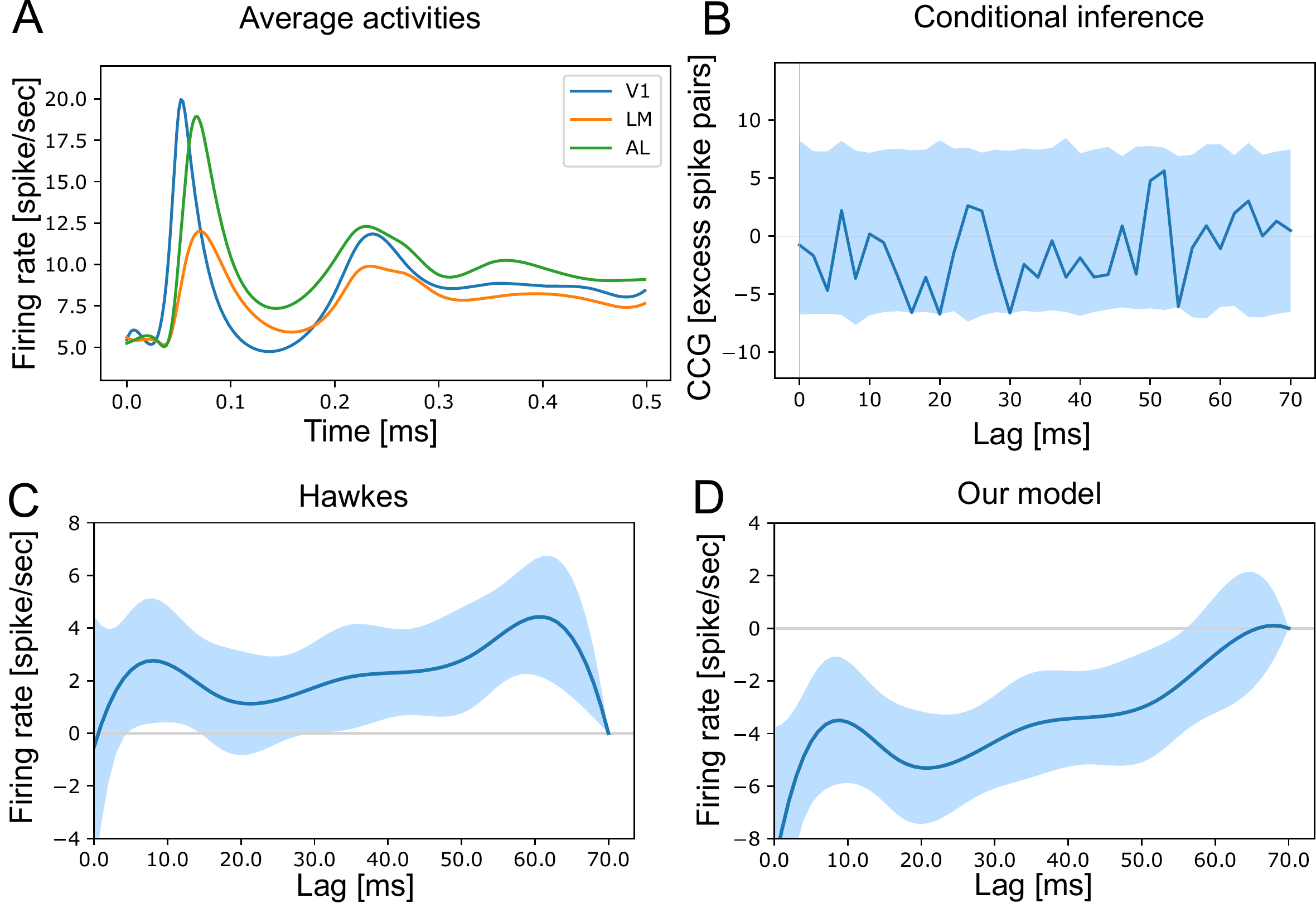}
\end{subfigure}
\vspace{-0.12in}
\caption{Neuropixels data. 
\textbf{A}: Activities of three brain areas showing correlated backgrounds.
B,C,D are results of a pair of neurons.
\textbf{B}: CCG.
\textbf{C}: standard MHP.
\textbf{D}: Our method.
}
\label{fig:neural_demo}
\vspace{-0.2in}
\end{figure}

\begin{figure}[ht]
\centering
\hspace*{-0.1in}
\includegraphics[width=1.05\linewidth]{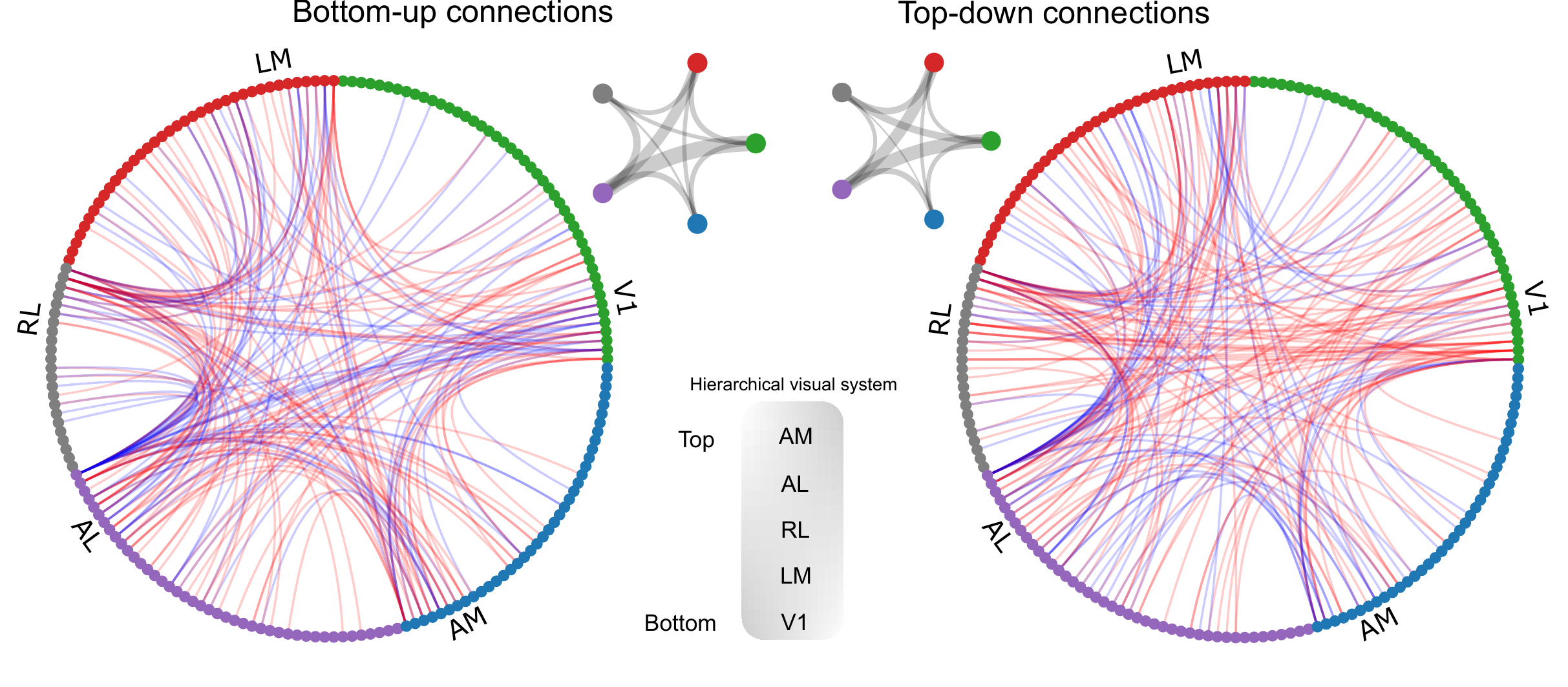}
\vspace{-0.2in}
\caption{Network of coupling neurons in mouse visual system.
The analysis involves 5 brain areas, V1, LM, RL, AL, and AM in hierarchically increasing order, where V1 has the lowest order \citep{siegle2021survey}.
Excitatory (positive) and inhibitory (negative) impact functions are shown in red and blue edges.
For better visualization, it only shows 20\% randomly selected edges.
The coupling filter, a directed edge, connecting a lower-order region to a higher-order region, for example from V1 to RL, is categorized into the bottom-up graph on the left; the graph on the right shows the top-down connections \citep{siegle2021survey,harris2019hierarchical}.
The small graphs at the corner count the total number of edges between areas.
}
\vspace{-0.15in}
\label{fig:neuropixels}
\end{figure}

Figure \ref{fig:neuropixels} shows the discovered neuronal network of 190 neurons.
Multiple significant impact functions are selected with Bonferroni correction at level 0.01. 
766 directed edges are split into bottom-up connections and top-down connections \citep{siegle2021survey,harris2019hierarchical}.
The impact function is fitted using 50 ms square window determined by exploring CCG and non-parametric fitting (see examples in Appendix  \ref{appendix:jitter_app}).
Our main findings using MHP extension are: (a) Most edges concentrate at a few neurons. (b)
The active senders or receivers are consistent across top-down and bottom-up networks. The real data has no ground truth so we cannot directly evaluate the performance of this multivariate extension. However, we do note that the findings directly corroborate previous neuroscience studies \citep{harris2019hierarchical,glickfeld2017higher} based on anatomical analysis, whereas our findings are entirely data-driven. The findings are also complementary to \citep{jia2020multi,siegle2021survey} using traditional CCG method (in section 2, our method outperforms CCG in both computation and performance).
\vspace{-0.1in}


\begin{figure}[H]
\centering
\includegraphics[width=0.8\linewidth]{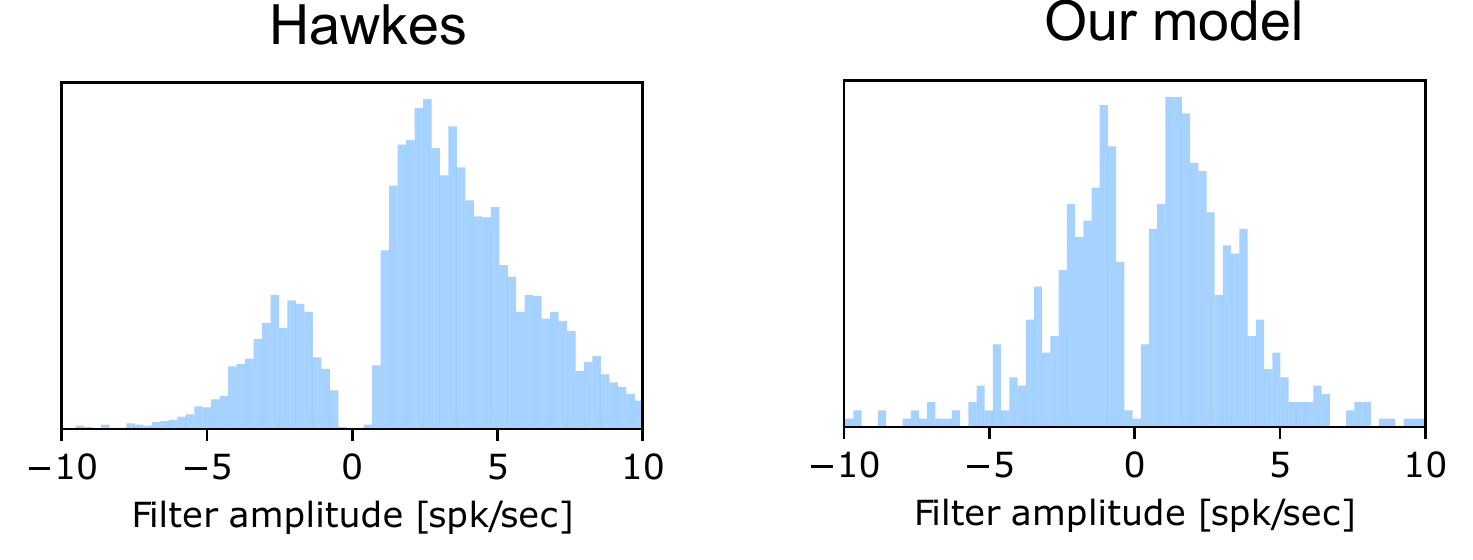}
\vspace{-0.1in}
\caption{A comparison of impact function histograms between the standard Hawkes model and our model.
}
\vspace{-0.12in}
\label{fig:h_histogram}
\end{figure}
Figure \ref{fig:h_histogram} compares the distributions of the impact functions between the standard Hawkes model and our model.
Because of the background artifacts shown in Figure \ref{fig:neural_demo}, it is suspected that the standard Hawkes model may falsely detect more positive relations.
Goodness-of-fit and more details of the experiment can be found in Appendix \ref{appendix:jitter_app}. As shown, all the above discovery is greatly facilitated by our method.

\section{Conclusion }
\vspace{-0.1in}
We report and analyze the error of MLE from MHP in short-term temporal dependency detection due to heterogeneous background, which is common but largely overlooked. We developed a flexible, robust, and computationally efficient model to address this problem in an attempt to generalize the use case for MHP in practice.
Finally, we applied the new tool to the neuroscience dataset and discovered a neuronal network across visual cortices in the mouse visual system.




\begin{acknowledgements} 
    Yu Chen was supported by NIMH grant RO1 MH064537 when the initial part of the work was done at Carnegie Mellon University.
\end{acknowledgements}

\bibliography{Bibliography}

\newpage
\title{Short-term Temporal Dependency Detection under Heterogeneous Event Dynamic with Hawkes Processes \\
(Supplementary Material)}

%
%



\onecolumn 
\maketitle

\appendix

In section \ref{appendix:jitter_optimization_algo}, we provide details of optimization algorithms, and condition inference based cross-correlogram. 
Proofs of the propositions in the main text can be found in section \ref{appendix:proofs}.
Section \ref{appendix:simulatio_empirical} lists all simulation scenarios for the model verification.
Section \ref{appendix:use_cases} presents some use cases and versatility of our new tool.
Section \ref{appendix:theoretical_derivations} derives the approximated analytical formulae of the model's properties in a special situation.
Finally, more details about the neuroscience experiments are in section \ref{appendix:jitter_app}.

\section{Algorithms} \label{appendix:jitter_optimization_algo}

\subsection{Updating rules}
\begin{gather*}
\min_{h_{i\to j}, \beta_j, \beta_w, \sigma_w } 
\left\{
-\sum_{s \in N_j } \log \tilde{\lambda}_{j}(s)
+ \int_0^T \tilde{\lambda}_{j}(s) \mathrm{d} s  \right\} \\
\tilde{\lambda}_{j}(t) := 
\left( \beta_j
+ \beta_w \; \overline{\textbf{s}_{i}}(t) 
+ \int_0^t h_{i\to j}(t-\tau) \mathrm{d} N_{i}(\tau) \right)_+ \\
\overline{\textbf{s}_{i}}(t)
= \int_0^T W(t-s) \mathrm{d} N_i(s)
= \sum_{t_m \in N_i} W(t - t_m)
\end{gather*}
Let $\phi_w, \phi_h$ be the bases defined as,
\begin{align} \label{eq:regression_bases}
\phi_w(t) := \int W(t-s) \mathrm{d} N_i(s),\qquad
\phi_h(t) := \int h_{i\to j}(t-s) \mathrm{d} N_i(s)
\end{align}
The intensity can be rewritten in a linear form,
\begin{equation}\label{eq:intensity_linear}
\tilde{\lambda}_{j}(t) =
\beta_j \cdot 1
+ \beta_w \phi_w(t)
+ \beta_h \phi_h(t)
= \Psi(t) \boldsymbol\beta
\end{equation}
$\Psi(t)$ represents all bases, $\boldsymbol\beta$ is a vector of the coefficients.
If the impact function is fitted using non-parametric method, such as general additive model or splines
\begin{equation*}
h_{i\to j}(s) = \beta_{h,1} B_1(s) + ...+ \beta_{h,k} B_k(s)
\end{equation*}
where $B_1,...,B_k$ are spline bases for the impact function.
Define
\begin{equation*}
\phi_{h,1}(t) := \int B_1(t-s) \mathrm{d} N_i(s),...,\;
\phi_{h,k}(t) := \int B_k(t-s) \mathrm{d} N_i(s)
\end{equation*}
The intensity still maintains the linear form:
\begin{equation*}
\tilde{\lambda}_{j}(t) =
\beta_j \cdot 1
+ \beta_w \phi_w(t)
+ \beta_{h,1} \phi_{h,1}(t)
+...+ \beta_{h,k} \phi_{h,k}(t)
\end{equation*}

The target equation of the model can be optimized using gradient descent.
For a fixed $\sigma_w$, the target is convex and the optimization is efficient using Newton's method.
The first-order and second-order derivatives of the target equations are,
\begin{align*}
\frac{\partial \tilde{\ell}}{\partial \boldsymbol\beta } 
=& -\int_0^T \frac{\Psi(s)}{\tilde\lambda_j(s)}\mathrm{d}N_j(s)
+ \int_0^T \Psi(s) \mathrm{d}s \\
\frac{\partial^2 \tilde{\ell}}{\partial \boldsymbol\beta \partial \boldsymbol\beta^T}
=& \int_0^T \frac{\Psi(s) \Psi(s)^T}{\tilde\lambda_j(s)^2} \mathrm{d}N_j(s).
\end{align*}

The update of $\sigma_w$ can be done as a separate step using a gradient as follows.
If $W$ is a Gaussian kernel function, then
\begin{equation}
\begin{aligned}
&\frac{\partial \tilde{\ell}}{\partial \sigma_w}
= -\sum_{t_n \in N_j} \frac{\beta_w }{\tilde \lambda(t_n)}
\frac{\partial }{\partial \sigma_w} 
 \overline{\textbf{s}_{i}}(t_n) 
 + \beta_w \frac{\partial }{\partial \sigma_w} 
    \int_0^T \overline{\textbf{s}_{i}}(u)  \mathrm{d}u \\
=& -\sum_{t_n \in N_j} \frac{\beta_w }{\tilde \lambda(t_n)}
\sum_{t_m\in N_i} \frac{\partial }{\partial \sigma_w}  W(t_n - t_m) 
+ \beta_w \sum_{t_m\in N_i} \frac{\partial }{\partial \sigma_w} \int_0^T W(u-t_m)\mathrm{d}u \\
=& -\sum_{t_n \in N_j} \frac{\beta_w }{\tilde \lambda(t_n)}
\sum_{t_m\in N_i} \frac{\partial }{\partial \sigma_w}  W(t_n - t_m)
    + \beta_w \sum_{t_m\in N_i} 
    \left( W(T-t_m) \frac{\partial }{\partial \sigma_w}  W(T-t_m)
    - W(-t_m) \frac{\partial }{\partial \sigma_w} W(-t_m) \right)
\end{aligned}
\end{equation}
where 
\begin{equation}
\frac{\partial }{\partial \sigma_w}  W(x)
= \left(
-\frac{1}{\sqrt{2\pi}\sigma_w^2 }  
+ \frac{x^2}{\sqrt{2\pi} \sigma_w^4}
\right)  \exp\left\{ -\frac{x^2}{2 \sigma_w^2} \right\}
\end{equation}

$\sigma_w$ can be optimized using grid-search during warmup.
In Appendix \ref{appendix:bayesian}, we discuss the sampling-based method,
which shows incorporating the uncertainty of $\sigma_w$ or fixing it at the optimal does not make a significant difference.

\subsection{Integral trick}
One computational advantage of the proposed model in main \eqref{eq:intensity_linear} is that the integral $\int \Psi(s)\mathrm{d}s$ can be calculated in the closed form if the bases are designed carefully.
In contrast, models with intensity in logarithmic scale
$\log \lambda(t) = \Psi(t)\boldsymbol\beta$ do not enjoy this computation convenience.
For example, the derivative of the modified negative log-likelihood function becomes,
\begin{align*}
\frac{\partial \tilde{\ell}}{\partial \boldsymbol\beta } 
=& -\int_0^T \Psi(s)\mathrm{d}N_j(s)
+ \int_0^T \Psi(s) e^{\Psi(s) \boldsymbol\beta } \mathrm{d}s.
\end{align*}
Usually, it is not tractable to calculate the integral in the second term,
so it is approximated by discretizing the continuous functions.
This is the reason our model does not involve discretization or require specifying the time resolution.
Another benefit of using a continuous-time model is that the number of data points is small, which is proportional to the number of spikes instead of the number of time bins. For example, if the bin width is 1 ms, then for one 1-second long trial, it needs to store 1000 data points. If the trial has 20 spikes, the continuous-time model only needs to keep 20 data points. The memory space is 50 times smaller.

If the regression bases have form Eq \eqref{eq:regression_bases} with kernel, then 
\begin{equation*}
\int_0^T \Psi(t) \mathrm{d}t
= \int_0^T \int_0^T K(t-s) N_i(\mathrm{d}s)
= N_i(T) \int_{\mathrm{R}} K(s)\mathrm{d}s.
\end{equation*}
If $K$ is a Normal window function or a square window function, the above integral is simple. The boundary effect can be removed in the integral by only considering a few time points close to 0 or T.
Next, we show how to calculate such an integral if $K$ is B-spline, which is widely used in non-parametric curve fitting. An example can be found in Appendix \ref{appendix:nonparametric}.

The B-splines are defined using Cox-de Boor recursion equations. $t_i$ are knots (with repeated padding). $p$ is the degree of the spline polynomial.
When $p=3$, these are the cubic splines.
\begin{align*}
B_{i,0}(x) =& \mathbb{I}_{[t_i, t_{i+1})}(x) \\
B_{i,p}(x) =& \frac{x - t_i}{t_{i+p} - t_i} B_{i,p-1}(x) 
         + \frac{t_{i+p+1} - x}{t_{i+p+1} - t_{i+1}} B_{i+1,p-1}.
\end{align*}
Knot padding is important to create proper splines. If $p=3$ and the distinct knot locations are $(0,1,2)$, the input knots should be $(0,0,0,0,1,2,2,2,2)$. The knots need extra $p$ repeated knots of the two ends. If there are $K$ distinct knots, then there are $K+2p$ input knots. The total number of basis is $K+p-1$.

\begin{lemma} \label{lemma:bspline_integral}
For the B-spline curve defined above, the integral of the curve has closed-form as follows,
\begin{align}
\int_{-\infty}^{\infty} B_{i,p}(s) \mathrm{d}s
= \frac{t_{i+p+1} - t_{i}}{p+1}.
\end{align}
\end{lemma}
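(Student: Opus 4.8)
The plan is to prove the formula by induction on the spline degree $p$. For the base case $p=0$, the definition $B_{i,0}=\mathbb{I}_{[t_i,t_{i+1})}$ gives $\int_{-\infty}^{\infty}B_{i,0}(s)\,ds=t_{i+1}-t_i$, which matches $\frac{t_{i+p+1}-t_i}{p+1}$ at $p=0$. Writing $I_{i,p}:=\int_{\mathbb{R}}B_{i,p}(s)\,ds$ and $\mu_{j,p}:=\int_{\mathbb{R}}s\,B_{j,p}(s)\,ds$ for the first moment, I would assume the formula at degree $p-1$, i.e.\ $I_{j,p-1}=\frac{t_{j+p}-t_j}{p}$ for all $j$, and deduce it at degree $p$. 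Throughout I adopt the usual convention that any term with a vanishing denominator (a repeated knot) is set to zero, which also makes the target identity trivially correct when $t_{i+p+1}=t_i$; the mild regularity needed below (for $p\ge 1$, $B_{i,p}$ is continuous with compact support on $[t_i,t_{i+p+1}]$) is standard.

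First I would integrate the Cox--de Boor recursion directly over $\mathbb{R}$. Since the two weights are linear in $x$, integrating
\[
B_{i,p}=\frac{x-t_i}{t_{i+p}-t_i}B_{i,p-1}+\frac{t_{i+p+1}-x}{t_{i+p+1}-t_{i+1}}B_{i+1,p-1}
\]
produces the relation
\[
I_{i,p}=\frac{\mu_{i,p-1}-t_iI_{i,p-1}}{t_{i+p}-t_i}+\frac{t_{i+p+1}I_{i+1,p-1}-\mu_{i+1,p-1}}{t_{i+p+1}-t_{i+1}}\qquad(\ast).
\]
The main obstacle is already visible here: the linear weights force the unknown first moments $\mu_{i,p-1},\mu_{i+1,p-1}$ into the expression, so $(\ast)$ by itself does not close the induction, and the moment hierarchy does not terminate if one tries to recurse on the $\mu$'s directly.

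To cancel those moments I would derive a second, independent expression for $I_{i,p}$. Because $B_{i,p}$ vanishes at the endpoints of its support for $p\ge1$, integration by parts gives $I_{i,p}=-\int_{\mathbb{R}}x\,B_{i,p}'(x)\,dx$; substituting the standard B-spline derivative identity $B_{i,p}'=\frac{p}{t_{i+p}-t_i}B_{i,p-1}-\frac{p}{t_{i+p+1}-t_{i+1}}B_{i+1,p-1}$ (itself a short consequence of Cox--de Boor, provable by the same induction) yields
\[
I_{i,p}=-\frac{p\,\mu_{i,p-1}}{t_{i+p}-t_i}+\frac{p\,\mu_{i+1,p-1}}{t_{i+p+1}-t_{i+1}}\qquad(\ast\ast).
\]
The crux is that the moment combination appearing in $(\ast\ast)$ is exactly $-p$ times the one appearing in $(\ast)$: relation $(\ast\ast)$ says that moment part equals $-I_{i,p}/p$, and feeding this into $(\ast)$ eliminates both $\mu$'s at once, leaving $\tfrac{p+1}{p}I_{i,p}=-\frac{t_iI_{i,p-1}}{t_{i+p}-t_i}+\frac{t_{i+p+1}I_{i+1,p-1}}{t_{i+p+1}-t_{i+1}}$. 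Applying the inductive hypotheses $I_{i,p-1}=\frac{t_{i+p}-t_i}{p}$ and $I_{i+1,p-1}=\frac{t_{i+p+1}-t_{i+1}}{p}$ collapses the right-hand side to $\frac{t_{i+p+1}-t_i}{p}$, giving $I_{i,p}=\frac{t_{i+p+1}-t_i}{p+1}$ and closing the induction. I expect the only delicate bookkeeping to be the degenerate-knot and boundary conventions and the verification of the derivative identity, both routine; the genuine idea is simply that pairing the integrated recursion $(\ast)$ with the integrated derivative formula $(\ast\ast)$ is precisely what removes the first moments.
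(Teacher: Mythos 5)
Your moment-elimination scheme is elegant and the algebra does close in the generic case, but there is a genuine gap at exactly the basis functions the paper's construction produces. The regularity you dismiss as standard --- ``for $p\ge 1$, $B_{i,p}$ is continuous'' --- is false whenever one of the defining knots has multiplicity $p+1$, and the paper's knot padding creates precisely such functions at both ends of the knot vector: with $p=3$ and input knots $(0,0,0,0,1,2,2,2,2)$, the last basis function has local knots $(1,2,2,2,2)$ and equals $(x-1)^3$ on $[1,2)$, jumping from $1$ to $0$ at $x=2$. For such a function your step $(\ast\ast)$ --- integration by parts against the Cox--de Boor (almost-everywhere) derivative --- is invalid, because the distributional derivative carries an extra $-\delta_{2}$ from the jump, and the term you discard is not zero. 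Concretely, $(\ast\ast)$ with the zero-denominator convention asserts
\[
\int_1^2 (x-1)^3\,\mathrm{d}s \;=\; -3\int_1^2 x\,(x-1)^2\,\mathrm{d}x ,
\qquad\text{i.e.}\qquad \tfrac14 = -\tfrac74 ,
\]
and the missing jump contribution $-\int x\,(-\delta_2(x))\,\mathrm{d}x = 2$ is exactly the discrepancy. Feeding the faulty $(\ast\ast)$ into your induction gives, for this index, $\tfrac{p+1}{p}I_{i,p} = -\tfrac{t_i I_{i,p-1}}{t_{i+p}-t_i}$, hence $I_{i,3} = -\tfrac{t_i}{4} = -\tfrac14$ instead of $\tfrac{t_{i+4}-t_i}{4}=\tfrac14$ --- wrong, and even negative. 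The left-end padded function similarly yields $\tfrac{t_{i+p+1}}{p+1}$, which agrees with the truth only because that repeated knot happens to sit at $0$ in the paper's example. So the degenerate cases are not ``routine bookkeeping'': the inductive step genuinely fails there.

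The repair is fortunately cheap. Since $B_{i,p}$ is $C^{p-m}$ at a knot of multiplicity $m$, a jump requires multiplicity $p+1$ among the $p+2$ local knots, i.e.\ either $t_i=\cdots=t_{i+p}$ or $t_{i+1}=\cdots=t_{i+p+1}$; in those two cases $B_{i,p}$ is the explicit power $\bigl(\frac{t_{i+p+1}-x}{t_{i+p+1}-t_i}\bigr)^p$ or $\bigl(\frac{x-t_i}{t_{i+p+1}-t_i}\bigr)^p$ on $[t_i,t_{i+p+1})$, whose integral is $\frac{t_{i+p+1}-t_i}{p+1}$ by direct computation. Handle these indices separately and run your induction only over the remaining (continuous) indices, where $(\ast)$, $(\ast\ast)$, and the elimination are all valid; alternatively, keep all indices but track the jump terms in the distributional derivative. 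With that patch your argument is correct, and it is a genuinely different route from the paper's proof, which never integrates by parts: the paper telescopes the derivative identity to get $\frac{\mathrm{d}}{\mathrm{d}x}\sum_{j=i}^{i+p}B_{j,p+1}(x) = \frac{p+1}{t_{i+p+1}-t_i}B_{i,p}(x)$, integrates it to express $\int_{-\infty}^{x}B_{i,p}(s)\,\mathrm{d}s$ through degree-$(p+1)$ splines, and finishes by showing the partition-of-unity value $\sum_{j=i}^{i+p}B_{j,p+1}(t_{i+p+1})=1$.
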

\begin{proof}
The support of each basis spans over $p+1$ knot-intervals (including the padded knots on the ends),
\begin{align*}
\text{supp}(B_{i,p}) =& [t_i, t_{i+p+1})
\end{align*}
\begin{align*}
\frac{\mathrm{d}}{\mathrm{d}x} B_{i,p}(x) 
= \frac{p}{t_{i+p} - t_i} B_{i,p-1}(x) 
   - \frac{p}{t_{i+p+1} - t_{i+1}} B_{i+1,p-1}(x).
\end{align*}
The support of the derivative is almost the same as the basis except for a few 0 derivative points. 
\begin{align*}
\text{supp}(\frac{\mathrm{d}}{\mathrm{d}x} B_{i,p}) \subseteq & [t_i, t_{i+p+1})
\end{align*}
We reform the derivative properties to get the integral \citep{bhatti2006calculation}.
\begin{equation*}
\frac{\mathrm{d}}{\mathrm{d}x} \sum_{i=0}^{\infty} c_i B_{i,p+1}(x)
= \sum_{i=0}^{\infty} (p+1) \frac{c_i - c_{i-1}}{t_{i+p+1} - t_{i}}  B_{i,p}(x)
\end{equation*}
$c_i$ are some arbitrary coefficients. Next we set $c_0,..., c_{i-1} = 0$, $c_i, c_{i+1},... = 1$.
\begin{equation*}
\frac{\mathrm{d}}{\mathrm{d}x} \sum_{j=i}^{\infty} c_j B_{j,p+1}(x)
= \frac{\mathrm{d}}{\mathrm{d}x} \sum_{j=i}^{i+p} B_{j,p+1}(x)
= \frac{p+1}{t_{i+p+1} - t_{i}} B_{i,p}(x)
\end{equation*}
The first equation simplifies the sum due to the supports of bases.
Then take the integral on both side,
\begin{align*}
\int_{-\infty}^x B_{i,p}(s) \mathrm{d}s
= \int_{t_i}^x B_{i,p}(s) \mathrm{d}s
= \frac{t_{i+p+1} - t_{i}}{p+1} \sum_{j=i}^{\infty} B_{j,p+1}(s)
= \frac{t_{i+p+1} - t_{i}}{p+1} \sum_{j=i}^{i+p} B_{j,p+1}(x)
\end{align*}
The area under the curve of a basis is,
\begin{align*}
\int_{-\infty}^{\infty} B_{i,p}(s) \mathrm{d}s
= \int_{t_i}^{t_{i+p+1}} B_{i,p}(s) \mathrm{d}s
= \frac{t_{i+p+1} - t_{i}}{p+1} \sum_{j=i}^{i+p} B_{j,p+1}(t_{i+p+1})
\end{align*}
Consider the summation term,
\begin{align*}
\sum_{j=i}^{i+p} B_{j,p+1}(t_{i+p+1})
=& B_{i,p+1}(t_{i+p+1}) + B_{i+1,p+1}(t_{i+p+1}) + ... + B_{i+p,p+1}(t_{i+p+1}) \\
=& \Big(\frac{t_{i+p+1} - t_i}{t_{i+p+1} - t_i} B_{i,p}(t_{i+p+1}) 
        +\frac{t_{i+p+2} - t_{i+p+1}}{t_{i+p+2} - t_{i+1}} B_{i+1,p}(t_{i+p+1}) \Big) \\
+& \Big(\frac{t_{i+p+1} - t_{i+1}}{t_{i+p+2} - t_{i+1}} B_{i+1,p}(t_{i+p+1}) 
        +\frac{t_{i+p+3} - t_{i+p+1}}{t_{i+p+3} - t_{i+2}} B_{i+2,p}(t_{i+p+1}) \Big) + ... \\
+& \Big(\frac{t_{i+p+1} - t_{i+p}}{t_{i+2p+1} - t_{i+p}} B_{i+p,p}(t_{i+p+1}) 
        +\frac{t_{i+2p+2} - t_{i+p+1}}{t_{i+2p+2} - t_{i+p+1}} B_{i+p+1,p}(t_{i+p+1}) \Big) \\
=& B_{i,p}(t_{i+p+1}) + B_{i+1,p}(t_{i+p+1}) +... + B_{i+p+1,p}(t_{i+p+1}) \\
=& B_{i,p-1}(t_{i+p+1}) + B_{i+1,p}(t_{i+p+1}) +...+ B_{i+p+2,p-1}(t_{i+p+1}) \\
=& B_{i,0}(t_{i+p+1}) + B_{i+1,0}(t_{i+p+1}) +... + B_{i+2p+1,0}(t_{i+p+1}) = 1 \\
\end{align*}
So the conclusion holds.
\end{proof}

Another popular example with closed-form integral is the exponential coupling function, which also enjoys the integral trick.
\begin{equation}
h_{i\to j}(\tau) = e^{-\gamma_{i\to j}(\tau)} \mathbb{I}(\tau\geq 0)
\end{equation}
The modified negative log-likelihood is,
\begin{equation}
\begin{aligned}
\tilde{\ell} =& - \sum_{t_n\in N_j} \log \left( \beta_j 
    + \overline{\textbf{s}_{i}}(t_n)
    + \alpha_{i\to j} \sum_{t_m\in N_i, t_m<t_n} 
        e^{-\gamma_{i\to j}(t_n-t_m)} \right) \\
&+ \beta_j T
+ \int_0^T \overline{\textbf{s}_{i}}(s) \mathrm{d}s
+ \frac{\alpha_{i\to j} }{ \gamma_{i\to j} }
    \sum_{t_m \in N_i} \left(1- e^{-\gamma_{i\to j}(T-t_m)} \right)
\end{aligned}
\end{equation}
where $\alpha_{i\to j}$ is the coefficient of the exponential basis.
The derivatives of the target equation over $\beta_j, \beta_w, \alpha_{i\to j}$ are similar to other linear models.
The derivative over the timescale parameter $\gamma_{i\to j}$ is
\begin{equation}
\begin{aligned}
\frac{\partial \tilde{\ell} }{ \partial \gamma_{i\to j} }
=& \sum_{t_n\in N_j} \frac{ \alpha_{i\to j} }{ \tilde \lambda (t_n) }
    \sum_{t_m\in N_i, t_m<t_n} (t_n-t_m) e^{-\gamma_{i\to j}(t_n-t_m)} \\
& - \frac{\alpha_{i\to j} }{ \gamma_{i\to j}^2 } 
    \sum_{t_m\in N_i}(1-e^{-\gamma_{i\to j}(T-t_m)})
+ \frac{\alpha_{i\to j} }{ \gamma_{i\to j} } 
    \sum_{t_m\in N_i} (T-t_m) e^{-\gamma_{i\to j}(T-t_m)}
\end{aligned}
\end{equation}

\subsection{Details of conditional inference based cross-correlogram (CCG) } \label{appendix:ccg}

\begin{figure}[ht]
\centering
\begin{subfigure}{0.6\textwidth}
\centering
\includegraphics[width=0.99\linewidth]{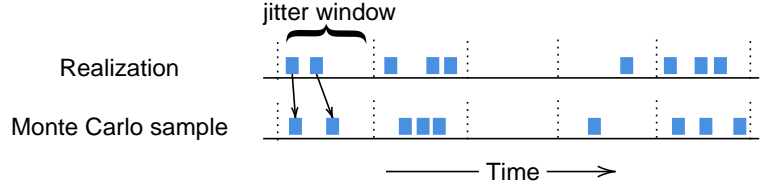}
\end{subfigure}
\caption{Construction of Monte Carlo samples from the null in condition inference based CCG. Blue dots are timestamps.
}
\end{figure}

The calculation of CCG needs to discretize the point processes first. The time bin is usually set with a small size so that each bin contains at most one point.
Consider two time-binned processes $X_i$ and $X_j$ with respect to the count processes $N_i$ and $N_j$.
The correlation (or cross-correlation) of timing at a certain lag $\tau \in \mathbb{N}$ ($X_i$ leads $X_j$) with fluctuating background activity is assessed in the following procedure:

\begin{enumerate}
\item Calculate the test statistic CCG at $\tau$,
\begin{equation}
\mathrm{CCG}(\tau) 
= \sum_n X_i(n-\tau) X_j(n)
\end{equation}

\item Divide the timeline into equal-width jitter windows $\Delta$ as shown in the figure above.

\item One Monte Carlo sample from the null distribution is generated by uniformly allocating the time points within each jitter window, as shown in the figure.
Such a process can be applied to either $N_i$ or $N_j$, or both. 
Empirically, these options do not make a big difference.
Then time-bin the "jittered" samples $\tilde{X}_i$, $\tilde{X}_j$.

\item Calculate the CCG of the Monte Carlo sample,
\begin{equation}
\mathrm{CCG}^{\mathrm{MC}}(\tau) 
= \sum_n \tilde{X}_i(n-\tau) \tilde{X}_j(n)
\end{equation}

\item Repeat step 3 and 4 multiple times $N_{\mathrm{MC}}$ to acquire the null distribution of CCG.
The p-value of the test is
\begin{equation}
\mathrm{p-value}
= \frac{N_{\tau} + 1}{N_{\mathrm{MC}} + 1}
\end{equation}
where $N_{\tau}$ is the number of Monte Carlo samples
$\mathrm{CCG}(\tau) < \mathrm{CCG}^{\mathrm{MC}}(\tau)$ (or the other way if $\mathrm{CCG}(\tau)$ is on the left tail).
Similarly, the acceptance band can be constructed from the null distribution.

\end{enumerate}

$\Delta$ is chosen with prior knowledge of the data, which is roughly the timescale of the background activity, such that jittering the samples within window $\Delta$ maintains the intensity of $N_i, N_j$, but it can "break" the fine time structures without the association of timing.
As the time points are jittered with only a small amount, the null samples are assumed to have the same intensity, so the method smartly bypasses the estimation of the dynamic background.
For better visualization, the mean of the null is usually subtracted from the test statistic so the CCG curve and the acceptance band can center around zero such as main Figure \ref{fig:sim_demo} and \ref{fig:neural_demo}.
In main Figure \ref{fig:sim_demo} and \ref{fig:neural_demo}, the jitter window is $\Delta=120$ ms, the discretization time bin width is 2 ms, the null distribution uses 1000 Monte Carlo samples.

\section{Proofs} \label{appendix:proofs}

We list the regularity conditions needed for our statements here. These technical conditions first include follows from Assumptions $A,B,C$ in \cite{ogata1978asymptotic}, which makes sure the consistency of $\hat{\boldsymbol{\theta}}$ converges to $\boldsymbol\theta_{KL}= \operatorname*{argmin}_{\boldsymbol\theta }\Lambda(\boldsymbol\theta):=\mathbb E\ell({\boldsymbol\theta })$ for the misspecified model. Then, we make the additional assumption:


\begin{assumption}\label{ass}
$\Lambda(\cdot)$ is $\mu$-strongly convex and has $L$-Lipschitz gradient.
\end{assumption}

With this assumption, we can guarantee that for $\boldsymbol\theta_1$ and $\boldsymbol\theta_2$, we have 
\begin{equation*}
    \mu\|\boldsymbol\theta_1-\boldsymbol\theta_2\|\leq\|\nabla\Lambda(\boldsymbol\theta_1)-\nabla\Lambda(\boldsymbol\theta_2)\|\leq L\|\boldsymbol\theta_1-\boldsymbol\theta_2\|
\end{equation*}
which gives $\|\nabla\Lambda(\boldsymbol\theta_1)-\nabla\Lambda(\boldsymbol\theta_2)\|=\Theta(\|\boldsymbol\theta_1-\boldsymbol\theta_2\|)$.

We state a Lemma. Before we state the Lemma, let as simplify the notation using 2,1 vs $i,j$ and absorb the baseline and heterogeneity into one function $f$ so that we have \begin{equation}\label{absorb}
\begin{aligned}
\lambda_{1}(t)
&=f_{1}(t) 
+ c\int_0^t\mathbf{1}_{[0,\sigma_h]}(t-s)dN_2(s) \\
%
\lambda_{2}(t) 
&= f_{2}(t)
\end{aligned}
\end{equation}
This will simplify the notation in the Lemma below. We always explicitly make clear which version of notation we are using before stating results.
\begin{lemma}\label{p_on_b}
Fixed $a$ and $b$, assume $f_2$ is continuous everywhere with bounded gradient, then
\begin{align}\label{approx2}
    \mathbb E\bigg[\frac{1}{a + b\int_0^t\mathbf{1}_{[0,\sigma_h]}(t-s)dN_2(s)}\bigg]=&\frac{1-f_2(t) \sigma_h}{a}+\frac{f_2(t)}{a+b}\sigma_h+o(\sigma_h) \nonumber\\
    =& \frac{1}{a}-\frac{b}{a(a+b)} f_2(t)\sigma_h+o(\sigma_h)
\end{align}
\end{lemma}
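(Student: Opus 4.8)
The plan is to recognize the integral $M := \int_0^t \mathbf 1_{[0,\sigma_h]}(t-s)\,dN_2(s)$ as the number of points of $N_2$ falling in the window $[\,t-\sigma_h,\,t\,]$ (for $\sigma_h<t$, which we may assume since we let $\sigma_h\to 0$ with $t$ fixed). Since in \eqref{absorb} the process $N_2$ has conditional intensity $\lambda_2(t)=f_2(t)$ depending only on time, $N_2$ is an inhomogeneous Poisson process, so $M$ is Poisson with parameter $\mu := \int_{t-\sigma_h}^t f_2(s)\,ds$. The bounded-gradient assumption gives $f_2(s)=f_2(t)+O(\sigma_h)$ uniformly on the window, hence $\mu = f_2(t)\sigma_h + O(\sigma_h^2)$. (When $f_2$ is itself stochastic but $\mathcal H_t$-adapted, one conditions on its path; the argument below is unchanged provided the point process is simple and orderly, so that the probability of two or more points in the window is $o(\sigma_h)$.)

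Next I would expand the expectation against the Poisson law. Because $a,b>0$ all summands are nonnegative, so Tonelli justifies
\[
\mathbb E\Big[\tfrac{1}{a+bM}\Big] = \sum_{k=0}^\infty \frac{1}{a+bk}\,e^{-\mu}\frac{\mu^k}{k!} = \frac{e^{-\mu}}{a} + \frac{\mu e^{-\mu}}{a+b} + R,
\]
where the remainder $R=\sum_{k\ge 2}\frac{1}{a+bk}e^{-\mu}\frac{\mu^k}{k!}$ satisfies $0\le R\le \frac1a\,\mathbb P(M\ge 2)=\frac1a\big(1-e^{-\mu}-\mu e^{-\mu}\big)=O(\mu^2)=o(\sigma_h)$. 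The two retained terms are then Taylor-expanded using $e^{-\mu}=1-\mu+O(\mu^2)$ together with $\mu=f_2(t)\sigma_h+O(\sigma_h^2)$, giving $e^{-\mu}=1-f_2(t)\sigma_h+o(\sigma_h)$ and $\mu e^{-\mu}=f_2(t)\sigma_h+o(\sigma_h)$.

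Substituting these yields
\[
\mathbb E\Big[\tfrac{1}{a+bM}\Big] = \frac{1-f_2(t)\sigma_h}{a} + \frac{f_2(t)}{a+b}\sigma_h + o(\sigma_h),
\]
which is the first displayed equality; combining the $f_2(t)\sigma_h$ terms over the common denominator $a(a+b)$ and using $a-(a+b)=-b$ gives the second form $\frac1a-\frac{b}{a(a+b)}f_2(t)\sigma_h+o(\sigma_h)$. The only genuinely delicate point is the control of the multiple-occupancy contribution: in the Poisson reading it is the clean bound $\mathbb P(M\ge 2)=O(\mu^2)$, while in the stochastic-background reading it rests on the orderliness of the simple point process, which I would invoke from the regularity conditions imported from \cite{ogata1978asymptotic}. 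Everything else is a first-order Taylor expansion.
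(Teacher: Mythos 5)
Your proposal is correct and follows essentially the same route as the paper's proof: recognize $\int_0^t\mathbf{1}_{[0,\sigma_h]}(t-s)\,dN_2(s)$ as a Poisson count with mean $\int_{t-\sigma_h}^t f_2(s)\,ds = f_2(t)\sigma_h+o(\sigma_h)$, then compute the expectation directly from the Poisson p.m.f. The paper leaves that ``direct calculation'' implicit, and your write-up simply supplies the details (isolating the $k=0,1$ terms, bounding the $k\geq 2$ tail by $O(\mu^2)$, and Taylor-expanding $e^{-\mu}$), which is exactly what the paper intends.
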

\begin{proof}
    Simply note that $\int_0^t\mathbf{1}_{[0,\sigma_h]}(t-s)dN_2(s) $ is a Poisson distribution with parameter (also mean) $\int_{t-\sigma_h}^tf_2(s)ds=f_2(t)\sigma_h+o(\sigma_h)$. The rest follows direct calculation using Poisson p.m.f.
\end{proof}

We also state another Lemma which will be useful later, based on taylor expansion of ratio function: 
\begin{lemma}\label{ratioe}
    Given finitely supported R.V. $X$ and $Y$ with mean $\mu_x$ and $\mu_y$, if we define $r_x = \|X-\mu_x\|_\infty$ and $r_y = \|Y-\mu_y\|_\infty$ and $r=\max(r_x,r_y)$. Suppose $\frac{X}{Y}$ is always strictly bounded away from 0 by some fixed positive constant, then
\begin{align*}
    \frac{X}{Y} =\frac{\mu_x}{\mu_y} -\frac{\mu_x}{\mu_y^2}(Y-\mu_y) +\frac{1}{\mu_y}(X-\mu_x)+ o(r)
\end{align*}
or higher order approximation:
\begin{align*}
    \frac{X}{Y} =\frac{\mu_x}{\mu_y} -\frac{\mu_x}{\mu_y^2}(Y-\mu_y) +\frac{1}{\mu_y}(X-\mu_x)+ \frac{\mu_x}{\mu_y^3}(Y-\mu_y)^2 -\frac{1}{\mu_y^2}(X-\mu_x)(Y-\mu_y) + o(r^2).
\end{align*}
Thus, 
\begin{equation*}
    \mathbb E\frac{X}{Y} = \frac{\mu_x}{\mu_y} +\frac{\text{Var}(Y)\mu_x}{\mu_y^3} -\frac{\text{Cov}(X,Y)}{\mu_y^2} + o(r^2).
\end{equation*}
\end{lemma}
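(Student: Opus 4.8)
The plan is to recognize $X/Y$ as $g(X,Y)$ for the smooth map $g(x,y)=x/y$ and to Taylor-expand $g$ about the point $(\mu_x,\mu_y)$, carrying the expansion to first order for the first display and to second order for the remaining two. First I would record the partial derivatives $g_x=1/y$, $g_y=-x/y^2$, $g_{xx}=0$, $g_{xy}=-1/y^2$, $g_{yy}=2x/y^3$; evaluating these at $(\mu_x,\mu_y)$ reproduces exactly the claimed coefficients $1/\mu_y$, $-\mu_x/\mu_y^2$, $0$, $-1/\mu_y^2$ and $\tfrac{1}{2}g_{yy}(\mu_x,\mu_y)=\mu_x/\mu_y^3$. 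In particular the $(X-\mu_x)^2$ term is absent because $g_{xx}\equiv 0$, which is why only the $(Y-\mu_y)^2$ and $(X-\mu_x)(Y-\mu_y)$ quadratic terms appear.

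The key technical point is to control the Taylor remainder uniformly over the sample space. Since $X$ and $Y$ are finitely supported, both are bounded, and the ratio $X/Y$ is well defined and bounded away from $0$ by hypothesis, which forces $Y$ to take finitely many values lying on one side of and bounded away from $0$; consequently $\mu_y$, being a convex combination of those values, is itself bounded away from $0$. I would then observe that the segment joining $(\mu_x,\mu_y)$ to any realized $(X,Y)$ has $y$-coordinate a convex combination of $\mu_y$ and $Y$, hence stays in a compact region $\{|x|\le M,\ |y|\ge\delta\}$ on which the second (resp.\ third) derivatives of $g$ are bounded. Lagrange's form of the remainder then yields deterministic, $\omega$-uniform bounds $|R_1|\le C\big(|X-\mu_x|^2+|Y-\mu_y|^2\big)\le C' r^2$ and $|R_2|\le C'' r^3$, using that $|X-\mu_x|\le r$ and $|Y-\mu_y|\le r$ hold pointwise by definition of $r$. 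Because $r^2=o(r)$ and $r^3=o(r^2)$ as $r\to0$, the two pointwise expansions follow immediately.

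Finally I would take expectations of the second-order expansion. The linear terms vanish since $\mathbb E[X-\mu_x]=\mathbb E[Y-\mu_y]=0$; the quadratic terms contribute $\mathbb E[(Y-\mu_y)^2]=\text{Var}(Y)$ and $\mathbb E[(X-\mu_x)(Y-\mu_y)]=\text{Cov}(X,Y)$ against the coefficients $\mu_x/\mu_y^3$ and $-1/\mu_y^2$; and the uniform bound $|R_2|\le C''r^3$ gives $\mathbb E[R_2]=o(r^2)$. Collecting these terms produces the stated formula $\mathbb E[X/Y]=\mu_x/\mu_y+\text{Var}(Y)\mu_x/\mu_y^3-\text{Cov}(X,Y)/\mu_y^2+o(r^2)$.

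I expect the main obstacle to be exactly the bookkeeping that upgrades the pointwise $O$/$o$ statements into statements valid after integration: everything hinges on the finite-support hypothesis together with $Y$ bounded away from $0$, which confine $(X,Y)$ and the entire interpolating segment to a compact set avoiding the singularity $y=0$ of $g$. This is what allows a single deterministic constant to bound the remainder, so that the $o$-terms survive taking expectations rather than merely holding pointwise.
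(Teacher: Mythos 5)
Your proposal is correct, and in fact the paper offers no proof to compare against: its proof of this lemma is literally ``Omitted.'' Your argument is the standard bivariate Taylor (delta-method) expansion of $g(x,y)=x/y$ about $(\mu_x,\mu_y)$, and it supplies exactly what the omission leaves out: the coefficient bookkeeping (including the vanishing of the $(X-\mu_x)^2$ term since $g_{xx}\equiv 0$), the $\omega$-uniform Lagrange remainder bounds $|R_1|\le C r^2$ and $|R_2|\le C r^3$ via confinement of the interpolating segment to a compact set avoiding $y=0$, and the passage to expectations where the linear terms kill and the quadratic terms produce $\mathrm{Var}(Y)$ and $\mathrm{Cov}(X,Y)$. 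One small point deserves care: the stated hypothesis (that $X/Y$ is bounded away from $0$) does not by itself force the segment's $y$-coordinate away from $0$; what does the work is finite support plus well-definedness of $X/Y$ (so $Y$ avoids $0$) together with $\mu_y\neq 0$, and in the asymptotic regime $r\to 0$ with $\mu_y$ fixed one has $|Y-\mu_y|\le r<|\mu_y|/2$ pointwise, so every convex combination of $\mu_y$ and $Y$ satisfies $|y|\ge |\mu_y|/2$. This is really an imprecision in the lemma's own statement rather than a gap in your argument, and your compactness reasoning is the right way to close it.
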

\begin{proof}
    Omitted.
\end{proof}

Thus, if we use the above high order approximation:
\begin{equation*}
    R(X,Y) := \frac{\mu_x}{\mu_y} +\frac{\text{Var}(Y)\mu_x}{\mu_y^3} -\frac{\text{Cov}(X,Y)}{\mu_y^2},
\end{equation*}
we have $\mathbb E[\frac{X}{Y}]=R(X,Y)+o(r^2)$.

\subsection{Proof of Proposition \ref{prop1}.}
We first prove Proposition \ref{prop1}. In this proof we use original intensity notation \eqref{eq:true_model}.

\begin{proof}[Proof of Proposition \ref{prop1}]
Using \eqref{ogte} (or see \cite{ogata1978asymptotic}), we can show that 
\begin{align*}
    \Lambda(\theta) = &\mathbb E\bigg[-\int_0^P \lambda_\theta(t)dt+\int_0^P \log\lambda_\theta(t)dN_1(t)\bigg] \nonumber\\
    =&\mathbb E\bigg[-\int_0^P\lambda_1(t)\bigg( \frac{\lambda_\theta(t)}{\lambda_1(t)}-\log\lambda_\theta(t)\bigg)dt\bigg]
\end{align*}
for 
\begin{align}\label{special}
    \lambda_1(t) =& \alpha_1 + f_1(t)+ c\int_{0}^t\mathbf{1}_{[0,\sigma_h]}(t-s) dN_2(s).
\end{align}
where we set $\int_0^T f_1(t)dt=0$ to avoid identifiability issue with $\alpha_1$ and parametrize  
\begin{equation*}
    \lambda_\theta(t) = \theta_1 + \theta_2\int_0^t\mathbf{1}_{[0,\sigma_h ]}(t-s)dN_2(s),
\end{equation*}
for MLE. Using Lemma \ref{p_on_b} and $\int_0^T f_1(t)dt=0$, one can show
\begin{align*}
    \frac{\partial\Lambda}{\partial\theta_2}_{|\boldsymbol\theta=(\alpha_1,c)} = &\mathbb E\bigg[-\int_0^T\frac{\partial \lambda_\theta}{\partial \theta_2}\bigg( 1-\frac{\lambda_1^*(t)}{\lambda_\theta(t)}\bigg)dt\bigg]\nonumber\\
    =&-\int_0^T\mathbb E\bigg[\frac{f_1(t)\int_0^t\mathbf{1}_{[0,\sigma_h]}(t-s)dN_2(s)}{\alpha_1 + c\int_0^t\mathbf{1}_{[0,\sigma_h]}(t-s)dN_2(s)}\bigg]dt\nonumber\\
    =& -\int_0^T\frac{f_1(t)}{c}-\frac{f_1(t)\mu}{c}\mathbb E\bigg[\frac{1}{\alpha_1 + c\int_0^t\mathbf{1}_{[0,\sigma_h]}(t-s)dN_2(s)}\bigg]dt\nonumber\\
    =&  -\int_0^T\frac{ f_1(t)f_2(t)\sigma_h}{\alpha_1+c} dt+ o(\sigma_h) dt
\end{align*}

Now, note $\boldsymbol\theta_{KL}$ corresponds to the point where $\frac{\partial\Lambda}{\partial\theta_2}=0$. The rest follows from Assumption \ref{ass}.

\end{proof}

\subsection{Preliminary work for proofs of Proposition \ref{prop2} and \ref{prop3}.}
Hence forth we use the alternative intensity notation \eqref{absorb}. We lay some ground work before proving Proposition \ref{prop2} and \ref{prop3}. Again we restate the notation \eqref{absorb}, but also parametrize the density as $\theta$ (the impact function parameter) and $\boldsymbol \eta$ (all else is nuisance parameter):

\begin{align}\label{special}
    \lambda_1(t) =& f_1(t)+ c\int_{0}^t\mathbf{1}_{[0,\sigma_h]}(t-s) dN_2(s)\nonumber\\
    \lambda_2(t) =& f_2(t)\nonumber\\
    \lambda_{\theta,\boldsymbol \eta} (t) =& \sum_{i=1}^{M}\eta_i g_i(t) +\theta \int_{0}^t\mathbf{1}_{[0,\sigma_h]}(t-s) dN_2(s)
\end{align}
and re-define 
\begin{equation*}
   \ell(\theta,\boldsymbol\eta; \mathcal H_T)= -\int_0^T \lambda_{\theta,\boldsymbol\eta}(t)dt+\int_0^T \log\lambda_{\theta,\boldsymbol\eta}(t)dN_1(t)
\end{equation*}
and consequently $\Lambda(\theta,\boldsymbol\eta) = \mathbb E[\ell(\theta,\boldsymbol\eta)]$. However, we characterize a concept, on population level, similar to profile likelihood
\begin{equation*}
    \Lambda_p(\theta) = \sup_{\boldsymbol\eta} \mathbb E\bigg[-\int_0^T \lambda_{\theta,\boldsymbol\eta}(t)dt+\int_0^T \log\lambda_{\theta, \boldsymbol\eta}(t)dN_1(t)\bigg],
\end{equation*}
So, we can define 
\begin{equation*}
    \boldsymbol\eta(\theta) = \operatorname*{argmax}_{\boldsymbol \eta} \mathbb E\bigg[-\int_0^T \lambda_{\theta,\boldsymbol\eta}(t)dt+\int_0^T \log\lambda_{\theta, \boldsymbol\eta}(t)dN_1(t)\bigg]
\end{equation*}
so that $\Lambda (\theta,\boldsymbol\eta(\theta)) = \Lambda_p(\theta)$. As a result of Assumption \ref{ass}, $\Lambda_p$ is also $\mu$-strongly convex with $L$-Lipschitz gradient. Now, we first characterize $\boldsymbol\eta(c)$ which corresponds to the equatiosn $\frac{\partial\Lambda(c,\boldsymbol\eta)}{\partial \eta_i}=0$ for all $i$. To analyze this term, we write:
\begin{align*}
    &\frac{\partial\Lambda(c,\boldsymbol\eta)}{\partial \eta_i} \nonumber\\
    = &-\mathbb E\bigg[\int_0^T\frac{\partial\lambda_{\theta,\boldsymbol\eta}}{\partial \eta_i}(t)\bigg(1-\frac{\lambda_1(t)}{\lambda_{\theta,\boldsymbol\eta}(t)}\bigg)dt\bigg]\nonumber\\
    =& -\mathbb E\bigg[\int_0^Tg_i(t)\bigg(\frac{\sum_{i=1}^M [\boldsymbol\eta(c)]_ig_i(t)-f_1(t)}{\sum_{i=1}^M [\boldsymbol\eta(c)]_ig_i(t)+c\int_0^T\textbf 1_{[0,\sigma_h]}(t-s)dN_2(s)}\bigg)dt\bigg]\nonumber\\
    =& -\mathbb E\bigg[\int_0^T\frac{g_i(t)\Big(\sum_{i=1}^M [\boldsymbol\eta(c)]_ig_i(t)-f_1(t)\Big)}{\sum_{i=1}^M [\boldsymbol\eta(c)]_ig_i(t)} -\frac{c\Big(\sum_{i=1}^M [\boldsymbol\eta(c)]_ig_i(t)-f_1(t)\Big)f_2(t)\sigma_h}{\sum_{i=1}^M [\boldsymbol\eta(c)]_ig_i(t)\Big(c+\sum_{i=1}^M [\boldsymbol\eta(c)]_ig_i(t)\Big)}+o(\sigma_h)dt\bigg]
\end{align*}

Set $ \frac{\partial\Lambda(c,\boldsymbol\eta)}{\partial \eta_i}=0$ for $1\leq i\leq M$, we can solve for $\boldsymbol\eta(c)$, we can show, using Assumption \ref{ass}, if we had a $\tilde{\boldsymbol\eta}_c$ that satisfies
\begin{equation*}
    0 = \mathbb E\bigg[\int_0^Tg_i(t)(1-\frac{f_1(t)}{\sum_{i=1}^M [\tilde{\boldsymbol\eta}_c]_ig_i(t)})dt\bigg]
\end{equation*}
for all $i$ or, if all process are stationary:
\begin{equation}\label{horder0}
    0 = \mathbb E\bigg[g_i\bigg(1-\frac{f_1}{\sum_{i=1}^M [\tilde{\boldsymbol\eta}_c]_ig_i}\bigg)\bigg ],
\end{equation}
for all $i$, then
\begin{equation}\label{horder1}
    \|\tilde{\boldsymbol\eta}_c-\boldsymbol\eta(c)\|=O(\sigma_h).
\end{equation}

Now, we can investigate estimation for $\theta$:
\begin{align*}
    &\frac{\partial\Lambda_p(\theta)}{\partial \theta}_{|\theta=c}\nonumber\\
    =& \frac{\partial\Lambda(\theta,\boldsymbol\eta)}{\partial \theta}_{|\theta=c,\boldsymbol\eta=\boldsymbol\eta(c)}  \nonumber\\
    =& -\mathbb E\bigg[\int_0^T\frac{\partial\lambda_{\theta,\boldsymbol\eta}}{\partial \theta}(t)\bigg(1-\frac{\lambda_1(t)}{\lambda_{\theta,\boldsymbol\eta}(t)}\bigg)dt\bigg]\nonumber\\
    =&-\mathbb E\bigg[\int_0^T\bigg(\int_0^T\textbf 1_{[0,\sigma_h]}(t-s)dN_2(s)\bigg)\bigg(\frac{\sum_{i=1}^M [\boldsymbol\eta(c)]_ig_i(t)-f_1(t)}{\sum_{i=1}^M [\boldsymbol\eta(c)]_ig_i(t)+c\int_0^T\textbf 1_{[0,\sigma_h]}(t-s)dN_2(s)}\bigg)dt\bigg]\nonumber\\
    =&-\mathbb E\bigg[\int_0^T\frac{\sum_{i=1}^M [\boldsymbol\eta(c)]_ig_i(t)-f_1(t)}{c} -\frac{\bigg(\sum_{i=1}^M [\boldsymbol\eta(c)]_ig_i(t)-f_1(t)\bigg)\bigg(\sum_{i=1}^M [\boldsymbol\eta(c)]_ig_i(t)\bigg)}{c\bigg(\sum_{i=1}^M [\boldsymbol\eta(c)]_ig_i(t)+c\int_0^T\textbf 1_{[0,\sigma_h]}(t-s)dN_2(s)\bigg)}dt\bigg]\nonumber\\
    =&-\mathbb E\bigg[\int_0^T\frac{\sum_{i=1}^M [\boldsymbol\eta(c)]_ig_i(t)-f_1(t)}{c} -\frac{\bigg(\sum_{i=1}^M [\boldsymbol\eta(c)]_ig_i(t)-f_1(t)\bigg)\bigg(\sum_{i=1}^M [\boldsymbol\eta(c)]_ig_i(t)\bigg)}{c\sum_{i=1}^M [\boldsymbol\eta(c)]_ig_i(t)}\nonumber\\
    &+\frac{c^2\bigg(\sum_{i=1}^M [\boldsymbol\eta(c)]_ig_i(t)-f_1(t)\bigg)\bigg(\sum_{i=1}^M [\boldsymbol\eta(c)]_ig_i(t)\bigg) f_2(t) \sigma_h }{\bigg(c\sum_{i=1}^M [\boldsymbol\eta(c)]_ig_i(t)\bigg)\bigg(c\sum_{i=1}^M [\boldsymbol\eta(c)]_ig_i(t)+c^2\bigg)}+o(\sigma_h) dt\bigg]\nonumber\\
    =&-\mathbb E\bigg[\int_0^T \frac{\Big(\sum_{i=1}^M [\boldsymbol\eta(c)]_ig_i(t)-f_1(t)\Big) f_2(t)}{\sum_{i=1}^M [\boldsymbol\eta(c)]_ig_i(t)+c} \sigma_h +o(\sigma_h) dt\bigg].
\end{align*}

As we can see, this gradient is off-zero by \begin{equation}\label{horder2}
    -\mathbb E\bigg[\int_0^T \frac{\Big(\sum_{i=1}^M [\tilde{\boldsymbol\eta}_c]_ig_i(t)-f_1(t)\Big) f_2(t)}{\sum_{i=1}^M [\tilde{\boldsymbol\eta}_c]_ig_i(t)+c}dt\bigg]\sigma_h + o(\sigma_h)
\end{equation}
so we expect the same order of error between $\hat\theta$ and $c$. Now we are ready to prove Proposition \ref{prop2} and \ref{prop3}.

\subsection{Proof of Proposition \ref{prop2}.}
We use the alternative intensity notation \eqref{special}.
\begin{proof}[Proof of Proposition \ref{prop2}]
    When using the naive Hawkes, we only have $g(t)=1$ the constant function, then we can solve for $\tilde\eta_c$, which is simply solving for the $\tilde\eta_c$
\begin{equation*}
 0 = \mathbb E\bigg[\frac{\tilde\eta_c-f_1}{\tilde\eta_c}\bigg]
\end{equation*}
which is simply the baseline intensity mean $\tilde\eta_c = \mathbb E[f_1].$

The gradient is thus off by zero by
\begin{equation*}
    \mathbb E\bigg[\frac{(f_1-\mathbb Ef_1)f_2}{\mathbb E f_1 +c}\bigg]\sigma_h + o(\sigma_h) = \frac{\text{Cov}(f_1,f_2)}{\mathbb E f_1 +c}\sigma_h + o(\sigma_h)
\end{equation*}

The rest follows from Assumption \ref{ass} as in proof of Proposition \ref{prop1}. 
\end{proof}

\subsection{Proof of Proposition \ref{prop3}.}
We use the alternative intensity notation \eqref{special}. In general case, we first assume all $f_1(t)$, $f_2(t)$ and $g_i(t)$ for $i>1$ with $g_1=1$ are stationary and square integrable (so we drop the dependence on $t$ and make this a subspace Hilbert space). For basis normalization, let us fix $g_1 =1$ and all other $\|g_i\|_2 := \mathbb E[g_i^2] =1$. More precisely, let $ G\subset L^2 (\Omega,\mathcal F,\mathbb P) $ be a centered Hilbert Space, we assume all $\{g_i\}_{i>1}\subseteq G$, as well as $f_1-\mathbb Ef_1 \in G$ and  $f_2-\mathbb Ef_2 \in G$. Furthermore, we assume the basis $g_i$ are uncorrelated (orthogonal basis): $\langle g_i,g_j\rangle:= \mathbb E[g_ig_j]$.

\begin{proof}[Proof of Proposition \ref{prop3}]

Recall Lemma \ref{ratioe}, we first verify 
\begin{align*}
    [\tilde{\boldsymbol \eta}_c]_1 =& \mathbb E[f_1]\nonumber\\
    [\tilde{\boldsymbol \eta}_c]_i =& \mathbb E[ (f_1-\mathbb E[f_1])g_i] \text { for $i\geq 1$}
\end{align*}
is a solution for the high order approximation:
\begin{equation*}
    R(g_i(\sum_{i=1}^M[\tilde{\boldsymbol\eta}_c]_ig_i-f_1), \sum_{i=1}^M[\tilde{\boldsymbol\eta}_c]_ig_i) = 0
\end{equation*}

To verify, first notice that 
\begin{equation*}
    \mathbb E[f_1|\mathcal G] = \sum_{i=1}^M[\tilde{\boldsymbol\eta}_c]_ig_i
\end{equation*}
where $\mathcal G$ is the sigma-algebra generated by $g_1,g_2,...,g_M$.
This can be easily checked by notice that $\langle\sum_{i=1}^M[\tilde{\boldsymbol\eta}_c]_ig_i-f_1,g_i\rangle = 0$ (projection) for all $i$. Thus,
\begin{equation*}
    \mathbb E[g_i(\sum_{i=1}^M[\tilde{\boldsymbol\eta}_c]_ig_i-f_1)] =\mathbb E[g_i\mathbb E[\sum_{i=1}^M[\tilde{\boldsymbol\eta}_c]_ig_i-f_1|\mathcal G]]=0
\end{equation*}
Then, 
\begin{align*}
    \text{Cov}(g_i(\sum_{i=1}^M[\tilde{\boldsymbol\eta}_c]_ig_i-f_1), \sum_{i=1}^M[\tilde{\boldsymbol\eta}_c]_ig_i) =& \mathbb E[g_i(\sum_{i=1}^M[\tilde{\boldsymbol\eta}_c]_ig_i-f_1)( \sum_{i=1}^M[\tilde{\boldsymbol\eta}_c]_ig_i)]\nonumber\\
    =&\mathbb E[g_i( \sum_{i=1}^M[\tilde{\boldsymbol\eta}_c]_ig_i)\mathbb E[\sum_{i=1}^M[\tilde{\boldsymbol\eta}_c]_ig_i-f_1|\mathcal G]]\nonumber\\
    =&0
\end{align*}
Then, we use Lemma \ref{ratioe} and \eqref{horder0}, \eqref{horder1},\eqref{horder2} to check
\begin{equation*}
     \mathbb E\bigg[g_i\bigg(1-\frac{f_1}{\sum_{i=1}^M [\tilde{\boldsymbol\eta}_c]_ig_i}\bigg)\bigg ] = R(g_i(\sum_{i=1}^M[\tilde{\boldsymbol\eta}_c]_ig_i-f_1),\sum_{i=1}^M[\tilde{\boldsymbol\eta}_c]_ig_i) + o(r^2) = o(r^2)
\end{equation*}

\begin{equation*}
     -\mathbb E\bigg[\int_0^T \frac{\Big(\sum_{i=1}^M [\tilde{\boldsymbol\eta}_c]_ig_i(t)-f_1(t)\Big) f_2(t)}{\sum_{i=1}^M [\tilde{\boldsymbol\eta}_c]_ig_i(t)+c}dt\bigg]\sigma_h =R((\sum_{i=1}^M[\tilde{\boldsymbol\eta}_c]_ig_i-f_1)f_2,\sum_{i=1}^M[\tilde{\boldsymbol\eta}_c]_ig_i+c) + o(r^2) = o(r^2)
\end{equation*}

The rest follows from $o(r^2)+o(\sigma_h)=o(r^2+\sigma_h)$, Assumption \ref{ass} and the setting have $M=2$ and $g_1=1$ and $g_2 = \frac{ f_2-\mathbb E[f_2]}{\sqrt{\text{Var}(f_2)}}$.

\end{proof}

\section{Simulation study and empirical verification } \label{appendix:simulatio_empirical}
This section presents a detailed simulation study and empirical verification of our model.

\subsection{Sinuoid background } \label{subsec:sinusoid}
\paragraph{Experiment settings}
The sinusoid function provides a simple way to control the correlation between the background activity of $f_i$ and $f_j$.
Consider the dynamic background as follows,
\begin{align}
f_i(t) =& A \sin(2\pi (t - \phi_{\mathrm{rnd}})) \\
f_j(t) =& A \sin(2\pi (t -  \phi_{\mathrm{rnd}} - \phi_{\mathrm{lag}})) 
\end{align}
where $A$ is the amplitude,
the length of a trial is $T=5$ second,
$\phi_{\mathrm{rnd}}\sim \mathrm{Uniform(0, 1)}$ varies from trial to trial so the background is not repeatedly observed,
$\phi_{\mathrm{lag}}$ controls the correlation between background signals, which is measured as the normalized dot product
\begin{equation}
\langle f_i, f_j \rangle
:= \frac{1}{T A^2} \int_0^T f_i(s) f_j(s) \mathrm{d}s
\end{equation}
When $\phi_{\mathrm{lag}}= 0, 0.5$, it achieves the largest positive or negative correlation respectively;
when $\phi_{\mathrm{lag}}= 0.25$, the correlation is zero.
$A=5$ spikes/second, $\alpha_i=\alpha_j=30$ spikes/second. 
Each simulation includes 200 trials.
$h_{i\to j}(t) = 2 \mathbb{I}_{[0, \sigma_h]}(t)$, $\sigma_h = 30$ ms.
$h_{j\to i}(t) = 0$.
The error band in the main Figure \ref{fig:bias_comparison} is obtained by repeating the simulation 100 times.

\paragraph{Neural Hawkes baseline model }
The Neural Hawkes is a representative deep learning framework for Hawkes processes \citep{mei2017neural}. The key components are
\begin{gather}
\boldsymbol{c}(t) = 
\bar{\boldsymbol{c}}_{i+1} + 
(\boldsymbol{c}_{i+1} - \bar{\boldsymbol{c}}_{i+1}) 
\exp(-\boldsymbol{\delta}_{i+1}(t-t_i^{\mathrm{source}})) \\
\boldsymbol{h}(t) = \boldsymbol{o}_i \odot \mathrm{tanh}(\boldsymbol{c}(t)) \\
\lambda_{\mathrm{target}}
= \left(\boldsymbol{W}_{\mathrm{target}}^T \boldsymbol{h} \right)_+
\end{gather}
To make a fair comparison with other models, we replace the decay function with 
\begin{equation}
\boldsymbol{c}(t) = 
\bar{\boldsymbol{c}}_{i+1} + 
(\boldsymbol{c}_{i+1} - \bar{\boldsymbol{c}}_{i+1}) 
\mathbb{I}_{[0,\sigma_h]}(t-t_i^{\mathrm{source}})
\end{equation}
The model takes the superimposed time points as the input.
At every step, the interval $t^{\mathrm{target}} - t^{\mathrm{source}}$  is passed to the model with labels of the source node and the target node.
In the standard Hawkes model, the impact of the history points to the intensity of the target is $\sum_{t_m \in N_{\mathrm{source}}, t_m < t} h_{\mathrm{source} \to \mathrm{target}}(t - t_m)$.
In contrast, Neural Hawkes considers
\begin{equation}
\boldsymbol{W}_{\mathrm{target}}^T
\Big[\boldsymbol{o}_i \odot \mathrm{tanh}
\Big(\bar{\boldsymbol{c}}_{i+1} + 
(\boldsymbol{c}_{i+1} - \bar{\boldsymbol{c}}_{i+1}) h_{\mathrm{source} \to \mathrm{target}}(t - t_m) \Big)\Big]  
\end{equation}
Although the model only takes one last time point $t_m$ at each step, it can still incorporate the history (in a non-linear way) using intermediate variables 
$\boldsymbol{o}_i, \boldsymbol{c}_{i+1}, \bar{\boldsymbol{c}}_{i+1}$ carried by recurrent neural network (RNN). 
The process label of $t_m$ is passed to the RNN through embedding, so the model can distinguish the source process.
$\boldsymbol{W}_{\mathrm{target}}$ varies from target to target, so the same source process can have different influences on different target processes.

After replacing the coupling window function from the default exponential function with the square window function,
the amplitude of the coupling effect between a source node and a target node is assessed as
$\boldsymbol{W}_{\mathrm{target}}^T
\Big[\boldsymbol{o}_i \odot \mathrm{tanh}
(\boldsymbol{c}_{i+1} - \bar{\boldsymbol{c}}_{i+1}) \Big]$
which contributes to the \textit{increment} of the intensity.
Since the output function is ReLU or softmax, which is close to the identity function on $\mathbb{R}_+$, the amplitude does not need to be rescaled.
Our implemented baseline model is in the public repository.

\subsection{Second-order stationary background } \label{appendix:sim_linear_cox}
In this section, we further study the behavior of the model in  \eqref{eq:target_likelihood} through simulations.
The simulation setup in this section is the foundation of the following sections.
In this special setup, we can approximate the bias, variance, risk, and likelihood of the estimator and they match the numerical results very well.
Details and all derivations will be shown in Appendix \ref{appendix:theoretical_derivations}.
Assume the fluctuating background activity $f_{i}$ and $f_{j}$ are second-order stationary stochastic processes,
meaning $\mathbb{E}[f_{i}(t) f_{i}(t +u)]$ only depends on $u$ but not $t$.
A special case of the second-order stationary process is the \textit{cluster point process} or \textit{linear Cox process}, which is widely used in point process study \cite{diggle1985kernel, bartlett1964spectral} and \cite[sec. 6.3]{daley2003introduction}.
We add the second-order stationary condition only to make theoretical derivations easier.
More variant simulation scenarios will be shown later.

We first generate random background $f_{i}, f_{j}$, then generate point processes. 
Let $\phi_{\sigma_I}(\tau) = \frac{1}{\sqrt{2\pi \sigma_I^2}}\exp(-\frac{\tau^2}{2\sigma_I^2})$ be a Gaussian window function with scale $\sigma_I$. $t^c_i$ are the time points of the center process determining the positions of Gaussian windows, which is generated by a homogeneous Poisson process with intensity $\rho$.
\begin{equation} \label{eq:linear_cox}
f_{i}(t) = f_{j}(t) = \sum_{i} \phi_{\sigma_I} \left(t - t^c_i\right)
\end{equation}
For simplicity, we first consider the impact function in the form
\begin{equation}
h_{i\to j}(t) = \alpha_{i\to j} \cdot \mathbb{I}_{[0,\sigma_h]}(t)
\label{eq:square_coupling_filter}
\end{equation}
where $\alpha_{i\to j}$ is the amplitude, and the filter length is $\sigma_h$.
$\sigma_I$ controls the timescale of the background activity. If $\sigma_I$ is smaller, then $f_{i}$ changes faster.
$\sigma_h$ controls the timescale of the point-to-point coupling effect. If $\sigma_h$ is smaller, then neuron $i$ influences neuron $j$ in a shorter time range.
The impact function estimator has form 
$\hat h_{i\to j} = \hat\alpha_{i\to j} \cdot \mathbb{I}_{[0,\sigma_h]}(t)$ with just one parameter $\hat\alpha_{i\to j}$ and the timescale $\sigma_h$ is known.
We use the thinning method to generate continuous-time point processes \cite{ogata1988statistical}.

\begin{figure}[H]
\centering
\includegraphics[width=0.92\linewidth]{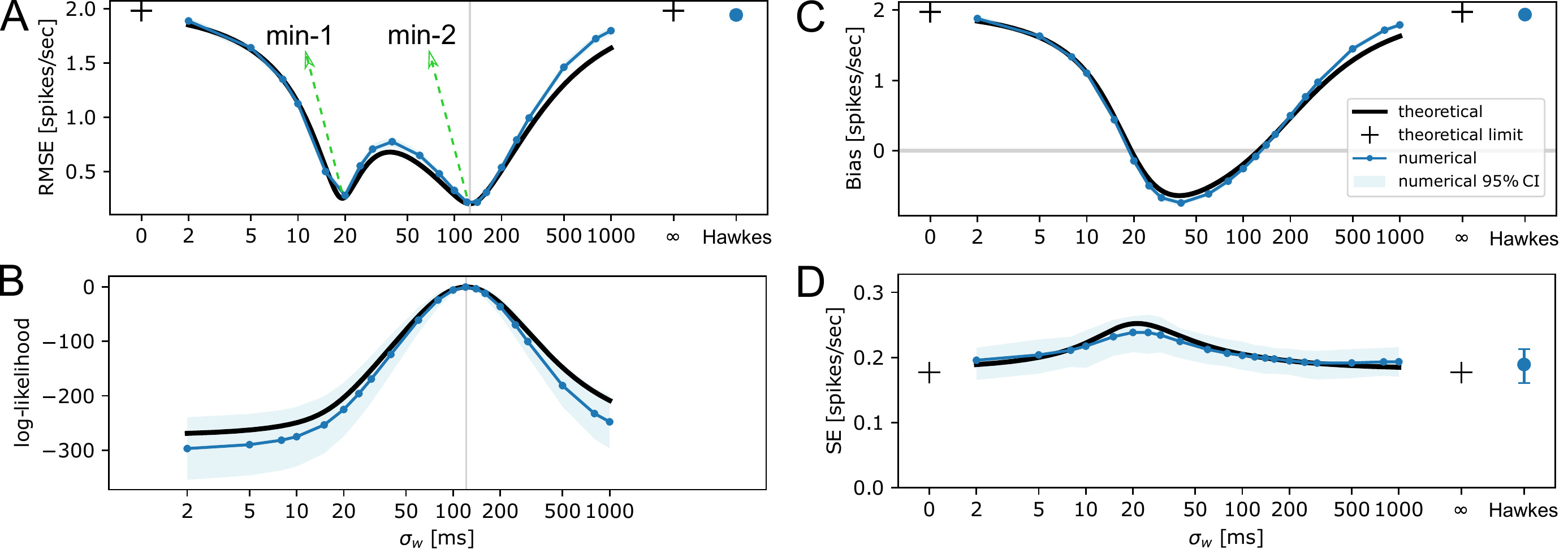}
\caption{Numerical and analytical properties of the estimator $\hat\alpha_{i\to j}$.
We show the properties of $\hat\alpha_{i\to j}$ as a function of smoothing kernel scale $\sigma_w$ of $W$. 
For numerical cases, we evaluate the properties at different $\sigma_w$ indicated by the blue dots.
The x-axis is in logarithmic scale.
The numerical (blue curves) and theoretical results (dark curves) are very close.
The pointwise confidence interval for RMSE and SE is calculated using bootstrap (bootstrap the replicated estimators, not the point processes).
The pointwise confidence interval for bias is calculated based on standard deviation. 
The blue band for the likelihood is 1.96$\times$standard deviation.
\textbf{A} The estimated risk root mean square error (RMSE) of the estimator $\hat\alpha_{i\to j}$.
The two local minimums are labeled by "min-1" and "min-2". Our method prefers to select "min-2" indicated by the vertical line.
The RMSE can be decomposed into bias (shown in \textbf{C}) and standard error (shown in \textbf{D}).
\textbf{B} The maximum log-likelihood as function of $\sigma_w$. Since the likelihood functions may have different offsets, we align them by the peak (the maximum value across $\sigma_w$) to zero, then calculate the mean and pointwise standard deviation.
The vertical line indicates the peak (numerical and theoretical peaks overlap), which matches the position of "min-2" in A.
The theoretical extreme cases "0" and "$\infty$" mean the scale $\sigma_w$ of smoothing window $W$ goes to limit $0$ or $\infty$.
The numerical case "no nuisance" represents the model without including the nuisance regressor $\overline{\textbf{s}_i}$, which becomes a typical Hawkes process model ignoring the fluctuating background activity.
}
\label{fig:estimator_properties}
\end{figure}

Figure \ref{fig:estimator_properties} shows the numerical and analytical approximation of the properties of the estimator.
The analytical formula is derived based on the second-order stationarity condition of the background activity, through which we would hope to provide insights into how the timescale of the activity is linked to the behaviors of the estimator.
The activity $f_{i}$ in the true model is set as a cluster process in \eqref{eq:linear_cox} with $\sigma_I = 100$ ms. 
The square window filter width is $\sigma_h= 30$ ms and $\alpha_{i\to j} = 2$ spikes/sec.
The firing rate of the center process $\rho=30$ spikes/sec.
The baselines are $\alpha_j = \alpha_i = 10$ spikes/sec.
One simulation case has 200 trials and the length of the trial is 5 sec. Each trial is assigned with an independently generated $f_{i}$.
Results in Figure \ref{fig:estimator_properties} are obtained through 100 repetitions.

If the estimation only considers a constant baseline, as known as the standard Hawkes model, without considering the fluctuating background signal, the estimated impact function will be positively biased (Figure \ref{fig:estimator_properties}C), as 
the model struggles to distinguish the effect of mutual interaction from that of the correlated input between neurons.
This explains why the estimated filter in the main Figure \ref{fig:sim_demo} is larger than the true filter.

Our model uses a smoothing kernel to eliminate the background artifacts as in main \eqref{eq:mean_coarsen_regressor}.
When the background smoothing kernel width is too wide $\sigma_w\to\infty$ or too narrow $\sigma_w\to 0$,
the nuisance variable is not able to capture any background activity, and the performance is as bad as the standard Hawkes with large positive bias (Figure \ref{fig:estimator_properties} with labels "$\infty$" points and "0" points).
The bias becomes negative between $\sigma_w = 20$ ms and $\sigma_w = 125$ ms.
The standard error in Figure \ref{fig:estimator_properties}D does not change too much as $\sigma_w$ changes.

The estimated risk of the estimator has two local minimums, labeled "min-1" and "min-2" in the figure.
The MLE points at "min-2", which is indicated by the vertical line in Figure \ref{fig:estimator_properties}A, B. 
The slope of the risk curve around "min-2" is smaller than the slope near "min-1" (the x-axis of the figure is in logarithmic scale), 
so the model is relatively less sensitive to the estimation or selection of $\sigma_w$ near "min-2".
As will be shown shortly, the position of "min-2" is related to the timescale $\sigma_I$ in $f_{i}$ and it is almost invariant of the impact function scale $\sigma_h$ or amplitude $\alpha_{i\to j}$.
The nuisance variable, the coarsened spike train $\overline{\textbf{s}_i}$ (as in main \eqref{eq:mean_coarsen_regressor}), can be interpreted as an approximation of the background activity, and $\sigma_w$ reflects the timescale of the background.

\subsection{Influences of the timescales of coupling effect and background effect }
\label{appendix:timescales}

\begin{figure}[H]
\centering
\includegraphics[width=0.99\linewidth]{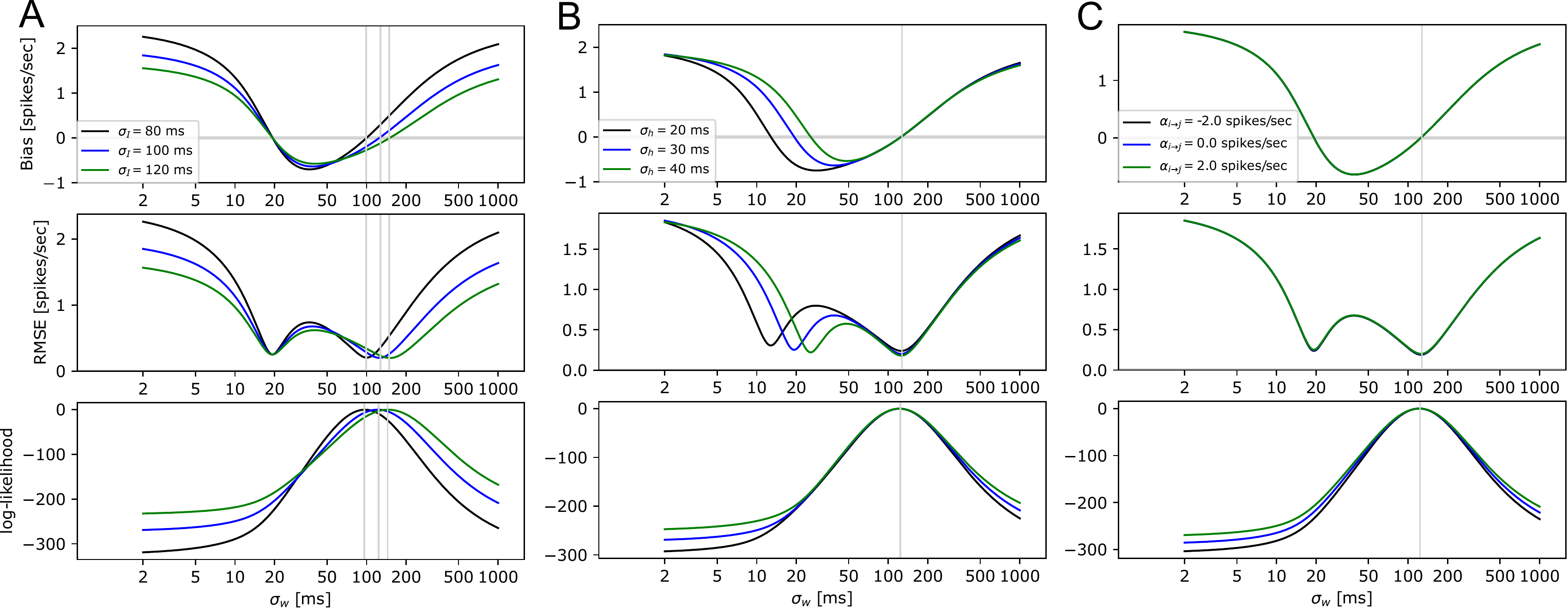}
\caption{Relations between the estimator's properties and background activity timescale, impact function timescale, and impact function amplitude.
We show the RMSE and log-likelihood curves as in Figure \ref{fig:estimator_properties}.
The settings are the same as Figure \ref{fig:estimator_properties}.
$\sigma_I$ is tuned in A, $\sigma_h$ is tuned in B, and $\alpha_{i\to j}$ is tuned in C.
This figure only shows the analytical results. Numerical results match the analytical formula very well, so data is not shown.
The log-likelihood functions may have different offsets, we align them by the peak to zero (maximum value across $\sigma_w$).
The vertical line indicates the MLE.
\textbf{A} $\sigma_I=80,100,120$ ms. 
$\sigma_h=30$ ms and $\alpha_{i\to j}=2$ spikes/sec are fixed.
\textbf{B} $\sigma_h=20,30,40$ ms. 
$\sigma_I=100$ ms and $\alpha_{i\to j}=2$ spikes/sec are fixed.
\textbf{C} $\alpha_{i\to j}=-2,0,2$ spikes/sec.
$\sigma_I=100$ ms and $\sigma_h=30$ ms are fixed.
}
\label{fig:property_tuning}
\end{figure}

As pointed out in main \eqref{eq:bias_formula} and derived in Appendix \ref{appendix:theoretical_derivations}, the properties of the estimator, such as bias and standard error, are related to the timescale of the background and the timescale of the coupling effect, but not the amplitude of the coupling effect.
These connections are presented in Figure \ref{fig:property_tuning}.
The figure shows the relations between the estimator's properties and the timescale of the background activity ($\sigma_I$ of $f_{i}, f_{j}$ in \eqref{eq:linear_cox}), the timescale of the point-to-point coupling activity ($\sigma_h$ of the impact function in \eqref{eq:square_coupling_filter}), and the amplitude of the impact function ($\alpha_{i\to j}$ in \eqref{eq:square_coupling_filter}). 
In Figure \ref{fig:varying_sigma_I_estimator_properties}A, the scale $\sigma_I$ of the background activity $f_{i}$ is related to the estimated smoothing kernel width $\sigma_w$.
If $\sigma_I$ is larger, the optimal $\sigma_w$ also becomes larger.
In Figure \ref{fig:varying_sigma_I_estimator_properties}B, the scale $\sigma_h$ of the impact function does not affect MLE too much, but it is related to the left root of the bias or the left local minimum of the risk.
In Figure \ref{fig:varying_sigma_I_estimator_properties}C, the amplitude of the impact function, whether it is positive or negative, does not change the bias or the RMSE, or the estimated $\sigma_w$.
These properties suggest a simpler and heuristic way of estimating $\sigma_w$. 
Unlike the jitter-based conditional inference method \citep{amarasingham2012conditional}, our method does not rely on the assumption restricting the timescale of the background being larger than the timescale of the coupling effect.
In Appendix \ref{appendix:small_sigma_I}, we will show a scenario with $\sigma_I < \sigma_h$, which violates the assumption of the condition inference, but our method still works well.

The optimal smoothing kernel width $\sigma_w$ is insensitive to the impact function's amplitude or timescale, which suggests a heuristic approximation for $\sigma_w$, meaning the range of the optimal kernel width $\sigma_w$ can be determined before estimating the impact function.
The variant of the model below without the impact function can be used for this purpose.
\begin{gather}
\min_{\beta_j, \beta_w, \sigma_w }
\left\{
-\sum_{s \in N_j } \log \tilde{\lambda}_{j}(s)
+ \int_0^T \tilde{\lambda}_{j}(s) \mathrm{d} s  \right\} \\
\tilde{\lambda}_{j}(t) :=
\beta_j
+ \beta_w \; \overline{\textbf{s}_{i}}(t) \\
\overline{\textbf{s}_{i}}(t)
= \int_0^T W(t-s) \mathrm{d} N_i(s)
\end{gather}

\subsection{Cross-connections and self-connections} \label{appendix:full_connection}
As a test of a more general scenario, this simulation considers full connections cross processes and self-connection within processes.
Simulation data is generated according to,
\begin{equation}
\begin{aligned}
\lambda_{j}(t)
&= \alpha_j
+ f_{j}(t) 
+ \int_0^t h_{i\to j}(t-s) \mathrm{d}N_{i}(s)
+ \int_0^t h_{j\to j}(t-s) \mathrm{d}N_{j}(s) \\
\lambda_{i}(t) 
&= \alpha_i
+ f_{i}(t)
+ \int_0^t h_{j\to i}(t-s) \mathrm{d} N_{j}(s)
+ \int_0^t h_{i\to i }(t-s) \mathrm{d} N_{i}(s)
\end{aligned}
\end{equation}

The fluctuating background follows the linear Cox process with the same settings as in Appendix \ref{appendix:sim_linear_cox}.
The only difference is that this scenario includes 4 impact functions $h_{i\to j}=-2, h_{j\to i}=-2, h_{i\to i}=1, h_{j\to j}=1$ spikes/sec.
The number of samples in each simulation is 200 5-second trials, and the number of repetitions is 100, the same as Appendix \ref{appendix:sim_linear_cox}.

\subsection{Varying-timescale background } \label{appendix:varying_sigma_I}
This section considers a variant of the scenario in Appendix \ref{appendix:sim_linear_cox} with not only fluctuating background but also with 
time-varying timescale $\sigma_I$.
The background activity $f_{i}$ in \eqref{eq:linear_cox} is composed of a sequence of Gaussian windows with fixed scale $\sigma_I$.
The locations of the windows are randomly determined through a homogeneous Poisson process with intensity $\rho$.
$\sigma_I$ controls how fast the background changes; if $\sigma_I$ is smaller, the activity will change faster.
$f_i$ is second-order stationary and some properties can be derived in closed-form formula.

Consider a similar process but the scale of the window $\sigma_I$ is no longer fixed,
\begin{equation} \label{eq:linear_cox_vary_sigma_I}
f_{i} = \sum_{i} \phi_{\sigma_{I,i}} \left(t - t^c_i\right)
\end{equation}
where $\sigma_{I,i}$ changes randomly; every time point of the center process $t^c_i$ is assigned with a different scale $\sigma_{I,i} \sim \mathrm{Uniform}(80, 140)$ ms.
The process $f_{i}$ changes faster at smaller $\sigma_{I,i}$, and changes slower at larger $\sigma_{I,i}$.
The rest of the experiment settings is the same as Appendix \ref{appendix:sim_linear_cox}.
The true impact function is a square window 
$h_{i\to j}(t) = \alpha_{i\to j} \cdot \mathbb{I}_{[0, \sigma_h]}(t)$,
where the timescale is $\sigma_h=30$ ms, the amplitude is $\alpha_{i\to j}=2$ spikes/sec.

\begin{figure}[H]
\centering
\begin{subfigure}{1\textwidth}
\centering
\includegraphics[width=1\linewidth]{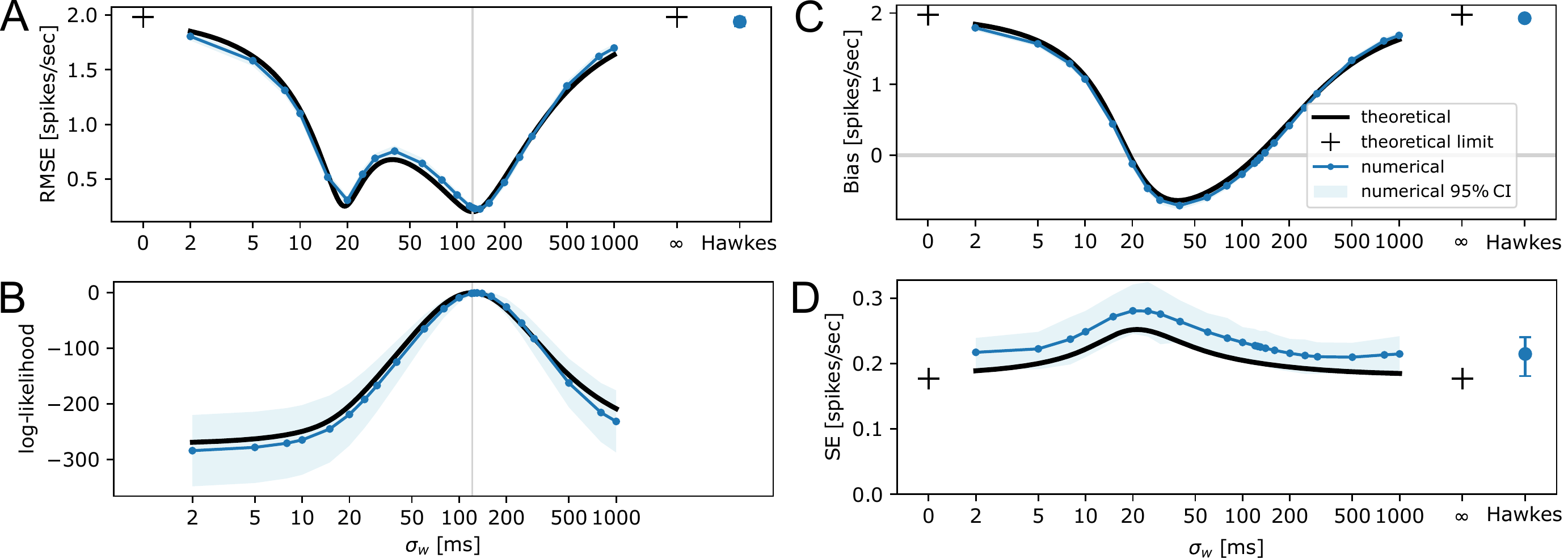}
\end{subfigure}
\caption{Simulation results of the impact function estimator with varying background activity timescale.
The figure is presented in the same ways as Figure \ref{fig:estimator_properties}. The simulation details are in the text.
The background activity $f_{i}$ in \eqref{eq:linear_cox} is replaced with \eqref{eq:linear_cox_vary_sigma_I} with varying timescale.
The results are similar to Figure \ref{fig:estimator_properties}.
The dark curves show the equivalent theoretical approximation using the model in main \ref{eq:target_likelihood} with fixed timescale $\sigma_I=100$ ms, which is manually tuned to match the numerical results.
}
\label{fig:varying_sigma_I_estimator_properties}
\end{figure}
As shown in Figure \ref{fig:varying_sigma_I_estimator_properties}, the selected kernel width $\sigma_w$ will balance the varying timescale, and it can still the select estimator with small risk and low bias, indicated by the vertical lines in Figure \ref{fig:varying_sigma_I_estimator_properties} A and B.
Similar to Figure \ref{fig:estimator_properties}D, the SE does not change too much as the smoothing kernel width $\sigma_w$ changes.
The model can balance the bias, which can be explained by its properties in Figure \ref{fig:estimator_properties}C.
If the timescale of the background $\sigma_I$ is fixed and consider the bias of the estimator near the right root. If $\sigma_w$ is larger than the right root, the bias will be positive, if $\sigma_w$ is a little smaller than the right root, the bias will become negative.
In this scenario, the timescale of the background $\sigma_I$ varies.
The optimal $\sigma_w$ is relatively large for the activity with small $\sigma_I$, so the bias is positive for the fast-changing part.
The optimal $\sigma_w$ is relatively small for the sessions with large $\sigma_I$, so the bias is negative for the slow-changing part of the activity.
With proper selection of $\sigma_w$, the estimator will balance the overall bias between fast- and slow-changing activities, and it can still achieve zero bias. Together with the SE, the risk properties remain similar in Figure \ref{fig:varying_sigma_I_estimator_properties} A.

To verify the reasoning, we compare the numerical results with the equivalent theoretical approximation shown in the dark curves in Figure \ref{fig:varying_sigma_I_estimator_properties}.
The theoretical method is for the model in main \ref{eq:target_likelihood} with fixed timescale for the background by manually tune the timescale as $\sigma_I=100$ ms to match the numerical curves.
The behavior of the estimator for the varying-timescale background activity is almost equivalent to the case with fixed-timescale background activity. The SE of the numerical results is slightly larger though.


\subsection{Fast-changing background } \label{appendix:small_sigma_I}
In extreme cases, the background activity $f_{i}$ can have fast-changing activities.
In this situation, conditional inference-based method can be limited by its formalization of the null hypothesis:
\begin{quote}
\textit{samples from the null distribution are generated by jittering the time points by a random amount, small enough to maintain the fluctuating background intensity, but big enough to break the time association pattern. }
\end{quote}
which implicitly assumes the timescale of the coupling effect is much smaller than the timescale of the background.
This simulation scenario is similar to the setup in section \ref{subsec:simulation_study}, except that $\sigma_I$ is set as a small value, which is comparable to or much smaller than the point-to-point interaction timescale $\sigma_h$.
We will show in this section, this is not a necessary assumption or constraint for our method.
Even the background changes faster than the coupling effect, our model can still have small error.

This simulation scenario is the same as Appendix \ref{appendix:sim_linear_cox} except that the timescale in \ref{eq:linear_cox} is set to $\sigma_I=20$ ms (Figure \ref{fig:small_sigma_I} A,B,C,D), and $\sigma_I=8$ ms (Figure \ref{fig:small_sigma_I} E,F,G,H).
The rest settings of the simulation scenarios are the same as the basic scenario in section \ref{appendix:sim_linear_cox},
where the true impact function is a square window 
$h_{i\to j}(t) = \alpha_{i\to j} \cdot \mathbb{I}_{[0, \sigma_h]}(t)$,
the timescale is $\sigma_h=30$ ms, the amplitude is $\alpha_{i\to j}=2$ spikes/sec.

\begin{figure}[H]
\centering
\begin{subfigure}{1\textwidth}
\centering
\includegraphics[width=1\linewidth]{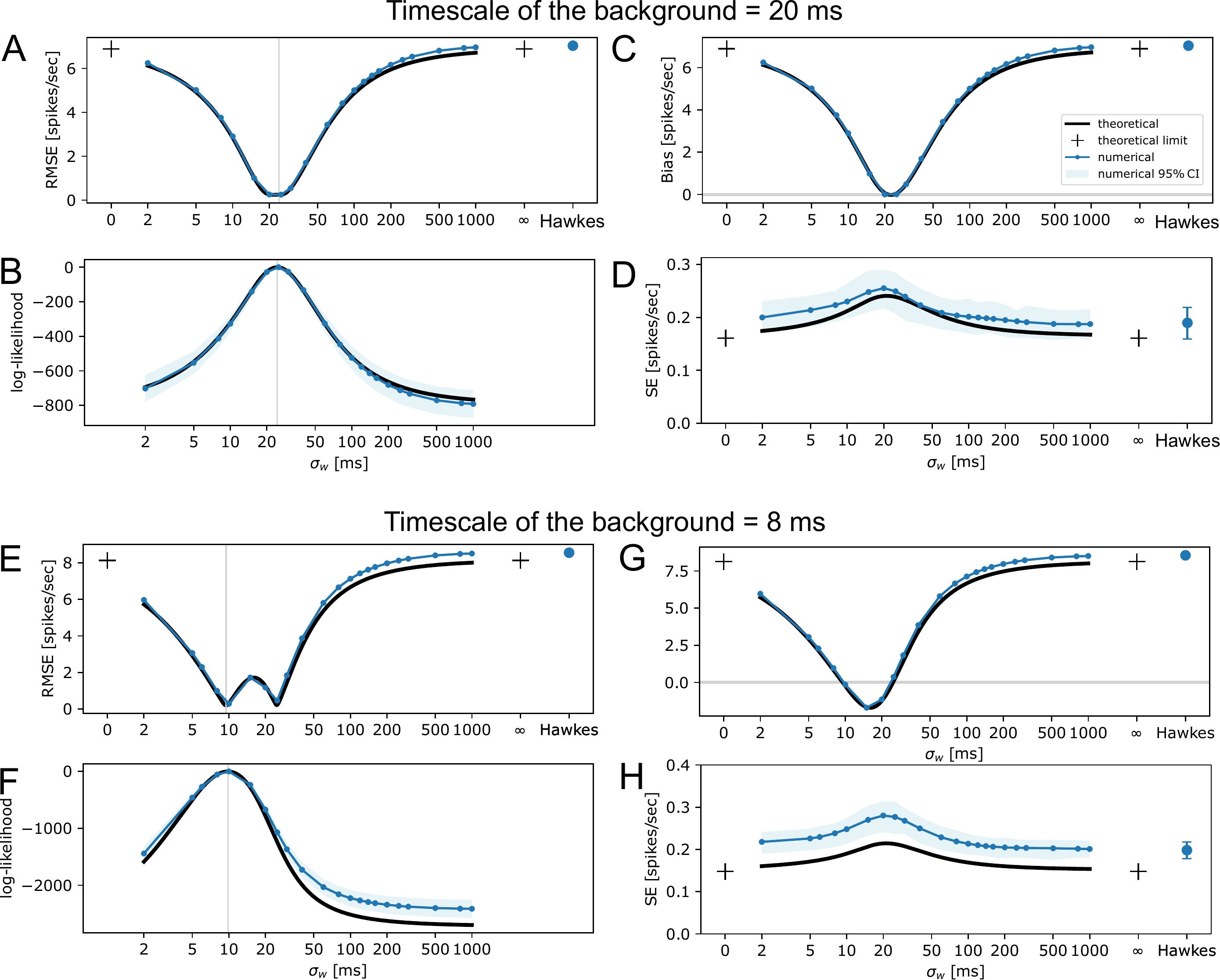}
\end{subfigure}
\caption{Fast-changing background with small $\sigma_I$.
The experiment is similar to Figure \ref{fig:estimator_properties} except that in
\textbf{A, B, C, D} $\sigma_I=20$ ms,
in \textbf{E, F, G, H} $\sigma_I=8$ ms.
}
\label{fig:small_sigma_I}
\end{figure}

As discussed in main section \ref{subsec:background_kernel_smoothing}, Appendix \ref{appendix:timescales}, Figure \ref{fig:estimator_properties},
and Figure \ref{fig:property_tuning}, the right root of the bias curve (or the right local minimum of the risk curve) is associated with background timescale $\sigma_I$ when $\sigma_I > \sigma_h$: if $\sigma_I$ decreases, the right root will shift toward the left.
Figure \ref{fig:small_sigma_I}A shows a special case if $\sigma_I$ keeps decreasing, two local minimums of the risk will overlap.
In Figure \ref{fig:small_sigma_I}C, two roots of the bias will merge to one. 
The property of SE does not change too much, see Figure \ref{fig:small_sigma_I}D and \ref{fig:estimator_properties}D.
If $\sigma_I$ keeps decreasing when $\sigma_I < 20$ ms, the local minimum of the risk or the root of the bias corresponding to the MLE will move on the left side, see Figure \ref{fig:small_sigma_I}E and F.
We also notice that our analytical approximation of the standard error in Figure \ref{fig:small_sigma_I}H begins to have a large error when $\sigma_I$ is very small.

\begin{figure}[H]
\centering
\begin{subfigure}{1\textwidth}
\centering
\includegraphics[width=1\linewidth]{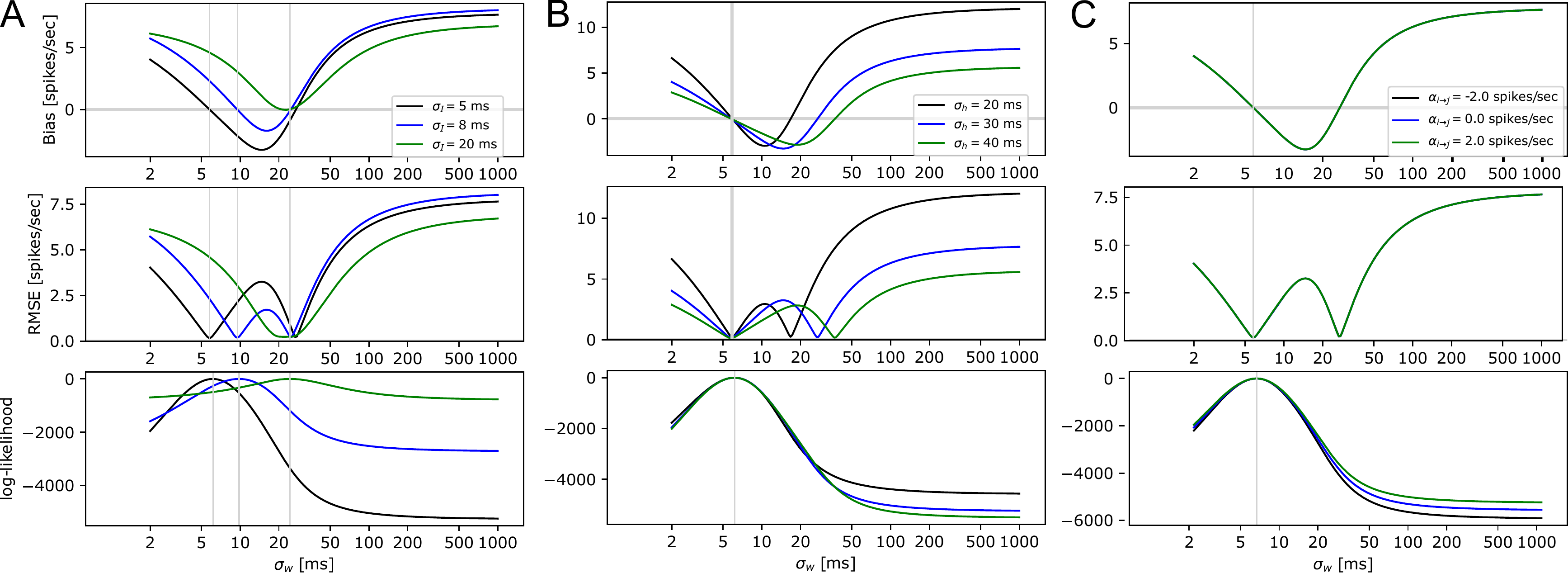}
\end{subfigure}
\caption{Properties of the estimator with fast-changing background.
This figure is analog to Figure \ref{fig:property_tuning} but the time scale $\sigma_I$ of the background activity $f_{i}$ is very small.
The settings are the same as Figure \ref{fig:estimator_properties}, \ref{fig:property_tuning}, and \ref{fig:small_sigma_I_tune_curves} except for different tuning parameters.
Only the analytical RMSE and log-likelihood curves are shown, which match the numerical results very well. Some numerical results have already been shown in Figure \ref{fig:small_sigma_I}.
The log-likelihood functions may have different offsets, we align them by the peak to zero (maximum value across $\sigma_w$).
The vertical lines indicate the MLE.
\textbf{A} Background timescale $\sigma_I$ as the tuning variable.
\textbf{B} Coupling effect timescale $\sigma_h$ as the tuning variable.
\textbf{C} Coupling effect amplitude $\alpha_{i\to j}$ as the tuning variable.
}
\label{fig:small_sigma_I_tune_curves}
\end{figure}

Similar to Figure \ref{fig:property_tuning}, next, we explore how the timescale of the background activity $\sigma_I$, the timescale of coupling effect $\sigma_h$, and the amplitude of impact function $\alpha_{i\to j}$ are related to the above properties when $\sigma_I$ is very small.
When $\sigma_I$ is tuned, it will change the optimal $\sigma_w$.
If $\sigma_I$ is around 20 ms, two local minimum values of the risk curve may merge to one, which agrees with the numerical result in Figure \ref{fig:small_sigma_I}A.
If $\sigma_I < 20$ ms, the root of the bias or the local minimum of the risk corresponding to the MLE will be on the left side.
When $\sigma_h$ is tuned, it will be the right local minimum of the risk or the right root of the bias that will be associated with the impact function timescale, that is opposite to the conclusion in Figure \ref{fig:property_tuning}B.
Similar to Appendix \ref{appendix:timescales}, the timescale $\sigma_h$ or the amplitude $\alpha_{i\to j}$ of the impact function do not affect $\sigma_w$ of the MLE.

\begin{figure}[H]
\centering
\begin{subfigure}{1\textwidth}
\centering
\includegraphics[width=1\linewidth]{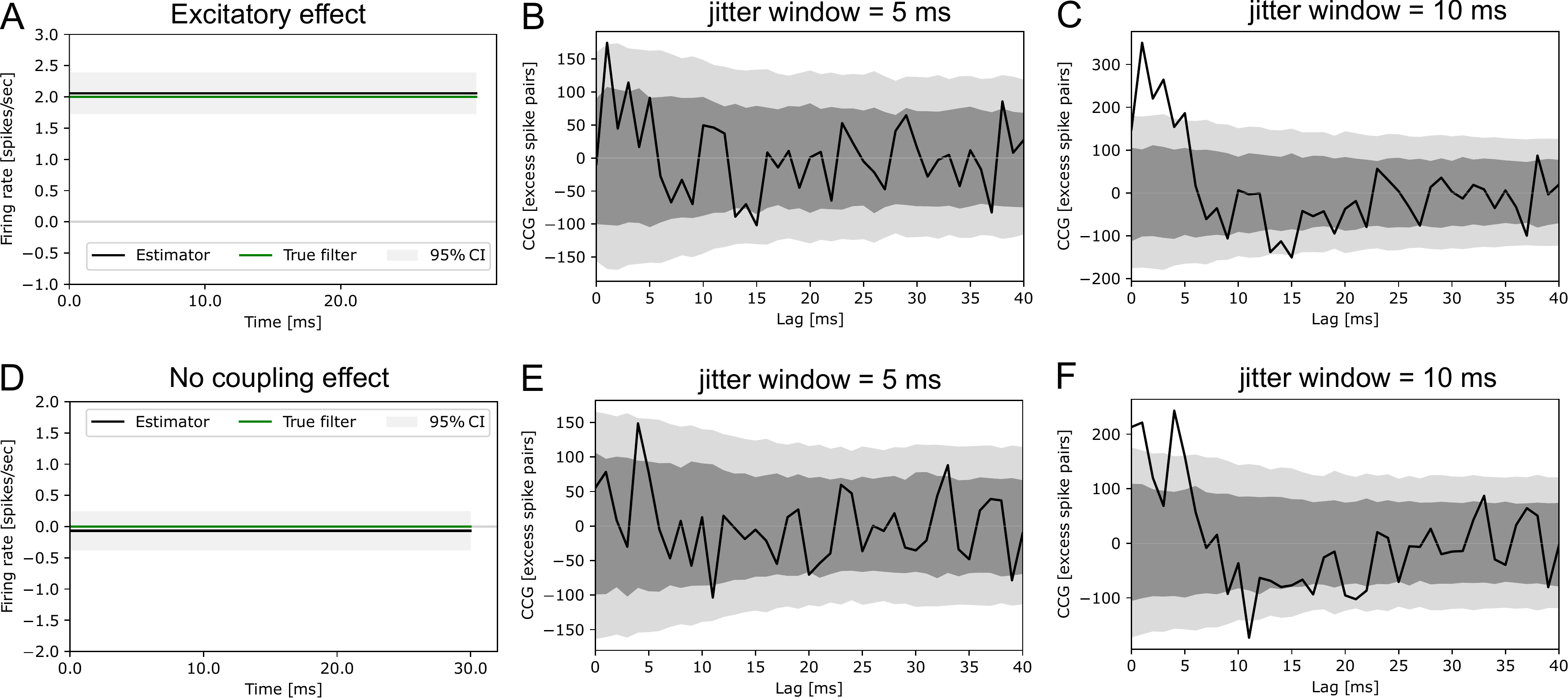}
\end{subfigure}
\caption{A comparison between the estimations of the coupling effect with fast-changing background.
The figure compares the performance of the estimators of the coupling effect. 
The simulation settings are the same as Figure \ref{fig:estimator_properties} except that the timescale of the background $\sigma_I=5$ ms is very small. The timescale of the impact function as in Eq \eqref{eq:square_coupling_filter} is $\sigma_h=30$ ms.
In A, B, and C, the amplitude of the true impact function is $\alpha_{i\to j}=2$ spikes/sec.
In D, E, and F, the amplitude of the true impact function is $\alpha_{i\to j}=0$ spikes/sec.
\textbf{A, D} The estimator of the point process regression. It can accurately estimate the true filter, which is supported by the analysis in Figure \ref{fig:small_sigma_I} and \ref{fig:small_sigma_I_tune_curves}.
\textbf{B, E} Jitter-based CCG. The time bin for the spike train is 1 ms. The jitter window width is set as 5 ms, close to the timescale of the background activity. The dark grey band is pointwise 95\% acceptance band, and the light grey band is simultaneous 95\% acceptance band. The result is acquired from 1000 surrogate jitter samples.
The CCG method detects a small excitatory effect before lag = 5 ms no matter whether the neurons have true coupling effect. 
\textbf{C, F} Similar to B except that the jitter window width is 10 ms.
In both B and C, the jitter-based CCG method can only detect a small effect before 5 ms lag or 7 ms lag. A large part of the coupling effect between 0 to 30 ms is buried under the CI band.
However, such an effect is due to the fast-changing background, but not the neuron-to-neuron coupling effect.
}
\label{fig:small_sigma_I_reg_vs_jitter}
\end{figure}
A significant advantage of our model over the jitter-based model is that the proposed model does not assume the background activity changes slower than the coupling effect, and the model can automatically find the optimal timescale.
The jitter-based method can not avoid such an assumption due to its nature of conditional inference.
The null hypothesis states that the coupling effects do not change faster than the jitter window width. Thus the samples under the null distribution are obtained by randomly jittering the points within the jitter window. If the background changes as fast as the coupling effect, such bootstrapping method can not maintain the temporal structure of the background activity, so it can not split the background artifacts and the coupling effects.
In other words, if the jitter window is set a little larger than the coupling effects, it can not tell whether the detected effect belongs to the background or the point-to-point interaction.
Some other bootstrapping methods have the same issue for exactly the same reason \cite{cowling1996bootstrap}.
Figure \ref{fig:small_sigma_I_reg_vs_jitter} compares the point process regression method and jitter-based CCG method.
The simulation scenario is the same as the basic model in Figure \ref{fig:estimator_properties} except that the timescale of the background activity is very small $\sigma_I=5$ ms. The numerical properties of the estimator have been shown in Figure \ref{fig:small_sigma_I}E-H. 
The true impact function is a square window 
$h_{i\to j}(t) = \alpha_{i\to j} \cdot \mathbb{I}_{[0, \sigma_h]}(t)$.
The timescale of the coupling effect is $\sigma_h=30$ ms.
The true amplitude of the impact function is $\alpha_{i\to j} = 2$ spikes/sec in Figure \ref{fig:small_sigma_I_reg_vs_jitter}A,B,C, and $\alpha_{i\to j} = 0$ spikes/sec in Figure \ref{fig:small_sigma_I_reg_vs_jitter}E,E,F.
In both cases, the regression method can accurately estimate the true estimator, which agrees with the numerical and theoretical results in Figure \ref{fig:small_sigma_I}.
In Figure \ref{fig:small_sigma_I_reg_vs_jitter} B and C, the CCG method with jitter window width = 5 or 10 ms can detect some excitatory effect in a lag range smaller than 5 ms or 10 ms. Nevertheless, it misses the excitatory effect between lag=10 to 30 ms. It is unreasonable to use a larger jitter window, as it will not match the background timescale.
The CCG results are similar to another example in Figure \ref{fig:small_sigma_I_reg_vs_jitter}E, F, where there is no coupling effect. The detected significant data points are totally due to the fast-changing background.
So for the results in Figure \ref{fig:small_sigma_I_reg_vs_jitter}B and C, we can not conclude that the jitter method has removed the background artifacts and the significant effect is caused by the coupling effect.

\subsection{Asymptotic Normality of the estimator } \label{appendix:normality}
In this section, we perform simulations to verify the asymptotic normality property of the estimator empirically.
The dataset is the same as Figure \ref{fig:estimator_properties}.
The true impact function is a square window
$h_{i\to j}(t) = \alpha_{i\to j} \cdot \mathbb{I}_{[0, \sigma_h]}(t)$,
where the amplitude is $\alpha_{i\to j}=2$ spikes/sec, and the timescale is $\sigma_h=30$ ms.
The estimator for the coupling effect $\hat\alpha_{i\to j}$ is \ref{eq:target_likelihood}.
We compare the empirical distribution of the estimators with the theoretical distribution.
As shown by Figure \ref{fig:normality}, the estimator has normal distribution at the optimal selection of $\sigma_w=120$ ms.
If $\sigma_w$ is too small or too large, the empirical distributions will have visible deviations.

\begin{figure}[H]
\centering
\begin{subfigure}{0.7\textwidth}
\centering
\includegraphics[width=1\linewidth]{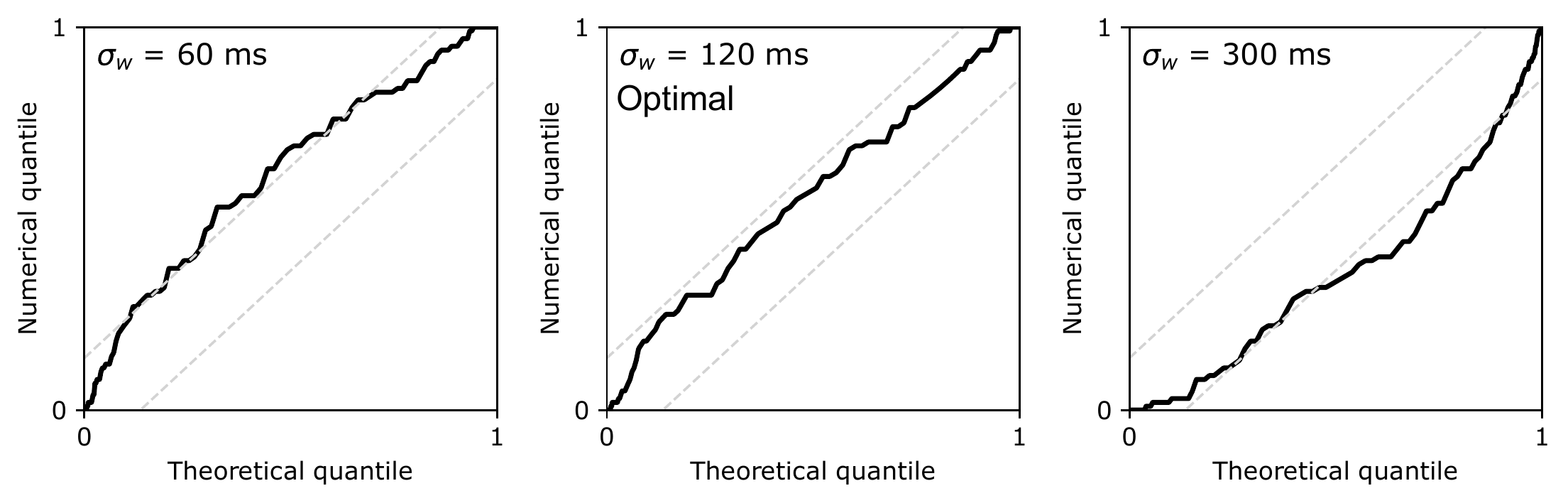}
\end{subfigure}
\caption{Normality of the estimator's distribution.
The dataset is the same as Figure \ref{fig:estimator_properties} including 100 repetitions.
The figure shows the Q-Q plots of the empirical distribution of the estimator against the theoretical normal distribution, shown in the dark curves. 
The straight dashed grey lines are 95\% uniform CI.
In the first row, the empirical distribution matches the theoretical distribution very well at the optimal model ($\sigma_w=120$ ms).
We also evaluate the model at other different smoothing kernel widths. 
If $\sigma_w$ is too small or too large $\sigma_w=60, 300$ ms), the empirical distribution will have a visible deviation from the theoretical distribution.
}
\label{fig:normality}
\end{figure}

\subsection{Selection of impact function length. }  \label{appendix:unmatch_filter_length}
The simulation scenario in the main text in Figure \ref{fig:estimator_properties} and many scenarios in the supplementary sections simplify the impact function estimation using a square window and assume the timescale of the coupling effect $\sigma_h$ in Eq \eqref{eq:square_coupling_filter} is known.
In this section, we show the consequences of unmatched impact function timescale. Because in practice, the timescale of the coupling effect is usually unknown.

We used the same dataset in Figure \ref{fig:estimator_properties}, where the true impact function is a square window. The amplitude $\alpha_{i\to j} = 2$ spikes/sec and the window width is $\sigma_h=30$ ms.
We applied two versions of the regression model to the dataset. Both versions used a square window as the impact function estimator, but one with shorter timescale $\sigma_{h,1} = 20$ ms, the other with longer timescale $\sigma_{h,2}=40$ ms. We present the results in Figure \ref{fig:unmatched_filter_length} in the same way as Figure \ref{fig:estimator_properties}.
\begin{figure}[H]
\centering
\begin{subfigure}{1\textwidth}
\centering
\includegraphics[width=1\linewidth]{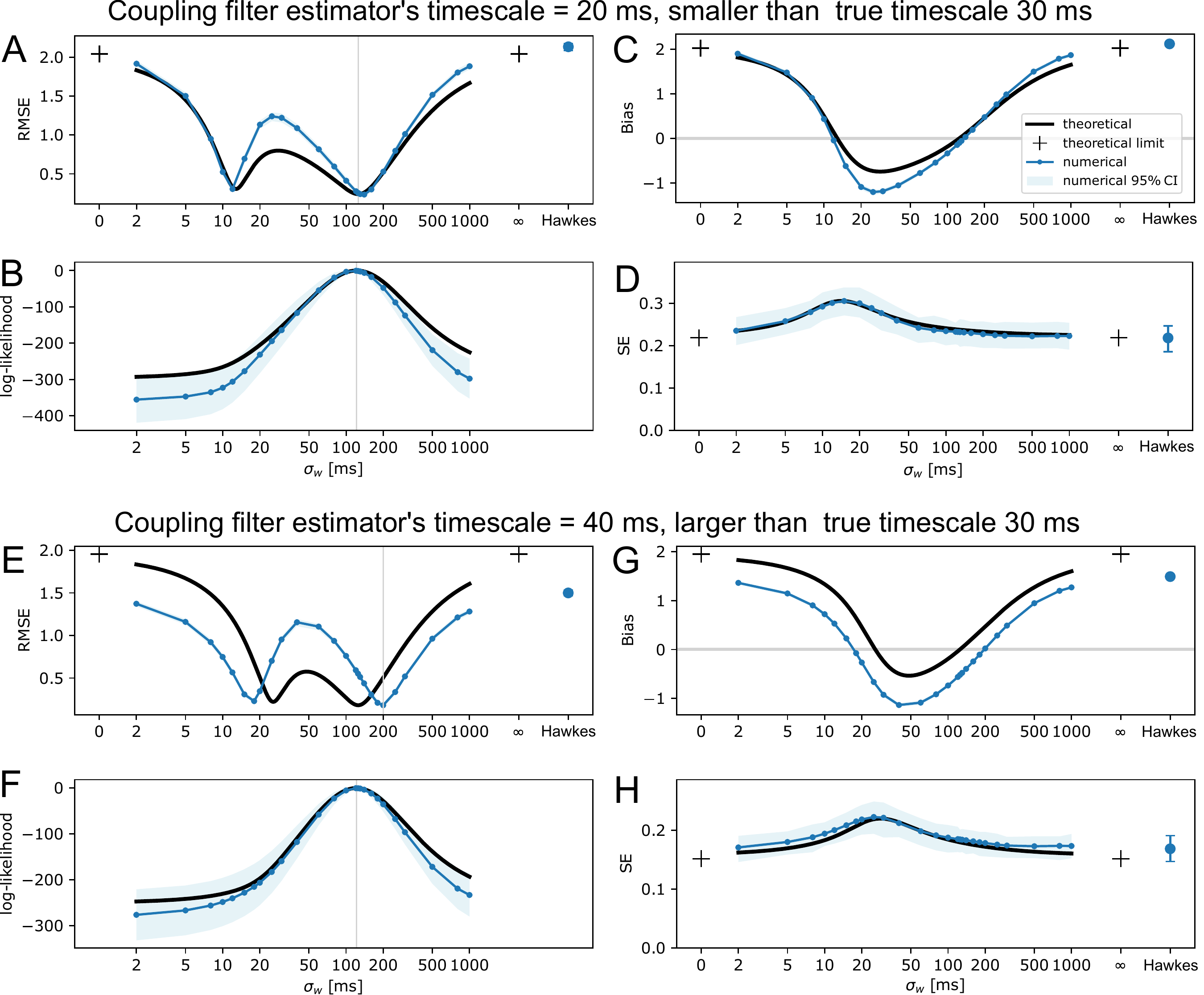}
\end{subfigure}
\caption{Consequences of unmatched impact function timescale.
The dataset is the same as Figure \ref{fig:estimator_properties}, where the true impact function is a square window. The amplitude is $\alpha_{i\to j} = 2$ spikes/sec and the window width is $\sigma_h=30$ ms. We tested the regression model with unmatched impact function width. 
The model in \textbf{A, B, C, D} estimates the impact function using a shorter timescale $\sigma_{h,1} = 20$ ms.
The model in \textbf{E, F, G, H} estimates the impact function using a longer timescale $\sigma_{h,1} = 40$ ms.
As a reference, the dark curves show the theoretical approximation using the basic regression model by setting the impact function timescale as 20 ms in A-D, and 40 ms in E-H.
The rest settings are the same as the simulation.
}
\label{fig:unmatched_filter_length}
\end{figure}
If the impact function is estimated using a shorter timescale ($\sigma_{h,1}=20$ ms) as shown in Figure \ref{fig:unmatched_filter_length}A,B,C,D, the selected model still has the minimum risk indicated by the vertical line.
The dark curves in Figure \ref{fig:unmatched_filter_length}A,B,C,D, show the theoretical approximation of the properties using the basic regression model in \eqref{eq:target_likelihood} by setting the impact function timescale as 20 ms instead, which can be seen as the expected properties of the model.
The absolute values of the bias are larger than expected if $\sigma_w$ is between 10 ms and 120 ms or larger than 200 ms. But the roots of the bias still match the expected position. The SE is not affected by the unmatched timescale. So the optimal selection of $\sigma_w$ does not change.
As a contrast, if the impact function timescale of the estimator ($\sigma_{h,2}=40$ ms) is longer than the truth (30 ms), the consequence is more severe. As shown in Figure \ref{fig:unmatched_filter_length} G, the actual bias is uniformly lower than the expected bias. The SE is not affected. So the consequence is that the selected smoothing kernel width $\sigma_w$ (Figure \ref{fig:unmatched_filter_length} F vertical line) does not match the actual risk minimum (Figure \ref{fig:unmatched_filter_length} E vertical line). 

By combining the results of the two cases, we recommend users select shorter impact function timescale if they are not confident about the impact function timescale, or using non-parametric fitting as in Supplementary \ref{appendix:nonparametric}.

\subsection{Multivariate regression and partial relation } \label{appendix:multivariate_regression_sim}

\begin{figure}[H]
\centering
\begin{subfigure}{0.35\textwidth}
\centering
\includegraphics[width=1\linewidth]{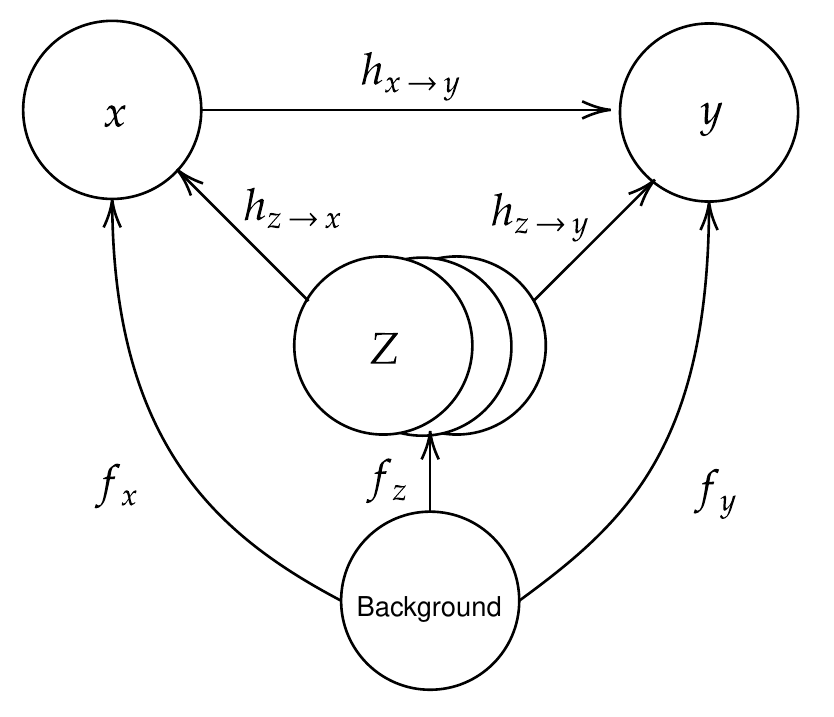}
\end{subfigure}
\caption{Diagram of multivariate point process network driven by background activity.
}
\label{fig:multi_regression}
\end{figure}

Multivariate regression is a natural extension of the basic regression model introduced in the main text.
The diagram is shown in Figure \ref{fig:multi_regression}.
\begin{equation}
\lambda_{y}(t | \mathcal{H}_{t})
= \alpha_y
+ \int_0^t h_{x\to y}(t-\tau) N_{x}(\mathrm{d} \tau)
\underbrace{+ f_{y}(t) 
+ \sum_{z \in Z} \int_0^t h_{z\to y}(t-\tau) N_{z}(\mathrm{d} \tau) 
}_{\tilde{f}_y(t) }
\end{equation}
where besides the interaction between node $x$ and $y$, they are both connected to other nodes denoted by $Z$.
The whole network is also driven by unobserved background activity.
This scenario is motivated by the challenge in practice: only part of the network can be observed with a limited number of nodes; 
Besides interaction across nodes, the network is also driven by other factors, usually not directly observed, see Figure \ref{fig:multi_regression}).
The input of node $y$ includes the coupling effect $h_{x\to y}$ from $x$, or $h_{z\to y}$ from other nodes $z\in Z$, and background influence $f_y$.
The sum input of $h_{z\to y}$ and $f_y$ can be seen as $\tilde{f}_y$.
Similarly, the total input for $x$ is denoted by $\tilde{f}_x$.
Our goal is to estimate $h_{x\to y}$ as the target relation conditioning on both background and $Z$, by properly handling the correlation between $\tilde{f}_x$ and $\tilde{f}_y$.
The multivariate regression problem is reduced to the pairwise bivariate regression problem as the main text.
This case is also inspired by \citep{chen2017nearly}, where authors proposed that the coupling effect in multivariate point process regression problems can be approximated by pairwise cross-correlation very well.
But their method assumes constant baselines and only positive impact function functions.
Next, we demonstrate using simulations to show our model is promising to overcome these limitations.

The simulation scenarios follow the diagrams in Figure \ref{fig:multi_regression}.
The process in $Z$ and $x, y$ are all driven by fluctuating background $f_x=f_y=f_z$ set as the linear Cox process as \ref{eq:linear_cox} in section \ref{appendix:sim_linear_cox},
where the intensity of the center process is $\rho=20$ spikes/sec, and the window function is Gaussian with scale $\sigma_I=100$ ms. 
The constant baseline of all processes is 10 spikes/sec.
The network includes 6 nodes, coupled with square window function $\alpha_{i\to j}\mathbb{I}_{[0,\sigma_h]}(t)$, $\sigma_h=30$ ms as known. The amplitude $\alpha_{i\to j}$ are positive, negative, or zero.
Each simulation has 200 trials with a 5-second duration.
The performance in the main Table \ref{tab:multivariate} is obtained from 100 repetitions.

\section{Some use cases of the model} \label{appendix:use_cases}

\subsection{Non-parametric fitting for the impact function } \label{appendix:nonparametric}
For simplicity, the models presented in the main text and many sections in the appendix use a square window for the impact function. 
This section considers non-parametric fitting for the impact function through splines.
The linear form of the intensity function in main \eqref{eq:target_likelihood} can be easily extended for this purpose, also see Appendix \ref{appendix:jitter_optimization_algo}.
The impact function now is estimated as a linear combination of spline bases as follows,

\begin{equation*}
h_{i\to j}(s) = \beta_{h,1} B_1(s) + ...+ \beta_{h,k} B_k(s)
\end{equation*}
where $B_1,...,B_k$ are spline bases. Define the covariates in the regression,
\begin{equation*}
\phi_{h,1}(t) := \int B_1(t-s) N_i(\mathrm{d}s),...,\;
\phi_{h,k}(t) := \int B_k(t-s) N_i(\mathrm{d}s)
\end{equation*}
The intensity function becomes,
\begin{equation*}
\tilde{\lambda}_{j}(t) =
\beta_j
+ \beta_w \overline{\textbf{s}_i}(t)
+ \beta_{h,1} \phi_{h,1}(t)
+...+ \beta_{h,k} \phi_{h,k}(t)
\end{equation*}
$\overline{\textbf{s}_i}(t)$ is same as the coarsened spike train in main \eqref{eq:mean_coarsen_regressor}.
The coefficients of the impact function $\beta_{h,1},...,\beta_{h,k}$ can still be estimated using the model in \ref{eq:target_likelihood}. The optimization algorithm is in Appendix \ref{appendix:jitter_optimization_algo}. We applied the non-parametric fitting to the dataset in Figure \ref{fig:estimator_properties}. 
The true impact function is a square window 
$h_{i\to j}(t) = \alpha_{i\to j} \cdot \mathbb{I}_{[0, \sigma_h]}(t)$
where the amplitude of the impact function is $\alpha_{i\to j}=2$ spikes/sec, the timescale of the square window is $\sigma_h=30$ ms. 
The impact function is estimated in a lag window between 0 and 50 ms using B-splines with 9 equal-distance knots. 
We evaluate the risk using root-mean-integral-square error (RMISE). The RMISE between the true impact function $h(t)$ and the estimator $\hat h(t)$ is defined as follows. $L_h = 50$ ms is the length of the impact function. 
\begin{equation*}
\mathrm{RMISE}(h, \hat h) := \sqrt{\frac{1}{L_h} \int_0^{L_h} 
    \big(\hat h(t) - h(t)\big)^2 \mathrm{d} t }
\end{equation*}
We evaluate the bias and the standard error of the filter at lag 5, 15, 25 ms. The result is shown in Figure \ref{fig:bspline_sim_results} below.

\begin{figure}[H]
\centering
\begin{subfigure}{1\textwidth}
\centering
\includegraphics[width=1\linewidth]{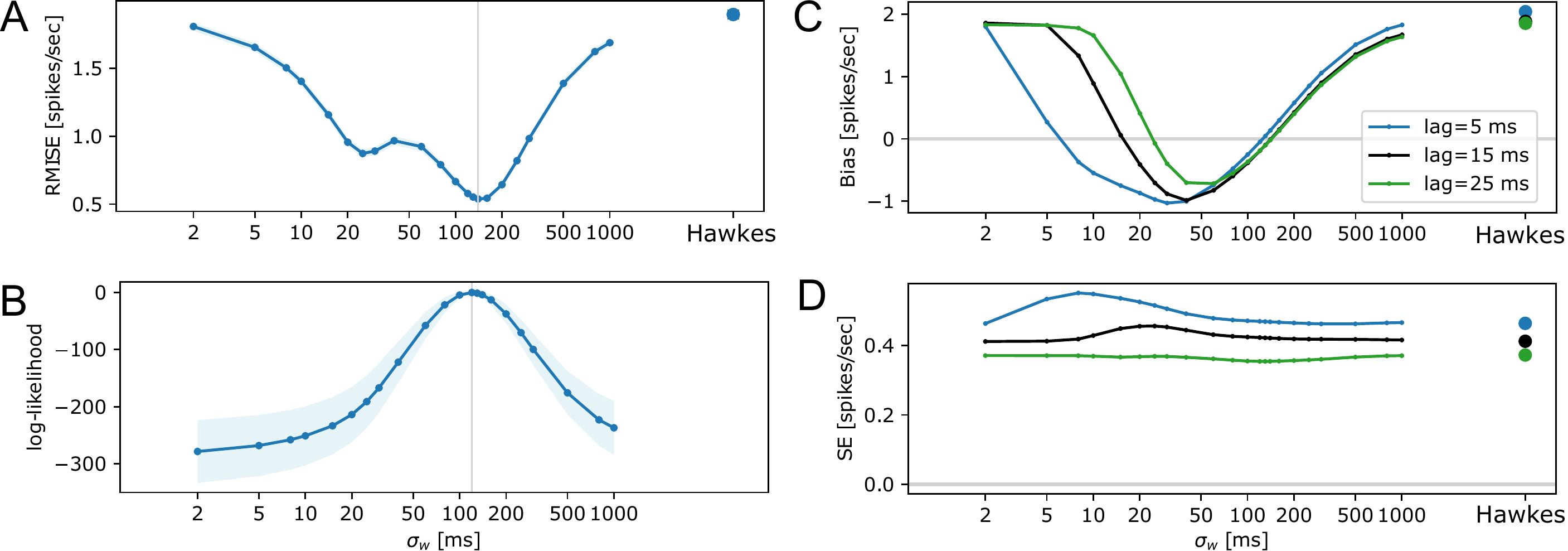}
\end{subfigure}
\caption{Non-parametric fitting for the impact function.
The dataset and the non-parametric estimator are described in the test.
The results are presented in the same way as Figure \ref{fig:estimator_properties}.
\textbf{A} RMISE of the estimated impact function as a function of smoothing kernel width $\sigma_w$ of $W$  in main \eqref{eq:mean_coarsen_regressor}. The vertical line indicates the minimum risk.
\textbf{B} The maximum log-likelihood as function of $\sigma_w$. Since the likelihood
functions may have different offsets, we align them by the peak (the maximum value across $\sigma_w$) to zero, then calculate the mean and pointwise standard deviation. The vertical line indicates the peak of the mean log-likelihood.
\textbf{C} The bias of the estimator is evaluated at lag = 5, 15, 25 ms.
\textbf{D} The standard error of the estimator is evaluated at $h(\mathrm{lag})$, lag = 5, 15, 25 ms.
}
\label{fig:bspline_sim_results}
\end{figure}
The risk curve and the log-likelihood curve are similar to the result in Figure \ref{fig:estimator_properties}. 
The optimal model with minimum risk can be selected by maximizing the likelihood, which is the same as the basic regression scenario in Figure \ref{fig:estimator_properties}.
The difference is that, in the non-parametric fitting, the left local minimum risk has a higher value than the right local minimum. While in the basic fitting case, two local minimum values of the risk curve are close in Figure \ref{fig:estimator_properties}).
This can be explained by decomposing the risk into bias and SE shown in Figure \ref{fig:bspline_sim_results}C and D.
If the smoothing kernel width $\sigma_w$ is around 130 ms, the bias values at different lags of the impact function are nearly the same.
But if $\sigma_w$ is around 20 ms, the bias values at different lags have large divergence: the beginning part of the estimator at lag=5 ms has a negative bias, the middle part at lag=15 ms has around zero bias, and the end part of the estimator at lag=25 ms has a positive bias.
The SE of the estimator at different lags does not change a lot as $\sigma_w$ varies.
So overall, the RMISE has a  much larger value near lag=20 ms than at lag=130 ms.
These properties are further demonstrated in Figure \ref{fig:bspline_coupling_filter}.

\begin{figure}[H]
\centering
\begin{subfigure}{1\textwidth}
\centering
\includegraphics[width=1\linewidth]{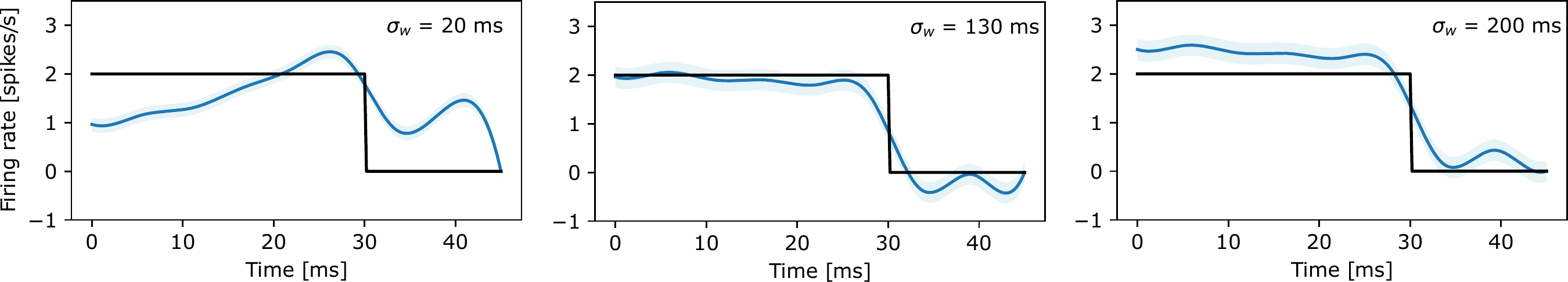}
\end{subfigure}
\caption{Non-parametric fitting for the impact function.
The figure compares the true impact function (dark) and the estimator (blue). 
The light blue band is pointwise 95\% CI.
The impact functions were fitted in the same way as described in Figure \ref{fig:bspline_sim_results}.
This figure picks out some fitted estimators with different smoothing kernel widths $\sigma_w=20, 130, 200$ ms. 
If the smoothing kernel width is too small ($\sigma_w=20$ ms), the bias values of the estimator at different lags have large differences. This matches the bias curves shown in Figure \ref{fig:bspline_sim_results}C. At the beginning part of the estimated impact function around lag=5 ms, the bias is negative, and at the end part around lag=25 ms, the bias is positive. 
If the smoothing kernel width is selected optimally ($\sigma_w=130$ ms), the fitted impact function matches the true filter very well.
If the smoothing kernel width is too wide ($\sigma_w=200$ ms), the whole estimated impact function has uniform positive bias at different lags. This agrees with Figure \ref{fig:bspline_sim_results}C that multiple bias curves with different lags beyond $\sigma_w=130$ ms are very close. 
}
\label{fig:bspline_coupling_filter}
\end{figure}

\subsection{Hypothesis testing example } \label{appendix:hypothesis_testing}
The regression model can be adopted for hypothesis testing problems.
The simulation scenario in this section is similar to the case in Figure \ref{fig:estimator_properties}.
The background activity is a cluster point process in \eqref{eq:linear_cox}, and the impact function is a square window
$h_{i\to j}(t) = \alpha_{i\to j}\cdot\mathbb{I}_{[0, \sigma_h]}(t)$ in \eqref{eq:square_coupling_filter}.
Each simulation dataset only has 10 trials. We reduce the sample size to make the tasks more difficult, and the performances of different estimators will be more distinguishable.
The length of the trial is 5 seconds, the time scale of the background activity is $\sigma_I = 100$ ms. The intensity of the center process is $\rho=30$ spikes/sec.
The baselines of two neurons are $\alpha_i = \alpha_j = 10$ spikes/sec.
The impact function is estimated using a square window with known timescale $\sigma_h = 30$ ms. 
The amplitude of the impact function is $\alpha_{i\to j} = 0$ spikes/sec in the null cases without coupling effects. 
We include three true positive scenarios with impact function amplitudes $\alpha_{i\to j} = 2, -2, 1$ spikes/sec respectively. 
The dataset generating and the model fitting procedure was repeated for 100 times.

Consider the null hypothesis 
\begin{equation*}
H_0:\; \hat\alpha_{i\to j} = 0    
\end{equation*}
$\hat\alpha_{i\to j}$ is the estimator for $\alpha_{i\to j}$.
The inference method is a direct application of the properties of the estimator.
The smooth kernel is $\sigma_w = 125$ ms, which is chosen by maximizing the likelihood.
$\hat\alpha_{i\to j}$ has asymptotic normal distribution (see details in Appendix \ref{appendix:normality} and Appendix \ref{appendix:theoretical_derivations}), so the p-value can be easily calculated accordingly.
The alternative method is jitter-based CCG, where the time bin width is 2 ms, the jitter window width is 100 ms. The CCG with shorter or longer jitter window width, for example 60 ms or 140 ms, gives similar results, so the figures are not shown. 
The p-value of the method is obtained by considering the multiple testing across all time lags between 0 and 30 ms, which is the same as the true impact function length. 
The calculation detail is in \cite{amarasingham2012conditional} supplementary document.

\begin{figure}[H]
\centering
\begin{subfigure}{0.8\textwidth}
\centering
\includegraphics[width=1\linewidth]{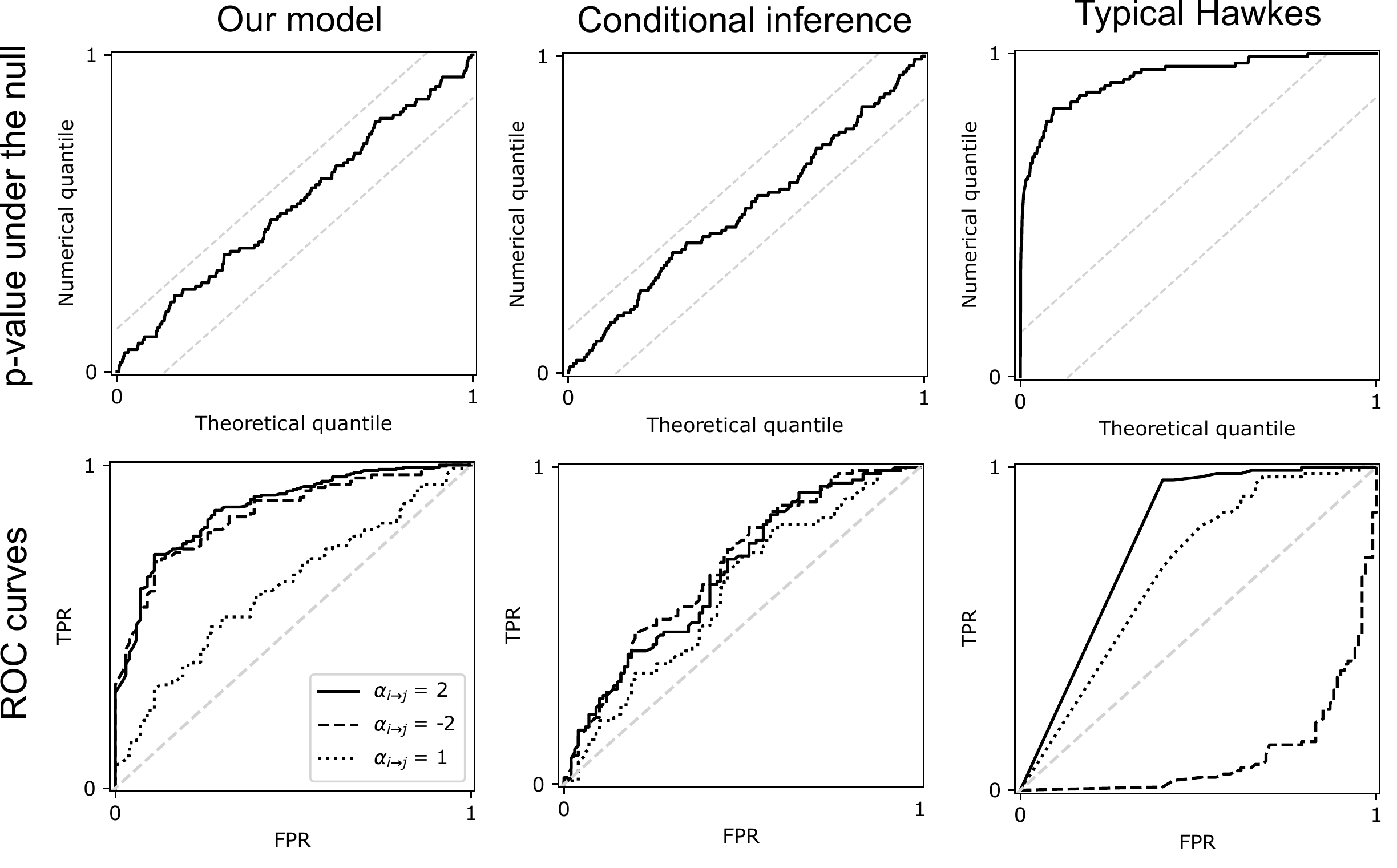}
\end{subfigure}
\caption{Hypothesis testing examples.
3 models are compared: our point process regression model (first column), jitter-based cross-correlation (CCG) method (second column), and the typical Hawkes model (third column).
The simulation details are in the text.
The first row shows the Q-Q plot between p-value distribution under the null (numerical quantile along the y-axis) and the uniform distribution (theoretical quantile along the x-axis). 
The dashed line is the 95\% CI. 
The second row shows the results of ROC analysis with the false positive rate (FPR) along the x-axis, and the true positive rate (TPR) along the y-axis. 
The score of an outcome is the p-value of the hypothesis test.
}
\label{fig:hypothesis_testing}
\end{figure}

Figure \ref{fig:hypothesis_testing} first row verifies if the p-value distribution under the null is uniform. 
Both our method and conditional inference yield proper p-value distributions.
As typical Hawkes model suffers large error due to background artifacts, the p-value under the null is ill.
Figure \ref{fig:hypothesis_testing} second row presents the ROC analysis. 
The score of a test outcome is the p-value.
The our method has better performance when the amplitude of the coupling effect is $\alpha_{i\to j} = 2, -2$ spikes/sec. Neither methods has satisfactory performance if the coupling effect is as weak as $\alpha_{i\to j}=1$ spikes/sec.
Usually, the jitter-based method focuses on pointwise statistic at a specific time lag, which can ignore the connection between adjacent time lags.
We think the power of the CCG method can be improved by considering the time lag dependency and designing the multiple hypothesis test more carefully, but it is not the main interest of this paper.
Typical Hawkes model has positive bias in this setting. 
When the coupling effect is inhibitory $\alpha_{i\to j}=-2$, some of the outcome will be detected as excitatory instead of inhibitory, so the ROC curve is under the diagonal.
When the coupling effect is excitatory $\alpha_{i\to j}=1, 2$, the model will be over confident, so large p-values in the outcome are missing, for example, when $\alpha_{i\to j}=2$ the smallest FPR observed is around 0.4.

\subsection{Simple Bayesian model } \label{appendix:bayesian}
The regression model in main \eqref{eq:target_likelihood} is probabilistic, so it can be easily adapted for Bayesian inference. In this section, we present some simple Bayesian models where the scale $\sigma_w$ of the smoothing kernel $W$ in \eqref{eq:mean_coarsen_regressor} can be treated as a random variable.
We want to investigate how incorporating the uncertainty of the smoothing kernel width affects the estimation of the impact function, and how the variance of the background timescale affects the uncertainty of the smoothing kernel width.
The posterior of the impact function coefficients obtained using the sampling-based method can also verify the Normality property in the regression method when the sample size dominates the prior.

We consider two Bayesian models below and the basic point process regression model.
The likelihood of the model is the same as the main \eqref{eq:target_likelihood}. 
The impact function is estimated using a square window, same as \eqref{eq:square_coupling_filter}.
We choose non-informative flat priors for all the variables.
As the sample size is large, the posterior does not heavily rely on the prior. 
Model 2 is similar to the regression model, where the kernel width $\sigma_w$ is selected using the same way as the regression model and held as fixed. 
Model 2 and the regression model are expected to have similar results.
In Model 1, $\sigma_w$ is a random variable.
We performed the estimation on two datasets: 
The first dataset is the same as the example in Figure \ref{fig:estimator_properties} (details are in the main text);
The second one is the same as the scenario in Appendix \ref{appendix:varying_sigma_I}, where the timescale of the background activity $\sigma_I$ randomly changes in a continuous range between 80 ms and 140 ms.
The true impact function is a square window 
$h_{i\to j}(t) = \alpha_{i\to j} \cdot \mathbb{I}_{[0, \sigma_h]}(t)$
where the amplitude of the impact function is $\alpha_{i\to j}=2$ spikes/sec. The timescale of the square window is $\sigma_h=30$ ms. 
We used the Hastings-Metropolis method for the model inference, which was a Monte Carlo Markov Chain (MCMC) sampler. The posterior was acquired by drawing 1000 samples.
The model was initialized using the basic point process regression method main \eqref{eq:target_likelihood}.
The basic regression model approximates the estimator's distribution using Normal distribution; the mean is the MLE $\hat\alpha_{i\to j}$, and the standard error is from the Fishier information. 

\textbf{Model 1:}
\begin{align*}
\beta_j, \beta_w, \alpha_{i\to j}, \sigma_w \propto&\; 1 \\
p(\beta_j, \beta_w, \alpha_{i\to j}, \sigma_w | \textbf{s}_j, \textbf{s}_i)
\propto& p(\textbf{s}_j | \textbf{s}_i, \beta_j, \beta_w, \alpha_{i\to j}, \sigma_w)
\end{align*}
where $p(\textbf{s}_j | \textbf{s}_i, \beta_j, \beta_w, \alpha_{i\to j}, \sigma_w)$ is the likelihood function of the point process similar to main \eqref{eq:target_likelihood}.
$\sigma_w$ is a variable of the model.

\textbf{Model 2:}
$\sigma_w$ is fixed and the parameter selection follows the regression method.
\begin{align*}
\beta_j, \beta_w, \alpha_{i\to j} \propto&\; 1 \\
p(\beta_j, \beta_w, \alpha_{i\to j} | \textbf{s}_j, \textbf{s}_i)
\propto& p(\textbf{s}_j | \textbf{s}_i, \beta_j, \beta_w, \alpha_{i\to j}, \sigma_w)
\end{align*}

\begin{figure}[H]
\centering
\begin{subfigure}{1\textwidth}
\centering
\includegraphics[width=1\linewidth]{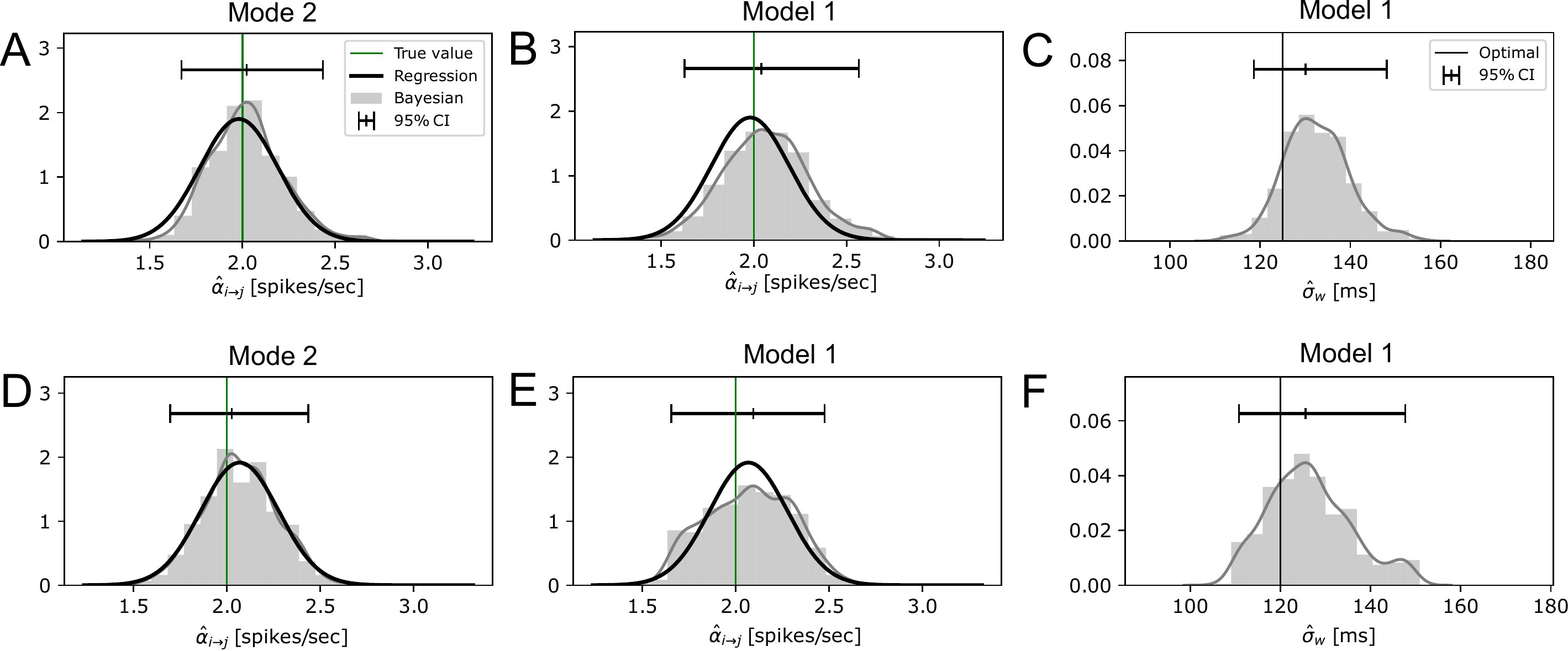}
\end{subfigure}
\caption{Applications of Bayesian model 1 and model 2 to two datasets.
The figure shows the posterior of the estimated impact function amplitude $\hat\alpha_{i\to j}$ of Bayesian models 1 and 2 (grey histograms in A,B,D,E) and the posterior of the smoothing kernel scale $\hat\sigma_w$ of model 1 (grey histograms in C, F).
The solid dark curves are the Normal distributions of $\hat\alpha_{i\to j}$ obtained using the point process regression in main \eqref{eq:target_likelihood}.
\textbf{A, B, C}
Applications of Bayesian models 1, 2, and the basic regression model to the dataset in Figure \eqref{fig:estimator_properties}. 
The timescale of the background activity $f_{i}, f_{j}$ is fixed at $\sigma_I=100$ ms. Details of the dataset description is in the main text.
In C, the mode of the kernel scale is 130 ms, and the 95\% CI is [119, 148] ms.
The optimal kernel scale $\sigma_w$ selected by the regression model is 125 ms.
\textbf{D, E, F}
Applications of Bayesian models 1, 2, and the basic regression model to the dataset in section \ref{appendix:varying_sigma_I}, where the timescale of the shared activity $\sigma_I$ varies from 80 ms to 140 ms.
In F, the mode of the kernel scale is 126 ms, and the 95\% CI is [111, 148] ms.
The optimal kernel scale $\sigma_w$ selected by the regression model is 120 ms.
}
\label{fig:bayesian}
\end{figure}

Figure \ref{fig:bayesian} presents the estimated impact function coefficient of Bayesian models 1, 2, and the basic regression model using two simulation datasets.
One dataset has fixed shared activity timescale $\sigma_I=100$ ms, shown in plot A,B,C;
the other dataset has a time-varying timescale in a continuous range between 80 ms and 140 ms, shown in plots D,E,F.
In A and D, the posterior distributions of $\hat\alpha_{i\to j}$ (grey histogram) and the estimated distribution of the regression model (solid curves) are very close. This can be a side proof of the result in Appendix \ref{appendix:normality}, that $\hat\alpha_{i\to j}$ has asymptotic Normal distribution. 
In both datasets (first row and second row), by comparing the results between model 1 and model 2, incorporating the uncertainty of the smoothing kernel scale $\sigma_w$ does not change the posterior of $\hat\alpha_{i\to j}$ too much.
As shown in Figure \ref{fig:property_tuning}, the selected smoothing kernel scale $\sigma_w$ is related to the timescale of the shared activity $\sigma_I$.
If $\sigma_I$ increases, the corresponding selected $\sigma_w$ will increase by around the same amount.
By comparing Figure \ref{fig:bayesian}C and F, the CI width does not change a lot (from 29 ms in C to 37 ms in F) when the timescale of the background switched from a fixed value $\sigma_I=100$ ms to a randomly varying value in $[80, 140]$ ms.
So the uncertainty of the $\sigma_w$ does not directly reflect the variance of the shared activity timescale.

\section{Derivations related to main equation \ref{eq:bias_formula}} \label{appendix:theoretical_derivations}


In this section, we provide derivations of the estimator properties, including bias, standard error, risk.
All of these are based on the second-order stationary background similar to \eqref{eq:linear_cox} in main \ref{subsec:simulation_study}.

\begin{definition} \label{lemma:second_order_def}
Let $\xi$ be a second-order stationary random measure on $\mathcal{X}$. It satisfies two properties \cite{daley2003introduction}:
\begin{enumerate}
\item The first-moment measure is
$M_{\xi,1}(A) := \mathbb{E} \xi(A)$,
where $A$ is a set in the Borel $\sigma$-field of $\mathcal{X}$, satisfies,
\begin{equation}
M_{\xi,1}(\mathrm{d}x) = \bar\lambda \mathrm{d}x
\end{equation}
where $\bar\lambda$ is a constant, which is called the mean density.

\item The second-moment measure is
$M_{\xi,2}(A \times B) := \mathbb{E} \xi(A) \xi(B)$.
$A, B$ are sets in the Borel $\sigma$-field of $\mathcal{X}$.
The second-moment can be expressed as the product of a Lebesgue component $\mathrm{d}x$ and a reduced measure, say $\breve{M}_{\xi,2}$. $\breve{m}_{\xi,2}$ is the density of the reduced measure 
$\breve{M}_{\xi,2}(\mathrm{d}u) = \breve{m}_{\xi,2}(u) \mathrm{d}u$. The following equation holds,
\begin{equation}
\int_\mathcal{X} \int_\mathcal{X} f(s,t) M_{\xi,2}(\mathrm{d}s \times \mathrm{d}t)
= \int_\mathcal{X} \int_\mathcal{X} f(x, x + u) \mathrm{d}x \cdot \breve{m}_{\xi,2}(u) \mathrm{d}u
\end{equation}
\end{enumerate}
\end{definition}
The reduced second-moment measure $\breve{M}_{\xi,2}$ is symmetric, positive, positive-definite and translation-bounded. Details can be found in \citep[proposition 8.1.I, 8.1.II]{daley2003introduction}.
The mean corrected process is $\tilde\xi(A):= \xi(A) - \bar\lambda \ell(A)$.
Similarly, the reduced covariance measure and its density can be defined as,
\begin{gather} \label{eq:covariance_intensity_def}
\breve{C}_{\xi}(\mathrm{d}u) 
:= \breve{M}_{\tilde\xi,2}(\mathrm{d}u)
= \breve{M}_{\xi,2}(\mathrm{d}u) - \bar\lambda^2\mathrm{d}u \\
\breve{c}_{\xi}(u)
= \breve{m}_{\xi,2}(u) - \bar\lambda^2
\end{gather}

\begin{lemma} \label{lemma:bias}
Assuming $f_{i} = f_{j}$ is second-order stationary.
The intensities of two coupling processes are
$\lambda_{j}(t)
= \alpha_j
+ f_{j}(t) 
+ \int_0^t h_{i\to j}(t-\tau) N_{i}(\mathrm{d} \tau) $
and 
$\lambda_{i}(t) 
= \alpha_i + f_{i}(t)$.
The impact function has format 
$h_{i\to j}(\tau) = \alpha_{i\to j} h(\tau)$ where only the amplitude needs to be fitted,
the bias of the estimator $\hat\alpha_{i\to j}$ using model \eqref{eq:target_likelihood} is approximated as,
\begin{equation}
\mathrm{bias}(\hat\alpha_{i\to j}) \approx 
\frac{ 
\langle W, W \rangle_{\breve{c}_{N}} 
    \langle h, \mathbf{1} \rangle_{\breve{c}_{\Lambda}} 
- \langle h, W \rangle_{\breve{c}_{N}} 
    \langle W, \mathbf{1} \rangle_{\breve{c}_{\Lambda}}
}{
\langle W, W \rangle_{\breve{c}_{N}} 
\langle h, h^- \rangle_{\breve{c}_{N}}
- \langle W, \mathbf{1} \rangle_{\breve{c}_{\Lambda}}^2
}
\end{equation}
where $\breve{c}_{N}$ is the reduced second-order moment measure intensity of spike count measure $N_i(\cdot)$;
$\breve{c}_{\Lambda}$ is the reduced second-order moment measure intensity of the intensity measure $\Lambda_i(\cdot)$ as described in  \ref{lemma:second_order_def}. 
$\ast$ denotes the convolution, $\mathbf{1}$ is a constant function,
$h^-(\tau)=h(-\tau)$.
The operator between two functions $g_1, g_2$ is defined as
\begin{equation}
\langle g_1, g_2 \rangle_{\breve{c}} 
:= \int [g_1\ast g_2](s) \breve{c}(\mathrm{d}s) \mathrm{d}s
\end{equation}

Additionally, if the background activity $f_{i}$ follows the cluster process in \eqref{eq:linear_cox} with parameters $\sigma_I, \rho$, and the impact function has form in \eqref{eq:square_coupling_filter} with parameters $\sigma_h$,
then we have 
$\mathrm{bias}(\hat\alpha_{i\to j}) \approx \mathrm{Numerator} / \mathrm{Denominator}$ as follows,
\begin{equation} \label{eq:bias_linear_cox}
\begin{aligned}
\mathrm{Numerator} 
=& \left( \frac{\rho}{2 \sqrt{\pi}\sqrt{\sigma_w^2 + \sigma_I^2}}
        + \frac{\bar\lambda_i}{2\sqrt{\pi}\sigma_w} \right) \cdot 
    \left(\frac{\rho }{2 }
        \mathrm{erf}\left(\frac{\sigma_h }{2\sigma_I}\right)
        \right) \\
&-\left(\frac{\rho }{2}
    \mathrm{erf}\left(\frac{\sigma_h }{2\sqrt{\sigma_w^2/2 + \sigma_I^2}}\right)
    + \frac{\bar\lambda_i}{2}
    \mathrm{erf}\left(\frac{\sigma_h }{\sqrt{2} \sigma_w}\right)
    \right) \cdot 
    \left(\frac{\rho}{2\sqrt{\pi} \sqrt{\sigma_w^2/2 + \sigma_I^2} }
    \right) \\
\mathrm{Denominator} 
=& \left( \frac{\rho}{2 \sqrt{\pi}\sqrt{\sigma_w^2 + \sigma_I^2}}
        + \frac{\bar\lambda_i}{2\sqrt{\pi}\sigma_w} \right) \cdot \\
&\qquad  \left(\rho \left[\sigma_h
        \mathrm{erf}\left(\frac{\sigma_h}{2\sigma_I}\right) 
        -\frac{2\sigma_I}{\sqrt{\pi}}\left(1-e^{-\frac{\sigma_h^2}{4\sigma_I^2}}\right)  \right]
        + \bar\lambda_i \sigma_h\right) \\
&- \left(\frac{\rho}{2}
    \mathrm{erf}\left(\frac{\sigma_h }{2\sqrt{\sigma_w^2/2 + \sigma_I^2}}\right)
    + \frac{\bar\lambda_i}{2}
    \mathrm{erf}\left(\frac{\sigma_h }{\sqrt{2} \sigma_w}\right)
    \right)^2 
\end{aligned}
\end{equation}
$\bar\lambda_i = \mathbb{E} [N_i(\mathrm{d}t)/\mathrm{d}t] = \alpha_i+\rho$.
$\mathrm{erf}(x)=\frac{2}{\sqrt{\pi}}\int_0^x e^{-t^2}\mathrm{d}t$.
\end{lemma}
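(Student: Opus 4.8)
The plan is to read the minimization \eqref{eq:target_likelihood} as a misspecified point-process MLE and locate its population minimizer through the stationary first-order conditions, exactly as in the proofs of Propositions \ref{prop1}--\ref{prop3}. First I would drop the clipping $(\cdot)_+$ (assuming the fitted intensity stays positive on the relevant range) and write the model intensity in the linear form $\tilde\lambda_j=\beta_j+\beta_w\phi_w+\alpha_{i\to j}\phi_h$, where $\phi_w(t)=\int W(t-s)\,dN_i(s)$ and $\phi_h(t)=\int h(t-s)\,dN_i(s)$ are the two co-filtered versions of $N_i$ from \eqref{eq:regression_bases}. Writing the true intensity as $\lambda_j=\alpha_j+f_j+\alpha_{i\to j}^{\mathrm{true}}\phi_h$ and combining \eqref{ogte} with the identity $\mathbb E[dN_j\mid\mathcal H_t]=\lambda_j\,dt$, the population first-order condition $\partial_{\boldsymbol\beta}\mathbb E[\tilde\ell]=0$ becomes $\mathbb E\int_0^T\Psi(t)\big(1-\lambda_j(t)/\tilde\lambda_j(t)\big)\,dt=0$ with $\Psi=(1,\phi_w,\phi_h)$.

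The core approximation, which is what turns this exact identity into the stated ``$\approx$'', is to replace the reweighting factor $1/\tilde\lambda_j$ by a first-order expansion about the mean intensity, in the spirit of Lemma \ref{ratioe}; this collapses the score equations to the \emph{unweighted} normal equations $\mathbb E\int_0^T\Psi_k(t)\big(\lambda_j(t)-\tilde\lambda_j(t)\big)\,dt=0$ for $k=1,2,3$. Under the second-order stationarity of $f_i=f_j$ each time integral of a product becomes a stationary (co)variance, and the $k=1$ (constant) equation lets me center all regressors, leaving a $2\times2$ linear system in the unknowns $\beta_w$ and $\delta\alpha:=\alpha_{i\to j}^{\mathrm{true}}-\hat\alpha_{i\to j}$ whose coefficients are $\mathrm{Var}(\phi_w)$, $\mathrm{Var}(\phi_h)$, $\mathrm{Cov}(\phi_w,\phi_h)$ and the cross terms $\mathrm{Cov}(\phi_w,f_j)$, $\mathrm{Cov}(\phi_h,f_j)$. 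Solving for $\delta\alpha$ by Cramer's rule gives the displayed ratio, with denominator equal to the Gram determinant $\langle W,W\rangle_{\breve c_N}\langle h,h^-\rangle_{\breve c_N}-\langle h,W\rangle_{\breve c_N}^2$ of the two centered regressors.

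The step I expect to be the conceptual crux is the correct bookkeeping of which reduced covariance enters each coefficient. The two basis--basis covariances are built from the \emph{same} process $N_i$, so they reduce to autocorrelations integrated against the full reduced count density $\breve c_N$, which carries the Poisson diagonal $(\rho+\alpha_i)\delta$; thus $\mathrm{Var}(\phi_w)=\int[W\ast W](u)\,\breve c_N(u)\,du=\langle W,W\rangle_{\breve c_N}$, and likewise $\mathrm{Cov}(\phi_w,\phi_h)=\langle h,W\rangle_{\breve c_N}$ and $\mathrm{Var}(\phi_h)=\langle h,h^-\rangle_{\breve c_N}$. By contrast, the cross terms with the background must go through the cluster/Cox structure: conditionally on the driving intensity the thinning noise of $N_i$ is independent of $f_i$, so $\mathrm{Cov}(dN_i(s),f_j(t))=\breve c_\Lambda(t-s)\,ds$ carries \emph{no} diagonal, giving $\mathrm{Cov}(\phi_w,f_j)=\int W(u)\breve c_\Lambda(u)\,du=\langle W,\mathbf 1\rangle_{\breve c_\Lambda}$ and $\mathrm{Cov}(\phi_h,f_j)=\langle h,\mathbf 1\rangle_{\breve c_\Lambda}$. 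Getting this delta-versus-no-delta distinction right is exactly what separates the $\breve c_N$ and $\breve c_\Lambda$ weightings; substituting these into the Cramer ratio produces the abstract formula.

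Finally, for the explicit cluster-process expression \eqref{eq:bias_linear_cox} I would specialize to $W=\phi_{\sigma_w}$ Gaussian, $h=\mathbb I_{[0,\sigma_h]}$, $\breve c_\Lambda=\rho\,\phi_{\sqrt2\sigma_I}$ and $\breve c_N=\rho\,\phi_{\sqrt2\sigma_I}+\bar\lambda_i\delta$, and evaluate each inner product using that Gaussians convolve to Gaussians with added variances, that $\int_0^{\sigma_h}\phi_{\sigma}=\tfrac12\mathrm{erf}\big(\sigma_h/(\sqrt2\sigma)\big)$, and that the box autocorrelation $[h\ast h^-]$ is the triangle $(\sigma_h-|u|)_+$. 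This is routine but lengthy; the only mild subtleties are the Gaussian-against-box integrals producing the $\mathrm{erf}$ terms and the integral $\int_0^{\sigma_h}u\,\phi_{\sqrt2\sigma_I}(u)\,du$ yielding the $1-e^{-\sigma_h^2/(4\sigma_I^2)}$ factor in $\langle h,h^-\rangle_{\breve c_N}$. Each of the six inner products then matches a factor in \eqref{eq:bias_linear_cox} term by term. I would flag that the reweighting approximation of the second paragraph is the sole place where the identity is only asymptotic, and that its accuracy is precisely what the close agreement with the simulations in Figure \ref{fig:estimator_properties} is meant to corroborate.
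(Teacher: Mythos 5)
Your proposal is correct and, in substance, follows the same route as the paper's own proof: both reduce the misspecified MLE to a two-by-two system of normal equations in the centered regressors $\varphi_w,\varphi_h$, solve by Cramer's rule, convert the resulting empirical inner products into integrals against reduced covariance densities, and then evaluate the Gaussian/box integrals to get the $\mathrm{erf}$ expressions (the paper's Lemmas \ref{lemma:inner_prod}, \ref{lemma:linear_cox_reduced_moment} and \ref{lemma:Sxx}). The only real difference is bookkeeping: the paper linearizes at the sample level---a Laplace/quadratic expansion of $\tilde\ell_j$ around the MLE, the Hessian approximation $H\approx\frac{1}{\bar\lambda_j}\int_0^T\Psi\Psi^T\,\mathrm{d}s$, and a determination of the linear term $\boldsymbol b$ from two constrained suboptimal solutions (one with $\hat\alpha^B_{i\to j}=0$, one with $\hat\beta^C_w=0$), giving $\boldsymbol\beta_{\mathrm{MLE}}\approx-H^{-1}\boldsymbol b$---whereas you linearize the population score $\mathbb E\int_0^T\Psi(t)\big(1-\lambda_j(t)/\tilde\lambda_j(t)\big)\,\mathrm{d}t=0$ directly; the underlying first-order expansion of $1/\tilde\lambda_j$ about the mean intensity is the same, and your framing connects more transparently to the machinery of Propositions \ref{prop1}--\ref{prop3}. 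Your delta-versus-no-delta bookkeeping (count--count covariances weighted by $\breve c_N$ with its atom $\bar\lambda_i\delta$, count--background covariances by $\breve c_\Lambda$ without it, justified by conditioning on the driving intensity) is exactly the content of the paper's Lemma \ref{lemma:inner_prod}. One point worth flagging: your denominator, the Gram determinant $\langle W,W\rangle_{\breve c_N}\langle h,h^-\rangle_{\breve c_N}-\langle h,W\rangle_{\breve c_N}^2$, is what the paper's proof actually derives ($S_{ww}S_{hh}-S_{hw}^2$) and what the explicit formula \eqref{eq:bias_linear_cox} encodes---the squared term there is $S_{hw}/T$, including the $\tfrac{\bar\lambda_i}{2}\mathrm{erf}\big(\tfrac{\sigma_h}{\sqrt2\sigma_w}\big)$ contribution from the atom---whereas the abstract display in the lemma statement writes $\langle W,\mathbf 1\rangle_{\breve c_\Lambda}^2$ instead; that is a slip in the statement, so you should not ``correct'' your Gram determinant to match it.
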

\begin{proof}
The bases of the regression model \eqref{eq:regression_intensity} include a constant, the nuisance variable $\overline{\textbf{s}_i} = W\ast \textbf{s}_i$, and the impact function term $h_{i\to j}\ast \textbf{s}_i$.
The target is,
\begin{equation}
\tilde{\ell} :=
\underbrace{
- \int_0^T \log \tilde{\lambda}_{j}(s) \mathrm{d}N_j
+ \int_0^T \tilde{\lambda}_{j}(s) \mathrm{d} s
    }_{\tilde{\ell}_j }
\underbrace{
- \int_0^T \log \tilde{\lambda}_{i}(s) \mathrm{d}N_i
+ \int_0^T \tilde{\lambda}_{i}(s) \mathrm{d} s
    }_{\tilde{\ell}_i }
\end{equation}
where $\tilde\lambda_j$ is rewritten as,
\begin{equation*}
\tilde{\lambda}_{j}(s) = 
\beta_j
+ \beta_w \varphi_w(s)
+ \alpha_{i\to j} \varphi_h(s)
\end{equation*}
$\varphi_w, \varphi_h$ are mean-subtracted bases,
\begin{align*}
\varphi_w(s) :=& \int W(s-t) (N_i(\mathrm{d}t)- N_i(T)/T \mathrm{d}t) \\
\varphi_h(s) :=& \int h_{i\to j}(s-t) (N_i(\mathrm{d}t)- N_i(T)/T \mathrm{d}t)
\end{align*}
where $N_i(T)/T \to \mathbb{E} [N_i(\mathrm{d}t)/\mathrm{d}t]$ as $T\to \infty$.
When estimating the filter $h_{i\to j}$, minimizing the total negative log-likelihood $\tilde{\ell}$ is equivalent to minimizing the negative log-likelihood $\tilde{\ell}_j$, which can be approximated using the Laplace method with coefficients $H$ and $\boldsymbol{b}$,
\begin{equation}
\begin{aligned}
\tilde{\ell}_j (\boldsymbol\beta )
- \tilde{\ell}_j (\boldsymbol\beta_{\mathrm{MLE}} )
=& \frac{1}{2} (\boldsymbol\beta - \boldsymbol\beta_{\mathrm{MLE}})^T 
\frac{\partial^2 \tilde{\ell}_j}{\partial \beta_{\mathrm{MLE}} \partial \beta_{\mathrm{MLE}}^T} (\boldsymbol\beta - \boldsymbol\beta_{\mathrm{MLE}})
    + \left(\frac{\partial \tilde{\ell}_j}{\partial \beta_{\mathrm{MLE}}}\right)^T
    (\boldsymbol\beta - \boldsymbol\beta_{\mathrm{MLE}})
    + o(\|\boldsymbol\beta - \boldsymbol\beta_{\mathrm{MLE}} \|^2) \\
=& \boldsymbol\beta^T H \boldsymbol\beta
    + \boldsymbol{b}^T \boldsymbol\beta + \text{const}
\label{eq:log_likelihood_quad_approx}
\end{aligned}
\end{equation}
$H$ is the Hessian matrix can be obtained from second-order derivative, analytical form of $\boldsymbol{b}$ needs further approximation.
\begin{gather}
\frac{\partial \tilde{\ell}_j }{\partial\boldsymbol\beta}  
\approx H \boldsymbol\beta + \boldsymbol{b}, \quad
H = \frac{\partial^2 \tilde{\ell}_j }{\partial\boldsymbol\beta \partial\boldsymbol\beta^T}
\end{gather}
The MLE thus can be expressed as 
$\boldsymbol\beta_{\mathrm{MLE}} \approx - H^{-1} \boldsymbol{b}$.

Define the following shorthands,
\begin{gather} \label{eq:Sxx_def}
S_{ww} := \langle \varphi_w, \varphi_w \rangle, \quad
S_{hh} := \langle \varphi_h, \varphi_h \rangle, \quad
S_{hw} = S_{wh} := \langle \varphi_w, \varphi_h \rangle
S_{w\lambda} := \langle \varphi_w, \lambda_i \rangle, \quad
S_{h\lambda} := \langle \varphi_h, \lambda_i \rangle, \quad
\end{gather}
$\langle \cdot,\cdot\rangle$ denotes the inner product between two functions on interval $[0, T]$. Lemma \ref{lemma:Sxx} will show the analytical forms of these inner products in the special case with $f_i$ being the linear Cox process as in \eqref{eq:linear_cox}.

\begin{align*}
H =&
\frac{\partial^2 \tilde{\ell}_j}{\partial \beta \partial \beta^T}
= \int_0^T \frac{\Psi(s) \Psi(s)^T}{\tilde\lambda_j(s)^2} N_j(\mathrm{d} s)
\approx \mathbb{E}_{N_j} \left[
\int_0^T \frac{\Psi(s) \Psi(s)^T}{\tilde\lambda_j(s)^2} N_j(\mathrm{d} s)
\middle| N_i \right] \\
=& \int_0^T \frac{\Psi(s) \Psi(s)^T}{\tilde\lambda_j(s)^2} \lambda_j(s) \mathrm{d}s
\approx \frac{1}{\bar\lambda_j} \int_0^T \Psi(s) \Psi(s)^T \mathrm{d}s
\end{align*}
$\Psi(s) = (\mathbf{1}, \varphi_w, \varphi_h)^T$ is the vector of two bases.
$1/T \langle \varphi_h, \mathbf{1} \rangle \to 0$ as $T\to \infty$.
The parameter $\textbf{b}$ in \eqref{eq:log_likelihood_quad_approx} can be solved using two special (suboptimal) solutions with $\hat{\boldsymbol\beta}^B$ at the conditions
\begin{gather*} 
\hat\alpha_{i\to j}^B = 0, \quad
\frac{\partial \tilde{\ell}_j} {\partial \beta_w} = 0, \quad
\frac{\partial \tilde{\ell}_j} {\partial \beta_j^B} = 0
\end{gather*}
and $\hat{\boldsymbol\beta}^C$ at conditions
\begin{gather*}
\hat\beta^C_{w} = 0, \quad
\frac{\partial \tilde{\ell}_j } {\partial \alpha_{i\to j}^C} = 0, \quad
\frac{\partial \tilde{\ell}_j} {\partial \beta_j^C} = 0
\end{gather*}
Solution $\hat\beta_w^B$ corresponds to the model
$\tilde\lambda_j = \beta_j + \beta_w\varphi_w$ without the impact function term at the condition $\hat\alpha_{i\to j}^B=0$.
\begin{align*}
&0= \frac{\partial \tilde{\ell}_j }{\partial \beta_w }
=-\int_0^T \frac{\varphi_w(s)}{\tilde\lambda_{j}(s)}\mathrm{d}N_j(s)
+ \int_0^T \varphi_w(s) \mathrm{d}s \\
=& -\int_0^T\varphi_w(s) \frac{1}{\bar\lambda_{j} + \hat\beta_w^B \varphi_w(s)} 
    \mathrm{d}N_j(s)
= - \frac{1}{\bar\lambda_j}
    \int_0^T\varphi_w(s) \frac{1}{1 + \frac{\hat\beta_w^B}{\bar\lambda_{j}}\varphi_w(s)} 
    \mathrm{d}N_j(s) \\
=& - \frac{1}{\bar\lambda_j}
    \int_0^T\varphi_w(s) \left(1 - \frac{\hat\beta_w^B}{\bar\lambda_j} \varphi_w(s) 
    \right) \mathrm{d}N_j(s)
    + o\left( 
    \frac{\hat\beta_w^B}{\bar\lambda_j^2}
    \int_0^T\varphi_w(s) \varphi_w(s) \mathrm{d}N_j(s) \right) \\
\approx& \mathbb{E}\left[ 
    - \frac{1}{\bar\lambda_j}
    \int_0^T\varphi_w(s) \left(1 - \frac{\hat\beta_w^B}{\bar\lambda_j} \varphi_w(s) 
    \right) \mathrm{d}N_j(s)
    \middle | N_i
\right] \\
=& - \frac{1}{\bar\lambda_j}
    \int_0^T\varphi_w(s) \left(1 - \frac{\hat\beta_w^B}{\bar\lambda_j} \varphi_w(s) 
    \right)\lambda_j(s)  \mathrm{d}s
\end{align*}
Then we can derive the $\hat\beta_w^B$,
\begin{align*}
\hat\beta_w^B \approx& \bar\lambda_j \frac{\langle \varphi_w, \lambda_j \rangle }
    {\langle \varphi_w^2, \lambda_j \rangle }
\approx \bar\lambda_j \frac{\langle \varphi_w, \lambda_j \rangle }
    {\langle \varphi_w^2, \bar\lambda_j \rangle }
= \frac{\langle \varphi_w, \lambda_j \rangle }
    {\langle \varphi_w, \varphi_w \rangle }
\end{align*}
Similarly, we have
\begin{equation*}
\hat\alpha_{i\to j}^C \approx \bar\lambda_j \frac{\langle \varphi_h, \lambda_j \rangle }
    {\langle \varphi_h^2, \lambda_j \rangle }
\approx \frac{\langle \varphi_h, \lambda_j \rangle }
    {\langle \varphi_h, \varphi_h \rangle }
\end{equation*}
The MLE then is,
\begin{equation}
\hat\beta \approx - H^{-1} \boldsymbol{b}
\end{equation}
\begin{equation*}
H \approx \frac{1}{\bar\lambda_j} \left( \begin{array}{ccc}
        S_{ww} & S_{wh} & 0\\
        S_{hw} & S_{hh} & 0 \\
        0 & 0& \hat \beta_j^2 T
    \end{array}\right), \quad
\boldsymbol{b} \approx - \frac{1}{\bar\lambda_j}
\left(\begin{array}{cc}
        \langle \varphi_w, \lambda_j \rangle \\
        \langle \varphi_h, \lambda_j \rangle \\
        T \bar\lambda_j^3
    \end{array} \right)
\end{equation*}
where $\bar\lambda_j = \mathbb{E}\lambda_j 
= \bar\lambda_i + \alpha_{i\to j} \sigma_h$.
So we have the estimator $\hat\alpha_{i\to j}$,
\begin{align*}
\hat\alpha_{i\to j} 
\approx& \frac{S_{ww}\cdot \langle \varphi_h, \lambda_j \rangle 
    - S_{hw}\cdot \langle \varphi_w, \lambda_j \rangle }
{S_{ww}S_{hh} - S_{wh}^2} \\
=& \frac{S_{ww}\cdot \langle \varphi_h, \alpha_j + f_{i} + \alpha_{i\to j} \varphi_h \rangle 
    - S_{hw}\cdot \langle \varphi_w, \alpha_j + f_{i} + \alpha_{i\to j} \varphi_h \rangle }
{S_{ww}S_{hh} - S_{wh}^2} \\
=& \frac{S_{ww}\cdot \langle \varphi_h, f_{i} \rangle 
    - S_{hw}\cdot \langle \varphi_w, f_{i} \rangle }
{S_{ww}S_{hh} - S_{wh}^2}
+ \alpha_{i\to j} \cdot
\frac{S_{ww}\cdot \langle \varphi_h, \varphi_h \rangle 
    - S_{hw}\cdot \langle \varphi_w, \varphi_h \rangle }
{S_{ww}S_{hh} - S_{wh}^2} \\
=& \frac{S_{ww}\cdot \langle \varphi_h, \alpha_i + f_{i} \rangle 
    - S_{hw}\cdot \langle \varphi_w, \alpha_i + f_{i} \rangle }
{S_{ww}S_{hh} - S_{wh}^2}
+ \alpha_{i\to j} \\
\approx& \alpha_{i\to j}
+ \frac{ S_{ww}\langle \varphi_h, \lambda_i \rangle
    -S_{hw}\langle \varphi_w, \lambda_i \rangle
}{S_{ww}S_{hh} - S_{hw}^2}
\end{align*}
So the bias of the estimator is approximately,
\begin{equation}
\mathrm{bias}(\hat\alpha_{i\to j}) 
\approx 
\frac{ S_{ww} S_{h\lambda}
    -S_{hw}S_{w\lambda}
}{S_{ww}S_{hh} - S_{hw}^2}
\end{equation}
Lemma \ref{lemma:Sxx} shows the derivation of the inner products
$S_{ww}, S_{hh},S_{hw},S_{w\lambda},S_{h\lambda}$, which lead to the equations for the linear Cox background in \eqref{eq:bias_linear_cox}.
\end{proof}

\begin{corollary} \label{corollary:bias_hawkes}
If the regression model \eqref{eq:regression_intensity} does not include the nuisance variable, which becomes a typical Hawkes process, then the bias of the estimator is the following with similar derivation.
\begin{equation}
\mathrm{bias}(\hat\alpha_{i\to j}) 
\approx 
\frac{S_{h\lambda}}{S_{hh}}
\approx
\frac{
    \frac{\rho }{2 }
    \mathrm{erf}\left(\frac{\sigma_h }{2\sigma_I}\right)
}{\rho \left[\sigma_h
    \mathrm{erf}\left(\frac{\sigma_h}{2\sigma_I}\right) 
    -\frac{2\sigma_I}{\sqrt{\pi}}\left(1-e^{-\frac{\sigma_h^2}{4\sigma_I^2}}\right)  \right]
    + \bar\lambda_i \sigma_h
}
\end{equation}
\end{corollary}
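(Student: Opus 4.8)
The plan is to specialize the Laplace-approximation argument used in the proof of Lemma~\ref{lemma:bias} to the reduced regression model in which the nuisance regressor $\varphi_w$ is dropped, so that the intensity is parametrized only by a constant and the impact-function covariate,
\begin{equation*}
\tilde{\lambda}_{j}(s) = \beta_j + \alpha_{i\to j}\,\varphi_h(s),
\qquad
\varphi_h(s) := \int h_{i\to j}(s-t)\bigl(N_i(\mathrm{d}t) - N_i(T)/T\,\mathrm{d}t\bigr).
\end{equation*}
Because the constant $\beta_j$ already absorbs the sample mean of $\int h\,\mathrm{d}N_i$, this reduced model spans exactly the same function space as a standard Hawkes intensity $\beta_j + \alpha_{i\to j}\int h(t-s)\,\mathrm{d}N_i(s)$, so estimating $\alpha_{i\to j}$ here is precisely estimating the cross-impact under a typical Hawkes process.

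First I would observe that the resulting MLE is nothing new: it is exactly the suboptimal solution $\hat\alpha^C_{i\to j}$ already produced in the proof of Lemma~\ref{lemma:bias} under the condition $\hat\beta^C_w = 0$. Reusing the same quadratic approximation, the Hessian now collapses to the $2\times 2$ block in $(\beta_j,\alpha_{i\to j})$; the mean-subtraction of $\varphi_h$ makes the constant decouple (its inner product with $\varphi_h$ vanishes as $T\to\infty$), and one obtains
\begin{equation*}
\hat\alpha_{i\to j}
\approx \frac{\langle \varphi_h, \lambda_j\rangle}{\langle \varphi_h, \varphi_h\rangle}
= \frac{\langle \varphi_h, \lambda_j\rangle}{S_{hh}}.
\end{equation*}
Next I would insert the decomposition $\lambda_j = \alpha_j + f_i + \alpha_{i\to j}\varphi_h$ (recall $f_j = f_i$). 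The term $\alpha_j$ drops because $\langle \varphi_h, \mathbf{1}\rangle = 0$ by construction, and since adding the constant $\alpha_i$ to $f_i$ leaves the inner product unchanged we have $\langle \varphi_h, f_i\rangle = \langle \varphi_h, \alpha_i + f_i\rangle = \langle \varphi_h, \lambda_i\rangle = S_{h\lambda}$. Hence $\langle \varphi_h, \lambda_j\rangle = S_{h\lambda} + \alpha_{i\to j}S_{hh}$, which gives $\hat\alpha_{i\to j} \approx \alpha_{i\to j} + S_{h\lambda}/S_{hh}$ and therefore $\mathrm{bias}(\hat\alpha_{i\to j}) \approx S_{h\lambda}/S_{hh}$, the first claimed equality.

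For the explicit error-function expression I would simply substitute the closed forms of $S_{h\lambda}$ and $S_{hh}$ from Lemma~\ref{lemma:Sxx}. These are exactly the factors already appearing in \eqref{eq:bias_linear_cox}: $S_{h\lambda} = \tfrac{\rho}{2}\,\mathrm{erf}(\sigma_h/(2\sigma_I))$ (the factor multiplying $S_{ww}$ in the numerator), and $S_{hh} = \rho\bigl[\sigma_h\,\mathrm{erf}(\sigma_h/(2\sigma_I)) - \tfrac{2\sigma_I}{\sqrt{\pi}}(1 - e^{-\sigma_h^2/(4\sigma_I^2)})\bigr] + \bar\lambda_i\sigma_h$ (the factor multiplying $S_{ww}$ in the denominator), with $\bar\lambda_i = \alpha_i + \rho$. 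Dividing yields the stated formula. The only point requiring care is the decoupling of the constant and the vanishing of $\langle \varphi_h, \mathbf{1}\rangle/T$ in the large-$T$ limit, but this is identical to the step already justified for Lemma~\ref{lemma:bias}; I expect no genuinely new obstacle, since the corollary is a clean one-regressor specialization that reuses the inner-product computations of Lemma~\ref{lemma:Sxx} verbatim.
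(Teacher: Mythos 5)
Your proposal is correct and follows essentially the same route as the paper: the paper proves this corollary by the same specialization of the Laplace-approximation argument of Lemma~\ref{lemma:bias} (indeed, your key observation that the no-nuisance MLE is exactly the suboptimal solution $\hat\alpha^C_{i\to j}$ with $\hat\beta^C_w=0$ is already embedded in that proof), followed by substituting $S_{h\lambda}$ and $S_{hh}$ from Lemma~\ref{lemma:Sxx}. The only cosmetic remark is that Lemma~\ref{lemma:Sxx} gives the $\tfrac{1}{T}$-normalized inner products, but since the bias is a ratio the factors of $T$ cancel, exactly as you use them.
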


\begin{corollary} \label{corollary:bias_sharp_W}
When the smoothing kernel becomes infinitely narrow, the bias in \eqref{eq:bias_linear_cox} satisfies
\begin{equation}
\lim_{\sigma_w\to 0} \mathrm{bias}(\hat\alpha_{i\to j}) 
\to 
\frac{
    \frac{\rho }{2 }
    \mathrm{erf}\left(\frac{\sigma_h }{2\sigma_I}\right)
}{\rho \left[\sigma_h
    \mathrm{erf}\left(\frac{\sigma_h}{2\sigma_I}\right) 
    -\frac{2\sigma_I}{\sqrt{\pi}}\left(1-e^{-\frac{\sigma_h^2}{4\sigma_I^2}}\right)  \right]
    + \bar\lambda_i \sigma_h
}
\end{equation}
\end{corollary}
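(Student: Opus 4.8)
The plan is to treat the expression in \eqref{eq:bias_linear_cox} as a ratio of two quantities that share a single common factor diverging as $\sigma_w\to 0$, and to read off the limit by isolating that factor. First I would abbreviate the building blocks: write $A(\sigma_w) := \frac{\rho}{2\sqrt{\pi}\sqrt{\sigma_w^2+\sigma_I^2}} + \frac{\bar\lambda_i}{2\sqrt{\pi}\sigma_w}$ for the leading factor appearing in both the numerator and the denominator, set $B := \frac{\rho}{2}\mathrm{erf}(\sigma_h/(2\sigma_I))$ and $E := \rho\big[\sigma_h\,\mathrm{erf}(\sigma_h/(2\sigma_I)) - \frac{2\sigma_I}{\sqrt{\pi}}(1-e^{-\sigma_h^2/(4\sigma_I^2)})\big] + \bar\lambda_i\sigma_h$ for the two $\sigma_w$-independent factors that $A$ multiplies, and write $C(\sigma_w), D(\sigma_w)$ for the remaining cross terms, so that $\mathrm{Numerator} = A B - C D$ and $\mathrm{Denominator} = A E - C^2$.

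The key observation is that only $A$ diverges. As $\sigma_w\to 0$ the summand $\frac{\bar\lambda_i}{2\sqrt{\pi}\sigma_w}$ forces $A(\sigma_w)\to\infty$ at rate $1/\sigma_w$, while every other block stays finite: $B$ and $E$ do not depend on $\sigma_w$; $D(\sigma_w)\to \frac{\rho}{2\sqrt{\pi}\sigma_I}$; and $C(\sigma_w)$ converges because $\mathrm{erf}(\sigma_h/(\sqrt{2}\sigma_w))\to 1$ and $\mathrm{erf}(\sigma_h/(2\sqrt{\sigma_w^2/2+\sigma_I^2}))\to \mathrm{erf}(\sigma_h/(2\sigma_I))$, so that $C(\sigma_w)\to \frac{\rho}{2}\mathrm{erf}(\sigma_h/(2\sigma_I)) + \frac{\bar\lambda_i}{2}$. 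I would record these elementary $\mathrm{erf}$ asymptotics at the outset, since they are the only analytic input required.

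Dividing numerator and denominator by $A(\sigma_w)$ then gives
\[
\frac{AB - CD}{AE - C^2} = \frac{B - CD/A}{E - C^2/A} \xrightarrow{\ \sigma_w\to 0\ } \frac{B}{E},
\]
the crucial point being that the finite quantities $CD$ and $C^2$ are suppressed by the diverging $A$. This step requires $E\neq 0$ so that the limiting denominator is nonzero; but $E = S_{hh}$ is a self-inner-product of the impact kernel under the spike-count covariance, hence strictly positive, with positivity inherited from the inner-product computations (Lemma \ref{lemma:Sxx}) and the positive-definiteness of the reduced second-moment measure noted after Definition \ref{lemma:second_order_def}. Substituting the explicit forms of $B$ and $E$ recovers exactly the claimed right-hand side, which is precisely the standard-Hawkes bias $S_{h\lambda}/S_{hh}$ of Corollary \ref{corollary:bias_hawkes}. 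I would close with the interpretation: an infinitely narrow smoothing kernel carries no information about the fluctuating background, so the nuisance regressor $\overline{\textbf{s}_i}$ degenerates and the model collapses to a plain Hawkes process, reproducing its bias.

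The main obstacle here is bookkeeping rather than mathematics: one must confirm that $A$ is the unique diverging factor and that it enters the numerator and denominator with the same power of $1/\sigma_w$, so that it cancels cleanly in the ratio. Once the $\mathrm{erf}$ limits above and the positivity $E>0$ are in hand, no further estimation is needed.
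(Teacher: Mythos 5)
Your proof is correct and takes essentially the same route as the paper: the paper's own justification is only the terse remark that the approximation is similar to Lemma \ref{lemma:bias} and that the three corollaries give the same result, and your computation --- isolating the single diverging factor $A(\sigma_w)\sim \bar\lambda_i/(2\sqrt{\pi}\sigma_w)$, dividing it out of numerator and denominator, and landing on $B/E = S_{h\lambda}/S_{hh}$, the standard-Hawkes bias of Corollary \ref{corollary:bias_hawkes} --- is exactly the limit-taking the paper leaves implicit. There are no gaps; your verification that $E>0$ (guaranteed already by the $\bar\lambda_i\sigma_h$ delta-atom contribution to $S_{hh}$) covers the only point the paper glosses over.
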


\begin{corollary} \label{corollary:bias_wide_W}
When the smoothing kernel becomes infinitely wide, the bias in \eqref{eq:bias_linear_cox} satisfies
\begin{equation}
\lim_{\sigma_w\to \infty} \mathrm{bias}(\hat\alpha_{i\to j}) 
\to 
\frac{
    \frac{\rho }{2 }
    \mathrm{erf}\left(\frac{\sigma_h }{2\sigma_I}\right)
}{\rho \left[\sigma_h
    \mathrm{erf}\left(\frac{\sigma_h}{2\sigma_I}\right) 
    -\frac{2\sigma_I}{\sqrt{\pi}}\left(1-e^{-\frac{\sigma_h^2}{4\sigma_I^2}}\right)  \right]
    + \bar\lambda_i \sigma_h
}
\end{equation}
\end{corollary}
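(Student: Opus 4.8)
The plan is to perform a direct asymptotic analysis of the ratio in \eqref{eq:bias_linear_cox} as $\sigma_w \to \infty$, isolating the leading order in $1/\sigma_w$. First I would name the three $\sigma_w$-dependent building blocks and the two $\sigma_w$-independent constants appearing in \eqref{eq:bias_linear_cox}. Writing $A(\sigma_w) := \frac{\rho}{2\sqrt{\pi}\sqrt{\sigma_w^2+\sigma_I^2}} + \frac{\bar\lambda_i}{2\sqrt{\pi}\sigma_w}$, $B(\sigma_w) := \frac{\rho}{2}\mathrm{erf}(\frac{\sigma_h}{2\sqrt{\sigma_w^2/2+\sigma_I^2}}) + \frac{\bar\lambda_i}{2}\mathrm{erf}(\frac{\sigma_h}{\sqrt{2}\sigma_w})$ and $C(\sigma_w) := \frac{\rho}{2\sqrt{\pi}\sqrt{\sigma_w^2/2+\sigma_I^2}}$, together with the constants $D_1 := \frac{\rho}{2}\mathrm{erf}(\frac{\sigma_h}{2\sigma_I})$ and $D_2 := \rho[\sigma_h\mathrm{erf}(\frac{\sigma_h}{2\sigma_I}) - \frac{2\sigma_I}{\sqrt{\pi}}(1-e^{-\sigma_h^2/4\sigma_I^2})] + \bar\lambda_i\sigma_h$, the bias reads $\mathrm{bias}(\hat\alpha_{i\to j}) \approx (A D_1 - BC)/(A D_2 - B^2)$.

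Next I would pin down the decay rates. Using $\sqrt{\sigma_w^2 + \sigma_I^2} = \sigma_w + O(1/\sigma_w)$ gives $A(\sigma_w) = \frac{\rho+\bar\lambda_i}{2\sqrt{\pi}\sigma_w} + o(1/\sigma_w)$, so $A = \Theta(1/\sigma_w)$, and identically $C = \Theta(1/\sigma_w)$. For $B$ the key input is the Taylor expansion $\mathrm{erf}(x) = \frac{2}{\sqrt{\pi}}x + O(x^3)$ as $x\to 0$: since both erf arguments vanish like $1/\sigma_w$, each erf term is $\Theta(1/\sigma_w)$, hence $B = \Theta(1/\sigma_w)$. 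It follows that the cross terms satisfy $BC = O(1/\sigma_w^2)$ and $B^2 = O(1/\sigma_w^2)$, whereas $A D_1 = \Theta(1/\sigma_w)$ and $A D_2 = \Theta(1/\sigma_w)$.

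The final step divides numerator and denominator by $A(\sigma_w)$, yielding $(D_1 - BC/A)/(D_2 - B^2/A)$. Because $BC/A = O(1/\sigma_w)$ and $B^2/A = O(1/\sigma_w)$ both tend to zero, the limit collapses to $D_1/D_2$, which is precisely the claimed expression (and coincides with the no-nuisance bias of Corollary \ref{corollary:bias_hawkes}, reflecting that an infinitely wide smoothing kernel captures no background fluctuation). I would also record the one-line check that the limit is well defined: $D_1 > 0$ trivially, and $D_2 > 0$ because, writing $u = \sigma_h/(2\sigma_I)$, the bracketed factor equals $2\sigma_I[u\,\mathrm{erf}(u) - \frac{1}{\sqrt{\pi}}(1-e^{-u^2})]$, whose derivative in $u$ is $\mathrm{erf}(u)\ge 0$; thus it is nonnegative, leaving $D_2 \ge \bar\lambda_i\sigma_h > 0$.

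The main obstacle is establishing that $B(\sigma_w) = \Theta(1/\sigma_w)$ and therefore that the cross terms $BC$ and $B^2$ are genuinely one order smaller in $1/\sigma_w$ than the surviving terms $A D_1$ and $A D_2$. Everything hinges on the small-argument erf expansion and on $D_1, D_2$ being nonzero constants; once the $1/\sigma_w$ versus $1/\sigma_w^2$ separation is in hand, the cancellation of the cross terms after normalization is automatic. The argument is structurally identical to that for Corollary \ref{corollary:bias_sharp_W}, except there $A\to\infty$ while $B, C$ stay bounded, so the same division by $A$ kills the cross terms.
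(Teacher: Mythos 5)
Your proof is correct and follows what the paper intends: the corollary is just the $\sigma_w\to\infty$ limit of the closed form \eqref{eq:bias_linear_cox}, and your order-counting ($A, B, C = \Theta(1/\sigma_w)$, so the cross terms $BC$, $B^2$ are one order smaller than $AD_1$, $AD_2$) gives exactly the limit $D_1/D_2$, which matches the paper's remark that the limiting bias coincides with the no-nuisance bias of Corollary \ref{corollary:bias_hawkes}. In fact the paper offers only a one-line justification (``the approximation is similar to Lemma \ref{lemma:bias}''), so your write-up, including the positivity check $D_2 \ge \bar\lambda_i\sigma_h > 0$ ensuring the limit is well defined, supplies details the paper omits.
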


The approximation is similar to Lemma \ref{lemma:bias}. Note that the three corollaries have the same results.

\begin{lemma} \label{lemma:var}
Same as the settings in Lemma \ref{lemma:bias},
the variance of the estimator is approximated as,
\begin{equation}
\mathrm{Var}(\hat\alpha_{i\to j}) \approx 
\frac{\bar\lambda_j}{T}
\frac{ \langle W, W \rangle_{\breve{c}_{N}} 
}{
\langle W, W \rangle_{\breve{c}_{N}} 
\langle h, h^- \rangle_{\breve{c}_{N}}
- \langle W, \mathbf{1} \rangle_{\breve{c}_{\Lambda}}^2
}
\end{equation}
If $f_{i}$ follows the cluster process in \eqref{eq:linear_cox}, then
$\mathrm{Var}(\hat\alpha_{i\to j}) 
\approx \mathrm{Numerator}/\mathrm{Denominator}$
\begin{equation} \label{eq:var_linear_cox}
\begin{aligned}
\mathrm{Numerator}
=& \frac{\bar\lambda_j}{T} \left( \frac{\rho}{2 \sqrt{\pi}\sqrt{\sigma_w^2 + \sigma_I^2}}
        + \frac{\bar\lambda_i}{2\sqrt{\pi}\sigma_w} \right) \\
\mathrm{Denominator}
=& \left( \frac{\rho}{2 \sqrt{\pi}\sqrt{\sigma_w^2 + \sigma_I^2}}
        + \frac{\bar\lambda_i}{2\sqrt{\pi}\sigma_w} \right) \cdot \\
&\qquad  \left(\rho \left[\sigma_h
        \mathrm{erf}\left(\frac{\sigma_h}{2\sigma_I}\right) 
        -\frac{2\sigma_I}{\sqrt{\pi}}\left(1-e^{-\frac{\sigma_h^2}{4\sigma_I^2}}\right)  \right]
        + \bar\lambda_i \sigma_h\right) \\
&- \left(\frac{\rho}{2}
    \mathrm{erf}\left(\frac{\sigma_h }{2\sqrt{\sigma_w^2/2 + \sigma_I^2}}\right)
    + \frac{\bar\lambda_i}{2}
    \mathrm{erf}\left(\frac{\sigma_h }{\sqrt{2} \sigma_w}\right)
    \right)^2 
\end{aligned}
\end{equation}
\end{lemma}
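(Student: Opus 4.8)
The plan is to obtain the variance as the relevant diagonal entry of the inverse Fisher information, reusing the local quadratic (Laplace) expansion already set up in the proof of Lemma~\ref{lemma:bias}. There the minimizer of $\tilde\ell_j$ was written as $\hat{\boldsymbol\beta}\approx -H^{-1}\boldsymbol b$ with $H=\partial^2\tilde\ell_j/\partial\boldsymbol\beta\partial\boldsymbol\beta^T\approx \frac{1}{\bar\lambda_j}\int_0^T\Psi(s)\Psi(s)^T\,\mathrm{d}s$ and $\Psi=(\mathbf 1,\varphi_w,\varphi_h)^T$. First I would expand the score $U(\boldsymbol\beta)=\partial\tilde\ell_j/\partial\boldsymbol\beta$ around the population minimizer (pseudo-true value) $\boldsymbol\beta_0$, giving $\hat{\boldsymbol\beta}-\boldsymbol\beta_0\approx -H^{-1}U(\boldsymbol\beta_0)$, so that $\mathrm{Cov}(\hat{\boldsymbol\beta})\approx H^{-1}\,\mathrm{Cov}\bigl(U(\boldsymbol\beta_0)\bigr)\,H^{-1}$. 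The whole computation then reduces to evaluating $\mathrm{Cov}(U(\boldsymbol\beta_0))$ and inverting a matrix whose blocks are already known from the bias proof.

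The key step is the counting-process martingale argument that collapses this sandwich. Conditioning on the source process $N_i$ the bases $\Psi$ are fixed, and writing $\mathrm{d}N_j=\lambda_j\,\mathrm{d}t+\mathrm{d}M_j$ with $M_j$ the true innovation martingale, the score splits as $U(\boldsymbol\beta_0)=\int_0^T\Psi(1-\lambda_j/\tilde\lambda_j)\,\mathrm{d}t-\int_0^T\frac{\Psi}{\tilde\lambda_j}\,\mathrm{d}M_j$. The first term is $N_i$-measurable with zero mean by the defining equation of $\boldsymbol\beta_0$, hence does not contribute to the conditional covariance; the dominant fluctuation is the stochastic integral, whose predictable quadratic variation is $\int_0^T\frac{\Psi\Psi^T}{\tilde\lambda_j^2}\lambda_j\,\mathrm{d}t$. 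At $\boldsymbol\beta_0$ the smoothed proxy matches the truth, $\tilde\lambda_j\approx\lambda_j$, so this reduces to $\int_0^T\frac{\Psi\Psi^T}{\lambda_j}\,\mathrm{d}t\approx\frac{1}{\bar\lambda_j}\int_0^T\Psi\Psi^T\,\mathrm{d}t=H$. Thus $\mathrm{Cov}(U\mid N_i)\approx H$ and the sandwich collapses to $\mathrm{Cov}(\hat{\boldsymbol\beta})\approx H^{-1}$, the residual between-realization variability in $N_i$ being neglected in this approximation.

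It then remains to read off the $(\alpha_{i\to j},\alpha_{i\to j})$ entry of $H^{-1}$. Since $\varphi_w,\varphi_h$ are mean-subtracted, the constant coordinate decouples as $T\to\infty$ and the relevant block of $\bar\lambda_j H^{-1}$ is the inverse of $\bigl(\begin{smallmatrix}S_{ww}&S_{wh}\\ S_{hw}&S_{hh}\end{smallmatrix}\bigr)$, giving $\mathrm{Var}(\hat\alpha_{i\to j})\approx \bar\lambda_j\,S_{ww}/(S_{ww}S_{hh}-S_{wh}^2)$. Substituting the asymptotic evaluations $S_{ww}\approx T\langle W,W\rangle_{\breve c_N}$, $S_{hh}\approx T\langle h,h^-\rangle_{\breve c_N}$ and $S_{wh}\approx T\langle W,\mathbf 1\rangle_{\breve c_\Lambda}$ established for the bias (Lemma~\ref{lemma:Sxx}) produces the overall factor $\bar\lambda_j/T$ and the stated general formula; note that its denominator is identical to that of the bias.

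Finally, for the cluster/linear Cox background the denominator has already been evaluated in \eqref{eq:bias_linear_cox}, so only the numerator $\tfrac{\bar\lambda_j}{T}\langle W,W\rangle_{\breve c_N}$ is new; it is the Gaussian convolution $\int[W\ast W](u)\,\breve c_N(u)\,\mathrm{d}u$ with $\breve c_N=\rho[\phi_{\sigma_I}\ast\phi_{\sigma_I}]+\bar\lambda_i\delta$, which integrates in closed form to $\frac{\rho}{2\sqrt\pi\sqrt{\sigma_w^2+\sigma_I^2}}+\frac{\bar\lambda_i}{2\sqrt\pi\sigma_w}$, matching \eqref{eq:var_linear_cox}. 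The main obstacle is the second paragraph: justifying that the dominant randomness is the $N_j$-martingale conditional on $N_i$ and that $\tilde\lambda_j\approx\lambda_j$ at the pseudo-true value, so that the misspecified sandwich reduces to $H^{-1}$; the remaining steps are linear algebra and the same error-function integrals already carried out for the bias.
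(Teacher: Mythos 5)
Your proposal is correct and takes essentially the same route as the paper: the paper's entire proof of Lemma~\ref{lemma:var} is the remark that it is ``similar to Lemma~\ref{lemma:bias} using the Fisher information,'' i.e., read $\mathrm{Var}(\hat\alpha_{i\to j})$ off as the impact-function diagonal entry of $H^{-1}$, giving $\bar\lambda_j S_{ww}/(S_{ww}S_{hh}-S_{wh}^2)$, and then substitute the closed-form inner products of Lemma~\ref{lemma:Sxx}, whose $S_{ww}$ term yields the numerator of \eqref{eq:var_linear_cox} and whose denominator coincides with that of \eqref{eq:bias_linear_cox}. Your counting-process martingale argument that the misspecified sandwich $H^{-1}\mathrm{Cov}(U)H^{-1}$ collapses to $H^{-1}$ (conditional on $N_i$, neglecting between-realization variability) fills in a justification the paper leaves entirely implicit, so it is a fleshed-out version of the same Fisher-information approach rather than a different one.
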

The proof is similar to Lemma \ref{lemma:bias} using the Fisher information.

\begin{corollary} \label{corollary:var_hawkes}
If the regression model \eqref{eq:regression_intensity} does not include the nuisance variable, which becomes a typical Hawkes process, then the variance of the estimator is
\begin{align*}
\mathrm{Var}(\hat\alpha_{i\to j}) \approx 
\frac{\bar\lambda_j}{S_{hh}}
\approx \frac{\bar\lambda_j}{T}
\left(\rho \left[\sigma_h
    \mathrm{erf}\left(\frac{\sigma_h}{2\sigma_I}\right) 
    -\frac{2\sigma_I}{\sqrt{\pi}}\left(1-e^{-\frac{\sigma_h^2}{4\sigma_I^2}}\right)  \right]
    + \bar\lambda_i \sigma_h
\right)^{-1}
\end{align*}
\end{corollary}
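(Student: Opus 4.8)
The plan is to redo the Fisher-information computation of Lemma~\ref{lemma:var} after deleting the nuisance regressor $\varphi_w$, so that the intensity of~\eqref{eq:regression_intensity} collapses to the ordinary Hawkes form $\tilde\lambda_j(s)=\beta_j+\alpha_{i\to j}\varphi_h(s)$ with basis vector $\Psi(s)=(\mathbf{1},\varphi_h(s))^T$. Exactly as in the proof of Lemma~\ref{lemma:bias}, I would take the conditional expectation of the observed Hessian given $N_i$ and use $\tilde\lambda_j(s)\approx\bar\lambda_j$ inside the Laplace approximation to obtain $H\approx \frac{1}{\bar\lambda_j}\int_0^T\Psi(s)\Psi(s)^T\,\mathrm{d}s$, now a $2\times2$ matrix whose entries are $T$, $\langle\mathbf{1},\varphi_h\rangle$ and $S_{hh}=\langle\varphi_h,\varphi_h\rangle$. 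The variance of $\hat\alpha_{i\to j}$ is then the lower-right entry of $H^{-1}$.

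The first key step is the asymptotic decoupling of the constant from the impact term. Since $\varphi_h$ is the mean-subtracted basis $\int h_{i\to j}(s-t)(N_i(\mathrm{d}t)-\tfrac{N_i(T)}{T}\mathrm{d}t)$, one has $\tfrac1T\langle\mathbf{1},\varphi_h\rangle\to0$ as $T\to\infty$, so $H$ is asymptotically block-diagonal and its inverse gives $\mathrm{Var}(\hat\alpha_{i\to j})\approx \bar\lambda_j/S_{hh}$, the reciprocal of the Fisher information carried by the impact channel alone. This is precisely the $\sigma_w$-free analogue of the $\langle W,W\rangle_{\breve{c}_N}$-cancellation one would see by formally removing $W$ from Lemma~\ref{lemma:var}, except that here it must be derived directly because naively setting $W\equiv0$ in that formula produces $0/0$.

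The second step is to evaluate $S_{hh}$ in closed form. Taking expectations over $N_i$ and invoking second-order stationarity (Definition~\ref{lemma:second_order_def} and~\eqref{eq:covariance_intensity_def}), $S_{hh}\approx T\langle h,h^-\rangle_{\breve{c}_N}=T\iint h(u)h(v)\,\breve{c}_N(u-v)\,\mathrm{d}u\,\mathrm{d}v$, which is the computation already carried out in Lemma~\ref{lemma:Sxx}. Substituting the linear-Cox reduced covariance $\breve{c}_N(u)=\rho[\phi_{\sigma_I}\ast\phi_{\sigma_I}](u)+\bar\lambda_i\,\delta(u)$ from~\eqref{eq:linear_cox} and the square window $h=\mathbb{I}_{[0,\sigma_h]}$ of~\eqref{eq:square_coupling_filter} splits the integral into two pieces: the Dirac part contributes $\bar\lambda_i\int_0^{\sigma_h}h(u)^2\,\mathrm{d}u=\bar\lambda_i\sigma_h$, while the smooth part uses $\phi_{\sigma_I}\ast\phi_{\sigma_I}=\phi_{\sqrt2\,\sigma_I}$ to reduce to $\rho\iint_{[0,\sigma_h]^2}\phi_{\sqrt2\,\sigma_I}(u-v)\,\mathrm{d}u\,\mathrm{d}v$. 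Dividing $\bar\lambda_j$ by $T$ times this sum yields the stated formula.

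The main obstacle I expect is the evaluation of that last Gaussian double integral over the square $[0,\sigma_h]^2$ into the displayed error-function expression $\rho[\sigma_h\,\mathrm{erf}(\tfrac{\sigma_h}{2\sigma_I})-\tfrac{2\sigma_I}{\sqrt\pi}(1-e^{-\sigma_h^2/(4\sigma_I^2)})]$; this is pure but somewhat delicate calculus (change variables to $u-v$, integrate the Gaussian once to produce an $\mathrm{erf}$, then integrate the $\mathrm{erf}$ by parts), and since it is the same integral already tabulated in Lemma~\ref{lemma:Sxx} it can simply be cited. The only conceptual care needed elsewhere is justifying the Laplace/Fisher approximation and the $T\to\infty$ decoupling, both inherited essentially verbatim from Lemma~\ref{lemma:bias} and Lemma~\ref{lemma:var}.
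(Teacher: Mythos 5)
Your proposal is correct and follows essentially the same route as the paper: the paper's own justification of Corollary \ref{corollary:var_hawkes} is precisely the Fisher-information/Laplace computation of Lemma \ref{lemma:bias} and Lemma \ref{lemma:var} carried out with the nuisance regressor removed, yielding $\mathrm{Var}(\hat\alpha_{i\to j})\approx\bar\lambda_j/S_{hh}$, followed by substituting the closed form of $S_{hh}$ from Lemma \ref{lemma:Sxx}. Your two added observations (the $T\to\infty$ decoupling of the constant basis, and the fact that naively setting $W\equiv 0$ in Lemma \ref{lemma:var} gives $0/0$ so the derivation must be redone directly) simply make explicit what the paper leaves implicit.
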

The proof is similar to Lemma \ref{lemma:bias}. The three corollaries above have the same results.

\begin{corollary} \label{corollary:var_sharp_W}
If $\sigma_w \to 0$ of the variance in \eqref{eq:var_linear_cox} will converge
\begin{equation}
\begin{aligned}
\lim_{\sigma_w\to 0} \mathrm{Var}(\hat\alpha_{i\to j})
\to
\frac{\bar\lambda_j}{T}
\left(\rho \left[\sigma_h
    \mathrm{erf}\left(\frac{\sigma_h}{2\sigma_I}\right) 
    -\frac{2\sigma_I}{\sqrt{\pi}}\left(1-e^{-\frac{\sigma_h^2}{4\sigma_I^2}}\right)  \right]
    + \bar\lambda_i \sigma_h
\right)^{-1}
\end{aligned}
\end{equation}
\end{corollary}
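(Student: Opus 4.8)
The plan is to reduce this limit to a dominant-balance argument: I isolate the single factor in \eqref{eq:var_linear_cox} that diverges as $\sigma_w\to 0$, confirm that every competing term remains bounded, and watch the ratio collapse onto the nuisance-free (Hawkes) variance of Corollary \ref{corollary:var_hawkes}. To organize the computation I would first name the recurring blocks. Set
\begin{equation*}
A(\sigma_w) := \frac{\rho}{2\sqrt{\pi}\sqrt{\sigma_w^2+\sigma_I^2}} + \frac{\bar\lambda_i}{2\sqrt{\pi}\sigma_w},
\end{equation*}
the common factor of $\mathrm{Numerator}$ and of the leading product in $\mathrm{Denominator}$, and let $C := \rho[\sigma_h\,\mathrm{erf}(\sigma_h/2\sigma_I) - (2\sigma_I/\sqrt{\pi})(1-e^{-\sigma_h^2/4\sigma_I^2})] + \bar\lambda_i\sigma_h$, which does not depend on $\sigma_w$. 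Writing
\begin{equation*}
B(\sigma_w) := \left(\frac{\rho}{2}\mathrm{erf}\!\Big(\tfrac{\sigma_h}{2\sqrt{\sigma_w^2/2+\sigma_I^2}}\Big) + \frac{\bar\lambda_i}{2}\mathrm{erf}\!\Big(\tfrac{\sigma_h}{\sqrt{2}\sigma_w}\Big)\right)^{2},
\end{equation*}
the formula \eqref{eq:var_linear_cox} becomes simply $\mathrm{Var}(\hat\alpha_{i\to j}) \approx (\bar\lambda_j/T)\,A(\sigma_w)\,/\,(A(\sigma_w)\,C - B(\sigma_w))$.

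Next I would carry out the two asymptotic estimates. As $\sigma_w\to 0$ the first summand of $A$ tends to the finite value $\rho/(2\sqrt{\pi}\sigma_I)$, while the second summand $\bar\lambda_i/(2\sqrt{\pi}\sigma_w)$ diverges, so $A(\sigma_w)\to\infty$; this is the Fisher-information contribution of the smoothed nuisance regressor becoming singular as its kernel sharpens. For $B$, the error-function arguments satisfy $\sigma_h/(2\sqrt{\sigma_w^2/2+\sigma_I^2})\to\sigma_h/(2\sigma_I)$ and $\sigma_h/(\sqrt{2}\sigma_w)\to\infty$; since $\mathrm{erf}$ is continuous and bounded with $\mathrm{erf}(\infty)=1$, the quantity $B(\sigma_w)$ converges to the finite constant $(\tfrac{\rho}{2}\mathrm{erf}(\sigma_h/2\sigma_I)+\tfrac{\bar\lambda_i}{2})^2$.

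Finally, I divide numerator and denominator by $A(\sigma_w)$, obtaining $\mathrm{Var}(\hat\alpha_{i\to j}) \approx (\bar\lambda_j/T)\,/\,(C - B(\sigma_w)/A(\sigma_w))$. Because $B$ stays bounded while $A\to\infty$, the ratio $B/A\to 0$, leaving the limit $(\bar\lambda_j/T)/C$, which is exactly the claimed expression and coincides with the Hawkes-process variance of Corollary \ref{corollary:var_hawkes}. No step here is genuinely hard; the only point demanding care is verifying that the $1/\sigma_w$ singularity lives \emph{solely} in $A$ and does not re-enter through $B$ or $C$, and this is precisely what the saturation $\mathrm{erf}(\sigma_h/\sqrt{2}\sigma_w)\to 1$ guarantees. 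Intuitively, as the smoothing kernel collapses the nuisance regressor carries no usable information, so the estimator's variance reverts to the model that ignores the fluctuating background.
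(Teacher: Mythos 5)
Your proposal is correct and follows essentially the same route as the paper: the paper's (largely implicit) argument is to take the limit $\sigma_w\to 0$ directly in the closed-form expression \eqref{eq:var_linear_cox}, observing that the $1/\sigma_w$ divergence of the common factor kills the squared cross-term while the $\mathrm{erf}$ terms saturate, so the variance collapses to $\frac{\bar\lambda_j}{T\,S_{hh}/T}$ of Corollary \ref{corollary:var_hawkes}. Your dominant-balance bookkeeping with $A$, $B$, $C$ is exactly this computation, made explicit.
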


\begin{corollary} \label{corollary:var_wide_W}
If $\sigma_w \to \infty$ of the variance in \eqref{eq:var_linear_cox} will converge
\begin{equation}
\lim_{\sigma_w\to \infty} \mathrm{Var}(\hat\alpha_{i\to j})
\to
\frac{\bar\lambda_j}{T}
\left(\rho \left[\sigma_h
    \mathrm{erf}\left(\frac{\sigma_h}{2\sigma_I}\right) 
    -\frac{2\sigma_I}{\sqrt{\pi}}\left(1-e^{-\frac{\sigma_h^2}{4\sigma_I^2}}\right)  \right]
    + \bar\lambda_i \sigma_h
\right)^{-1}
\end{equation}
\end{corollary}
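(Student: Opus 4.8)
The plan is to treat the closed form \eqref{eq:var_linear_cox} purely as a scalar function of $\sigma_w$ and extract its $\sigma_w\to\infty$ limit by isolating the three structurally distinct pieces. Writing
$$A(\sigma_w) := \frac{\rho}{2\sqrt{\pi}\sqrt{\sigma_w^2+\sigma_I^2}}+\frac{\bar\lambda_i}{2\sqrt{\pi}\sigma_w},\qquad B := \rho\Big[\sigma_h\,\mathrm{erf}\big(\tfrac{\sigma_h}{2\sigma_I}\big)-\tfrac{2\sigma_I}{\sqrt{\pi}}\big(1-e^{-\sigma_h^2/4\sigma_I^2}\big)\Big]+\bar\lambda_i\sigma_h,$$
and letting $C(\sigma_w)$ denote the subtracted erf combination in the denominator, the variance from Lemma \ref{lemma:var} is exactly $\mathrm{Var}(\hat\alpha_{i\to j})=\tfrac{\bar\lambda_j}{T}A/(AB-C^2)$, in which $B$ carries no $\sigma_w$ dependence. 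Dividing numerator and denominator by $A(\sigma_w)$ yields the cleaner form
$$\mathrm{Var}(\hat\alpha_{i\to j})=\frac{\bar\lambda_j/T}{B-C(\sigma_w)^2/A(\sigma_w)},$$
so the entire corollary reduces to proving that $C^2/A\to 0$ as $\sigma_w\to\infty$; the limit is then immediately $\tfrac{\bar\lambda_j}{T}B^{-1}$, which is the stated expression.

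The remaining work is a rate computation for $A$ and $C$. As $\sigma_w\to\infty$ one has $\sqrt{\sigma_w^2+\sigma_I^2}=\sigma_w(1+o(1))$, so $A(\sigma_w)=\tfrac{\rho+\bar\lambda_i}{2\sqrt{\pi}\sigma_w}(1+o(1))=\Theta(1/\sigma_w)$. For $C$, both erf arguments decay at the same leading order, since $\tfrac{\sigma_h}{2\sqrt{\sigma_w^2/2+\sigma_I^2}}$ and $\tfrac{\sigma_h}{\sqrt{2}\sigma_w}$ are each $\tfrac{\sigma_h}{\sqrt{2}\sigma_w}(1+o(1))$; feeding these into the small-argument expansion $\mathrm{erf}(x)=\tfrac{2}{\sqrt{\pi}}x+O(x^3)$ gives $C(\sigma_w)=\tfrac{(\rho+\bar\lambda_i)\sigma_h}{\sqrt{2\pi}\sigma_w}(1+o(1))=\Theta(1/\sigma_w)$. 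Hence $C^2=\Theta(1/\sigma_w^2)$ and $C^2/A=\Theta(1/\sigma_w)\to 0$, which closes the argument. The identical mechanism, now with $A\to\infty$ and $C$ bounded, produces the matching value in Corollary \ref{corollary:var_sharp_W}, explaining why the two opposite limits coincide.

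The step I expect to be the only genuine obstacle is verifying that the subtracted cross term is strictly lower order, i.e.\ that $C^2$ decays one power of $\sigma_w$ faster than $A$; were $C$ and $A$ of the same decay rate, the denominator would not collapse to $AB$ and the limit would instead take the form $\tfrac{\bar\lambda_j/T}{B-\kappa}$ for some nonzero $\kappa$. This is secured precisely by the observation that $A$ and $C$ are both $\Theta(1/\sigma_w)$ while $C$ enters squared, so their ratio tips in favour of $A$. No uniformity or interchange-of-limit subtleties arise, because \eqref{eq:var_linear_cox} is an explicit elementary function of $\sigma_w$, and the limit can be taken termwise once the leading asymptotics of $A$ and $C$ are in hand.
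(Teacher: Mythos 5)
Your proposal is correct and follows what the paper intends: the paper omits the details (stating only that the corollaries follow as in Lemma \ref{lemma:bias}/\ref{lemma:var} and share the same limit), and the only sensible argument is precisely your direct asymptotic computation on the closed form \eqref{eq:var_linear_cox}. Your rate analysis ($A=\Theta(1/\sigma_w)$, $C=\Theta(1/\sigma_w)$, hence $C^2/A=\Theta(1/\sigma_w)\to 0$) is the key step that makes the denominator collapse to $AB$, and it is carried out correctly, including the observation explaining why the $\sigma_w\to 0$ and $\sigma_w\to\infty$ limits coincide.
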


\begin{lemma} \label{lemma:Sxx}
If the linear Cox model in \eqref{eq:linear_cox} and \eqref{eq:square_coupling_filter}, the inner products defined in  \eqref{eq:Sxx_def} can be derived in analytical forms as follows.
\end{lemma}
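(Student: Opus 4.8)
The plan is to reduce each of the five inner products in \eqref{eq:Sxx_def} to a single integral against a reduced covariance density and then evaluate that integral in closed form by Gaussian algebra. Each $S_{xy}$ is a random functional of $N_i$; since $f_i$ is second-order stationary and we operate in the large-$T$ regime, I replace $S_{xy}$ by its expectation, which is exact to leading order in $1/T$. Writing the mean-corrected bases as $\varphi_w = W \ast d\tilde N_i$ and $\varphi_h = h \ast d\tilde N_i$ with $d\tilde N_i = dN_i - \bar\lambda_i\,dt$, a direct second-moment computation gives, for the two point-process bases,
\begin{equation*}
\mathbb{E}\!\left[\varphi_x(s)\varphi_y(s)\right] = \int [K_x \ast K_y^-](u)\,\breve{c}_{N}(u)\,du ,
\end{equation*}
where $K_x\in\{W,h\}$ is the kernel of $\varphi_x$ and $K_y^-(\tau)=K_y(-\tau)$ (for the even Gaussian $W$, $W^-=W$). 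Integrating over $s\in[0,T]$ and using stationarity yields $S_{ww}\approx T\langle W,W\rangle_{\breve{c}_{N}}$, $S_{hh}\approx T\langle h,h^-\rangle_{\breve{c}_{N}}$ and $S_{hw}\approx T\langle h,W\rangle_{\breve{c}_{N}}$.

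For the two cross terms with $\lambda_i$, the key observation is that $\varphi_w,\varphi_h$ are mean-zero, so the constant $\alpha_i$ drops and $\langle\varphi_w,\lambda_i\rangle = \langle\varphi_w, f_i\rangle$; moreover, for the Cox/cluster process in \eqref{eq:linear_cox} the cross-covariance between the points and the driving intensity is exactly the intensity covariance, $\mathrm{Cov}(dN_i(t),f_i(s)) = \breve{c}_{\Lambda}(t-s)\,dt$. This collapses $S_{w\lambda}$ and $S_{h\lambda}$ onto $\breve{c}_{\Lambda}$ rather than $\breve{c}_{N}$, giving $S_{w\lambda}\approx T\langle W,\mathbf{1}\rangle_{\breve{c}_{\Lambda}}$ and $S_{h\lambda}\approx T\langle h,\mathbf{1}\rangle_{\breve{c}_{\Lambda}}$. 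I expect this to be the conceptually most delicate step: it requires the Cox cross-moment identity and the correct reading of the $\mathbf{1}$ factor (acting as the convolution identity for the unsmoothed $\lambda_i$), and it is precisely what produces the two distinct covariance densities that appear in \eqref{eq:bias_formula}.

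With these reductions in hand, the remainder is Gaussian bookkeeping using $\breve{c}_{\Lambda} = \rho\,[\phi_{\sigma_I}\ast\phi_{\sigma_I}]$ and $\breve{c}_{N} = \breve{c}_{\Lambda} + \bar\lambda_i\,\delta$ with $\bar\lambda_i=\alpha_i+\rho$. I repeatedly invoke: the self-convolution $\phi_\sigma\ast\phi_\sigma=\phi_{\sqrt2\,\sigma}$; the overlap $\int\phi_a\phi_b = (2\pi(a^2+b^2))^{-1/2}$; the half-line integral $\int_0^{x}\phi_\sigma = \tfrac12\,\mathrm{erf}\!\big(x/(\sqrt2\,\sigma)\big)$; and the box autocorrelation $h\ast h^- = (\sigma_h-|u|)_+$. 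The $\bar\lambda_i\delta$ part of $\breve{c}_{N}$ evaluates each kernel correlation at $u=0$ and supplies the $\bar\lambda_i$ terms, while the smooth part supplies the $\rho$ terms; for instance this gives $S_{ww}/T = \tfrac{\rho}{2\sqrt\pi\sqrt{\sigma_w^2+\sigma_I^2}} + \tfrac{\bar\lambda_i}{2\sqrt\pi\sigma_w}$, and the half-line identity applied to $S_{hw}$ and $S_{h\lambda}$ produces the error-function factors in \eqref{eq:bias_linear_cox}.

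The main obstacle is the single genuinely nontrivial integral $\rho\int(\sigma_h-|u|)_+\,\phi_{\sqrt2\sigma_I}(u)\,du$ entering $S_{hh}$: integrating the triangle against the Gaussian (splitting into the constant and $|u|$ pieces and integrating the latter by parts) is what generates the term $\rho\big[\sigma_h\,\mathrm{erf}(\sigma_h/2\sigma_I) - (2\sigma_I/\sqrt\pi)(1-e^{-\sigma_h^2/4\sigma_I^2})\big]$. Every remaining integral collapses to an $\mathrm{erf}$ or a Gaussian-at-zero evaluation, so the only persistent care required is variance bookkeeping, tracking which combination ($\sigma_w^2+\sigma_I^2$, $\sigma_w^2/2+\sigma_I^2$, or $\sqrt2\,\sigma_w$) governs each overlap, so that the pieces assemble into exactly the numerator and denominator of \eqref{eq:bias_linear_cox}.
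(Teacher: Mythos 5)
Your proposal is correct and follows essentially the same route as the paper: the reduction of each $S_{xy}$ to $T\int [K_x\ast K_y^-](u)\,\breve{c}(u)\,du$ via stationarity and mean-correction is exactly the paper's Lemma \ref{lemma:inner_prod}, your Cox cross-moment identity (collapsing $S_{w\lambda},S_{h\lambda}$ onto $\breve{c}_{\Lambda}$ via $\mathbb{E}[N(\mathrm{d}t)\Lambda(t+\mathrm{d}u)]=\breve{m}_{\Lambda}(u)\,\mathrm{d}u\,\mathrm{d}t$) is the final computation in Lemma \ref{lemma:linear_cox_reduced_moment}, and the remaining Gaussian/erf bookkeeping matches the paper's evaluation term by term. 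Your triangle-against-Gaussian integral for $S_{hh}$ even carries the correct sign $e^{-\sigma_h^2/4\sigma_I^2}$, where the paper's displayed proof has a typographical sign slip (though its Lemma statement \eqref{eq:bias_linear_cox} is correct).
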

\begin{proof}
Apply \ref{lemma:second_order_def}, \ref{lemma:inner_prod}, and \ref{lemma:linear_cox_reduced_moment},
\begin{align*}
\frac{1}{T} S_{ww} 
\approx& \int_\mathbb{R} [W\ast W](s) \breve{c}_{N}(\mathrm{d}s) \mathrm{d}s \\
=& \int_{\mathbb{R}} \frac{1}{2\sigma_w \sqrt{\pi}} \exp\left\{ - \frac{s^2}{4\sigma_w^2} \right\}
    \left( 
        \frac{\rho}{2 \sigma_I \sqrt{\pi}} 
            \exp\left\{ -\frac{s^2}{4\sigma_I^2} \right\}
        + \bar\lambda_i \delta(s) 
    \right) \mathrm{d}s \\
=& \frac{\rho}{2 \sqrt{\pi}\sqrt{\sigma_w^2 + \sigma_I^2}}
        + \frac{\bar\lambda_i}{2\sqrt{\pi}\sigma_w} \\
\frac{1}{T} S_{hh} 
\approx& \int_\mathbb{R} [h \ast h^-](s) \breve{c}_{N}(\mathrm{d}s) \mathrm{d}s \\
=& \int_{\mathbb{R}} 
    \left[\mathrm{rect}\left(\frac{u}{\sigma_h} - \frac{1}{2}\right) \ast
          \mathrm{rect}\left(-\frac{u}{\sigma_h} - \frac{1}{2}\right) \right](s)
    \left(\frac{\rho}{2\sigma_I \sqrt{\pi}} 
            \exp\left\{ -\frac{s^2}{4\sigma_I^2} \right\}
        + \bar\lambda_i \delta(s)
    \right) \mathrm{d}s \\
=& \rho \left[
        \sigma_h \mathrm{erf}\left(\frac{\sigma_h}{2\sigma_I}\right) 
        -\frac{2\sigma_I}{\sqrt{\pi}}\left(1- \exp\left\{ \frac{\sigma_h^2}{4\sigma_I^2} \right\} \right) 
        \right]
        + \bar\lambda_i \sigma_h \\
\frac{1}{T} S_{hw}
\approx& \int_\mathbb{R} [h \ast W](s) \breve{c}_{N}(\mathrm{d}s) \mathrm{d}s \\
=& \int_{\mathbb{R}} 
    \left[\mathrm{rect}\left(\frac{u}{\sigma_h} - \frac{1}{2}\right) \ast
          \phi_{\sigma_W}(u) \right](s)
    \left(
        \frac{\rho}{2 \sigma_I \sqrt{\pi}} 
            \exp\left\{ -\frac{s^2}{4\sigma_I^2} \right\}
        + \bar\lambda_i \delta(s) 
    \right) \mathrm{d}s \\
=& \frac{\rho}{2}
    \mathrm{erf}\left(\frac{\sigma_h }{\sqrt{2\sigma_w^2 + 4\sigma_I^2}}\right)
    + \frac{\bar\lambda_i}{2}
    \mathrm{erf}\left(\frac{\sigma_h }{\sqrt{2} \sigma_w}\right) \\
\frac{1}{T} S_{w\lambda}
\approx& \int_\mathbb{R} W(s) \breve{c}_{\Lambda}(s)\mathrm{s} \\
=& \int_{\mathbb{R}} \frac{1}{\sigma_w \sqrt{2\pi}} \exp\left\{ - \frac{s^2}{2\sigma_w^2} \right\}
    \left( 
        \frac{\rho}{2 \sigma_I \sqrt{\pi}} 
            \exp\left\{ -\frac{s^2}{4\sigma_I^2} \right\}
    \right) \mathrm{d}s \\
=& \frac{\rho}{\sqrt{2\sigma_w^2 + 4\sigma_I^2}\cdot \sqrt{\pi}} \\
\frac{1}{T} S_{h\lambda}
\approx& \int_\mathbb{R} h(s) \breve{c}_{\Lambda}(s)\mathrm{s} \\
=& \int_{\mathbb{R}} \mathbb{I}_{[0, \sigma_h]}(s)
    \left(
        \frac{\rho}{2 \sigma_I \sqrt{\pi}} 
            \exp\left\{ -\frac{s^2}{4\sigma_I^2} \right\}
    \right) \mathrm{d}s \\
=& \frac{\rho}{2}\mathrm{erf}\left(\frac{\sigma_h}{2\sigma_I} \right)
\end{align*}
\end{proof}

\begin{lemma} \label{lemma:inner_prod}
Assume the point process $N_i(\cdot)$ is second-order stationary (definition  \ref{lemma:second_order_def}) with reduced second-order moment measure intensity $\breve{c}_{N}$, intensity function $\lambda_i$, and mean intensity $\bar\lambda_i$. Define two mean subtracted processes,
\begin{align*}
\varphi_1(s) :=& \int W_1(s-t) (N_i(\mathrm{d}t)-\bar\lambda_i\mathrm{d}t) \\
\varphi_2(s) :=& \int W_2(s-t) (N_i(\mathrm{d}t)-\bar\lambda_i\mathrm{d}t)
\end{align*}
Then the inner product on interval $[0, T]$ between the processes is
\begin{equation} \label{eq:inner_prod}
\langle \varphi_1, \varphi_2 \rangle
\approx T \int_\mathbb{R} [W_1\ast W_2^-](r)
    \breve{c}_{N}(r)\mathrm{d}r
\end{equation}
$W_2^-(x) := W_2(-x)$.
\begin{equation}
\langle \varphi_w, \lambda_i \rangle 
\approx \int_\mathbb{R} W(r) \breve{c}_{\Lambda} (r) \mathrm{d}s \mathrm{d}t
\end{equation}
$\breve{c}_{\Lambda}$ and $\breve{c}_{N}$ are reduced second-order moment measure intensity corresponding to $\Lambda_i(\cdot)$ and  $N_i(\cdot)$, see Lemma \ref{lemma:linear_cox_reduced_moment}. 
$\Lambda_i(A) := \int_A \lambda_i(t)\mathrm{d}t$ is the intensity measure.
\end{lemma}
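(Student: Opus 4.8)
The plan is to prove both identities by passing from the random, data-dependent inner product to its expectation, which is legitimate up to lower-order boundary terms because $N_i$ is second-order stationary and $T$ is large relative to the kernel bandwidths. Writing $\langle \varphi_1,\varphi_2\rangle = \int_0^T \varphi_1(s)\varphi_2(s)\,\mathrm{d}s$ and substituting the definitions, the integrand becomes a double integral against the centered measure $\tilde N_i(\mathrm{d}t) := N_i(\mathrm{d}t)-\bar\lambda_i\,\mathrm{d}t$, so I would first compute $\mathbb{E}[\varphi_1(s)\varphi_2(s)]$, show it is independent of $s$, and then integrate the constant over $[0,T]$ to produce the factor $T$.

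For the key computation I would expand
\[
\varphi_1(s)\varphi_2(s) = \iint W_1(s-t)\,W_2(s-t')\,\tilde N_i(\mathrm{d}t)\,\tilde N_i(\mathrm{d}t'),
\]
take expectations, and invoke Definition \ref{lemma:second_order_def} for the mean-corrected measure $\tilde N_i$ with the test function $f(t,t')=W_1(s-t)W_2(s-t')$. The defining reduction $\mathbb{E}\big[\iint f(t,t')\,\tilde N_i(\mathrm{d}t)\,\tilde N_i(\mathrm{d}t')\big]=\iint f(x,x+u)\,\mathrm{d}x\,\breve{c}_N(u)\,\mathrm{d}u$ collapses one integral; after the substitution $v=s-x$ the inner $x$-integral is $\int W_1(v)W_2(v-u)\,\mathrm{d}v=[W_1\ast W_2^-](u)$, exactly the claimed correlation kernel, leaving $\int_{\mathbb R}[W_1\ast W_2^-](u)\,\breve{c}_N(u)\,\mathrm{d}u$. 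Since this does not depend on $s$, multiplying by $T$ gives the first identity.

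For the cross term $\langle \varphi_w,\lambda_i\rangle$ the same scheme applies, but now one factor is the intensity rather than the count measure, so the relevant object is the cross-covariance $\mathrm{Cov}(N_i(\mathrm{d}t),\lambda_i(s))$. Here I would use the doubly-stochastic (Cox) structure of the model in \eqref{eq:linear_cox}: conditionally on $\lambda_i$, $N_i$ is Poisson with $\mathbb{E}[N_i(\mathrm{d}t)\mid\lambda_i]=\lambda_i(t)\,\mathrm{d}t$, so by the law of total covariance the Poisson noise drops out and $\mathrm{Cov}(N_i(\mathrm{d}t),\lambda_i(s))=\mathrm{Cov}(\lambda_i(t),\lambda_i(s))\,\mathrm{d}t=\breve{c}_\Lambda(s-t)\,\mathrm{d}t$. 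The same change of variables then yields $\langle\varphi_w,\lambda_i\rangle\approx T\int_{\mathbb R}W(r)\,\breve{c}_\Lambda(r)\,\mathrm{d}r$; note that the displayed statement omits this factor $T$ and writes $\mathrm{d}s\,\mathrm{d}t$ for a single $\mathrm{d}r$, while the normalization $\tfrac{1}{T}S_{w\lambda}$ used in Lemma \ref{lemma:Sxx} confirms the $T$.

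The main obstacle is not the algebra but the justification of the symbol ``$\approx$''. Replacing the empirical integral over $[0,T]$ by $T$ times the stationary mean requires either an ergodic/mixing argument or a variance bound, and it silently discards edge effects: for $s$ within a kernel bandwidth of $0$ or $T$ the identity $\int W_1(v)W_2(v-u)\,\mathrm{d}v$ holds only approximately, but such boundary contributions are $O(1)$ against the $O(T)$ bulk and vanish after dividing by $T$. A secondary subtlety is that the two identities use different reduced densities, namely $\breve{c}_N$ (carrying the diagonal $\delta$ shot-noise term) for the count--count product and $\breve{c}_\Lambda$ for the count--intensity product, so the proof must track carefully where the conditional-Poisson variance does and does not enter.
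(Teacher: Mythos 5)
Your proof is correct and takes essentially the same route as the paper's: both apply the defining reduction of the second-order stationary centered measure to collapse the double stochastic integral into $\int_{\mathbb R}[W_1\ast W_2^-](u)\,\breve{c}_{N}(u)\,\mathrm{d}u$, treat boundary contributions as $O(1)$ against the $O(T)$ bulk, and for the count--intensity term exploit the conditional-Poisson (Cox) structure so the shot-noise $\delta$ term drops out and $\breve{c}_{\Lambda}$ rather than $\breve{c}_{N}$ appears (the paper does this in Lemma \ref{lemma:linear_cox_reduced_moment}; you do it inline via the law of total covariance). The only differences are cosmetic: you take the expectation pointwise in $s$ before integrating over $[0,T]$, whereas the paper integrates over $t$ first and substitutes the reduced covariance mid-computation with the expectation left implicit, and you correctly flag that the stated second identity omits the factor $T$ and carries stray differentials --- a typo confirmed by the proof's final line and by the normalization $\tfrac{1}{T}S_{w\lambda}$ in Lemma \ref{lemma:Sxx}.
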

\begin{proof}
\begin{align*}
\langle \varphi_1, \varphi_2 \rangle
=& \int_0^T\int_0^T\int_0^T 
    \underbrace{W_1(t-u) }_{s := t - u}
    \underbrace{W_2(t-v)}_{W_2^-(x) := W_2(-x)}
    \left(N_i(\mathrm{d}u) -\bar\lambda_i \mathrm{d}u\right)
    \left(N_i(\mathrm{d}v) -\bar\lambda_i \mathrm{d}v\right)
    \mathrm{d}t \\
=& \int_0^T\int_0^T
    \int_{-u}^{T-u} W_1(s) 
        \underbrace{ W_2^-((v-u)-s) }_{u:=u, r:v-u}
    \mathrm{d}s 
    \left(N_i(\mathrm{d}u) -\bar\lambda_i \mathrm{d}u\right)
    \left(N_i(\mathrm{d}v) -\bar\lambda_i \mathrm{d}v\right) \\
=& \int_0^T \int_{-u}^{T-u}
    \int_{-u}^{T-u} W_1(s) W_2^-(r-s) \mathrm{d}s \cdot
    \breve{c}_{N}(r)\mathrm{d}r \cdot
    \mathrm{d}u \\
\approx& \int_0^T \int_{-u}^{T-u} 
    [W_1 \ast W_2^-](r) \breve{c}_{N}(r)\mathrm{d}r \cdot \mathrm{d}u
\approx T \int_\mathbb{R} [W_1\ast W_2^-](r)
    \breve{c}_{N}(r)\mathrm{d}r
\end{align*}
The approximation error comes from the boundary effect. If the kernels $W_1, W_2$ decays fast, the error can be ignored.
\begin{align*}
\langle \varphi_w, \lambda_i \rangle
=& \int_0^T \int_0^T W(t-u) 
    \left(N_i(\mathrm{d}u)) - \bar\lambda_i \mathrm{d}u\right)
    \lambda_i(t) \mathrm{d} t \\
=& \int_0^T \int_0^T W(t-u) 
    \left(N_i(\mathrm{d}u)) - \bar\lambda_i \mathrm{d}u\right)
    \left(\lambda_i(t)\mathrm{d}t - \bar\lambda_i \mathrm{d}t\right) \\
\approx& T \int_\mathbb{R} W(r) \breve{c}_{\Lambda} (r) \mathrm{d}s \mathrm{d}t
\end{align*}
\end{proof}

\textbf{Remark} 
Many works that study the second-order stationary point process is in the frequency-domain \cite{bartlett1963statistical, bremaud2005power, brillinger1972spectral, brillinger1974cross, hawkes1971spectra, lewis1970remarks, mugglestone1996practical} and \cite[ch. 8]{daley2003introduction}. All of our analysis is in time-domain. If we apply the Parseval's theorem  to \eqref{eq:inner_prod}, it equivalently shifts almost all results into frequency-domain.
\begin{equation*}
\int_\mathbb{R} [W_1\ast W_2^-](r) \breve{c}_{N}(r)\mathrm{d}r
= \int_\mathbb{R} \widehat W_1(f) \cdot \widehat W_2^-(f)\cdot \Gamma_{N}(\mathrm{d}f)
\end{equation*}
where $\widehat W_1, \widehat W_2^-$ are the spectrum of kernels, and $\Gamma_{N}$ is called \textit{Bartlett spectrum} for point process or \textit{Bochner spectrum} for wide-sense process (see \citep[ch. 8]{daley2003introduction} and \citep{bremaud2005power}). This can shift the time-domain analysis into the frequency-domain.
This work does not include any frequency properties of the estimator, but it is promising to interpret some steps using the Bartlett spectrum measure in the future work.

\begin{lemma}\label{lemma:linear_cox_reduced_moment}
Consider the cluster process in \eqref{eq:linear_cox}.
Let $\phi(\cdot)_{\sigma_I}$ be a window function with scale $\sigma_I$, $t^c_i$ be the points of the center process which is generated by homogeneous Poisson process with intensity $\rho$. $\alpha_i$ is the baseline. The intensity function has form,
\begin{equation}
\lambda_i(t) = \alpha_i + \sum_{i}
    \phi_{\sigma_I} \left(t - t^c_i\right)
\end{equation}
$N_i(\cdot)$ is the corresponding count measure.
Assume $\phi_{\sigma_I}$ is a Normal window with mean zero and standard deviation $\sigma_I$.
The reduced covariance measure intensity of $\Lambda_i(t)$ is,
\begin{equation}
\breve{c}_{\Lambda}(u) 
= \rho\cdot [\phi_{\sigma_I} \ast \phi_{\sigma_I}](u)
= \frac{\rho}{\sqrt{4\pi \sigma_I^2}} 
    \exp\left\{ -\frac{u^2}{4\sigma_I^2} \right\}
\end{equation}
Similarly, the reduced covariance measure intensity the point process $N_i(t)$ is,
\begin{equation}
\breve{c}_{N}(u) 
= \rho\cdot [\phi_{\sigma_I} \ast \phi_{\sigma_I}](u) + \bar\lambda_i \delta(u)
= \frac{\rho}{\sqrt{4\pi \sigma_I^2}} 
    \exp\left\{ -\frac{u^2}{4\sigma_I^2} \right\}
    + \bar\lambda_i \delta(u)
\end{equation}
\end{lemma}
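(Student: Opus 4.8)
The plan is to treat the cluster process as a shot-noise Cox process and compute its first two moment intensities directly from the Poisson statistics of the centers $\{t^c_i\}$, then transfer the intensity-level covariance to the count-measure level by exploiting the conditional-Poisson structure. Since $\alpha_i$ is deterministic it contributes nothing to any centered second moment, so throughout I would work with the fluctuating part $f_i(t)=\sum_i \phi_{\sigma_I}(t-t^c_i)$ alone and restore the constant only for the mean. First I would record the mean intensity: by Campbell's theorem applied to the Poisson center process of rate $\rho$, $\mathbb{E}[f_i(t)] = \rho\int \phi_{\sigma_I}(s)\,ds = \rho$ because the Gaussian window is normalized, whence $\bar\lambda_i = \alpha_i + \rho$, consistent with the constant used in Lemma \ref{lemma:bias}.

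Next I would compute $\breve{c}_{\Lambda}$. Writing $f_i(x)=\sum_i g_x(t^c_i)$ with $g_x(s):=\phi_{\sigma_I}(x-s)$, the second-moment formula for a Poisson process splits $\mathbb{E}[f_i(x)f_i(x+u)]$ into a distinct-pair contribution $\rho^2\int g_x\int g_{x+u}$ and a coincident (diagonal) contribution $\rho\int g_x(s)g_{x+u}(s)\,ds$. The distinct-pair term is exactly $\mathbb{E}[f_i(x)]\,\mathbb{E}[f_i(x+u)]$, so it cancels upon centering and only the diagonal survives:
\begin{equation*}
\breve{c}_{\Lambda}(u) = \rho\int \phi_{\sigma_I}(x-s)\,\phi_{\sigma_I}(x+u-s)\,ds.
\end{equation*}
Because $\phi_{\sigma_I}$ is even, this autocorrelation integral equals the autoconvolution $[\phi_{\sigma_I}\ast\phi_{\sigma_I}](u)$, and evaluating the convolution of two centered Gaussians (variances add to $2\sigma_I^2$) gives the claimed closed form $\frac{\rho}{\sqrt{4\pi\sigma_I^2}}\exp\{-u^2/(4\sigma_I^2)\}$. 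This also confirms that $f_i$ is second-order stationary in the sense of Definition \ref{lemma:second_order_def}, so the reduced covariance density is well defined.

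For $\breve{c}_{N}$ I would invoke the Cox (doubly stochastic Poisson) representation: conditionally on the realized intensity $\lambda_i$, the measure $N_i$ is an inhomogeneous Poisson process. For $u\neq 0$ the increments $dN_i(x)$ and $dN_i(x+u)$ are conditionally independent, so $\mathbb{E}[dN_i(x)\,dN_i(x+u)] = \mathbb{E}[\lambda_i(x)\lambda_i(x+u)]\,dx\,du$, and after centering this reproduces $\breve{c}_{\Lambda}(u)$ off the diagonal. The diagonal contributes the additional conditional Poisson variance $\mathrm{Var}(dN_i\mid\lambda_i)=\lambda_i\,dt$, whose expectation is $\bar\lambda_i\,dt$ and which appears as the atom $\bar\lambda_i\,\delta(u)$. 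Summing the two pieces yields $\breve{c}_{N}(u)=\breve{c}_{\Lambda}(u)+\bar\lambda_i\,\delta(u)$, as stated.

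The main obstacle is the careful bookkeeping of the diagonal, since its role differs at the two levels: at the intensity level the diagonal term \emph{is} the entire covariance (the off-diagonal being pure product-of-means), whereas at the count level the diagonal instead produces the $\delta$-atom from the conditional Poisson variance while the off-diagonal reproduces $\breve{c}_{\Lambda}$. Making this rigorous means phrasing everything through the reduced second-moment measures of Definition \ref{lemma:second_order_def} rather than the informal $dN(x)\,dN(x+u)/(dt)^2$ notation, and checking integrability of the shot-noise intensity so that the Campbell and Cox second-moment identities genuinely apply.
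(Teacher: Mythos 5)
Your proposal is correct and follows essentially the same route as the paper: Campbell's theorem for the mean $\bar\lambda_i=\alpha_i+\rho$, the Poisson second-moment splitting into coincident-pair (diagonal) and distinct-pair (product-of-means) contributions to get $\breve{c}_{\Lambda}(u)=\rho[\phi_{\sigma_I}\ast\phi_{\sigma_I}](u)$, and the Cox/doubly-stochastic conditioning (tower property with conditional Poisson variance) to obtain $\breve{c}_{N}(u)=\breve{c}_{\Lambda}(u)+\bar\lambda_i\delta(u)$. The paper merely phrases the distinct-versus-coincident splitting through the reduced second-moment measure $\breve{M}^c_{N,2}(\mathrm{d}v)$ of the center process and Campbell's lemma, which is the same computation in different notation.
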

\begin{proof}
The first-moment property of the intensity is the following.
\begin{equation*}
\bar\lambda_i 
=\mathbb{E}[\lambda(t)]
= \mathbb{E}\left[\int_0^\infty
    \phi_{\sigma_I} \left(t - s \right) N(\mathrm{d}s) \right]
= \int \phi_{\sigma_I} \left(t - s\right) (\alpha_i + \rho) \mathrm{d} s
= \alpha_i + \rho
\end{equation*}
The \textit{reduced covariance} for the second-moment stationary process is defined as,
\begin{align*}
\breve{c}_{\Lambda}(u) 
= \mathbb{E}[ \lambda_i(x) \lambda_i(x+u)] 
- \mathbb{E}[\lambda_i(x)] \mathbb{E}[\lambda_i(x+u)]
= \mathbb{E}[ \lambda_i(x) \lambda_i(x+u)] - \bar\lambda_i^2
\end{align*}
The second-moment measure of homogeneous Poisson process is \citep{hawkes1971spectra},
\begin{gather*}
\breve{M}^c_{N,2}(\mathrm{d} v)
=\bar\lambda_i \delta(v) \mathrm{d}v + \bar\lambda_i^2 \mathrm{d}v
\end{gather*}
The second equation holds due to the Campbell lemma \cite[Lemma 1.1]{kutoyants1998statistical}. $N^c(\cdot)$ is the count measure of the center process.
$\Lambda_i(\cdot) := \int_A \lambda_i(t)\mathrm{d}t$ is the intensity measure with respect to the intensity $\lambda_i$.
\begin{align*}
& \breve{m}_{\Lambda,2}( u) 
= \mathbb{E}\left[\frac{\Lambda_i(\mathrm{d} x) 
    \Lambda_i(x+\mathrm{d} u)}{\mathrm{d}x\mathrm{d}u}\right]
= \mathbb{E}[ \lambda_i(x) \lambda_i(x+u)] \\
=& \mathbb{E}[(\alpha_i +f_{i}(x)) (\alpha_i +f_{i}(x+u))]
= \mathbb{E}[f_{i}(x) f_{i}(x+u)] + 2\rho\alpha_i + \alpha_i^2 \\
=& \mathbb{E} \left[
\left( \int \phi_{\sigma_I} \left(x - s\right) 
    \mathrm{d}N^c(s) \right)
\left( \int \phi_{\sigma_I} \left(x+u - r\right) 
    \mathrm{d}N^c(r) \right)
\right] + 2\rho\alpha_i + \alpha_i^2 \\
=& \mathbb{E} \left[
    \iint
    \phi_{\sigma_I} \left(x - s\right) \phi_{\sigma_I} \left(x+u - r\right) 
    \mathrm{d}N^c(s) \mathrm{d}N^c(r)
\right] + 2\rho\alpha_i + \alpha_i^2 \\
=& \int \mathrm{d}s \int
    \phi_{\sigma_I} \left(x - s\right) \phi_{\sigma_I} \left(x+u - (s+v) \right) 
    \breve{M}^c_{N,2}(\mathrm{d}v) + 2\rho\alpha_i + \alpha_i^2 \\
=& \bar\lambda_i^2 + \rho \int
    \phi_{\sigma_I} \left(s\right) \phi_{\sigma_I} \left(u - s \right) 
    \mathrm{d}s
= \bar\lambda_i^2 + \rho [\phi_{\sigma_I} \ast \phi_{\sigma_I}](u)
\end{align*}

The reduced covariance measure intensity of the count measure can be derived as follows.
\begin{align*}
&M_{N,2}(\mathrm{d}t \times (t+\mathrm{d}u))
=\mathrm{d}t \cdot \breve{M}_{N}(\mathrm{d}u) \\
=& \mathbb{E}\left[ N_i(\mathrm{d} t) N_i(t+ \mathrm{d} u) \right]
= \mathbb{E}_{\Lambda} \left[
    \mathbb{E}_N \left[ N(\mathrm{d} t) N(t+ \mathrm{d} u) 
        \middle | \Lambda_i \right] \right] \\
=& \bar\lambda_i \delta(u)\mathrm{d}u 
    +\mathbb{E}_{\lambda} \left[\Lambda_i(\mathrm{d}t) \Lambda_i(t+ \mathrm{d} u) \right] 
= \bar\lambda_i \delta(u)\mathrm{d}u \mathrm{d}t
+ \breve{m}_{\Lambda}(u)\mathrm{d}u\mathrm{d}t
\end{align*}
So we have,
\begin{gather*}
\breve{m}_{N}(u) = \bar\lambda_i \delta(u) + \breve{m}_{\Lambda}(u) \\
\breve{c}_{N}(u) = \bar\lambda_i \delta(u) + \breve{c}_{\Lambda}(u)
\end{gather*}
Similarly, the reduced second-order covariance intensity is, 
\begin{align*}
& \mathbb{E}\left[ N(\mathrm{d} t) \Lambda(t+ \mathrm{d} u) \right]
= \mathbb{E}_{\Lambda} \left[
    \mathbb{E}_N \left[N(\mathrm{d} t) \Lambda(t+ \mathrm{d} u) \middle | \lambda \right]
    \right] \\
=& \mathbb{E}_{\Lambda} \left[\Lambda(\mathrm{d} t) \Lambda(t+ \mathrm{d} u) \right] 
= \breve{m}_{\Lambda}(u)\mathrm{d}u\mathrm{d}t
\end{align*}

\end{proof}

\section{Application to neuroscience dataset } \label{appendix:jitter_app}

\subsection{Materials }
We applied our method to the Allen Brain Observatory Visual Coding Neuropixels \cite{siegle2021survey}.
It used multiple high-density extracellular electrophysiology probes to simultaneously record spiking activity from many areas in the mouse brain, especially the visual cortex. 
The animals were passively presented with visual stimuli while the head was fixed.
The details of the experimental setup can be found in \cite{siegle2021survey}.
Our work used drifting gratings as the trials are long, and visual stimuli strongly elicit neural responses.
The drifting gratings have 8 different orientations 
(\ang{0}, \ang{45}, \ang{90}, \ang{135},\ang{180}, \ang{225}, \ang{270}, \ang{315}, clockwise from \ang{0} = right-to-left).
The temporal frequency is 8Hz.
The spatial frequency is 0.04 cycles/deg and the contrast is 80\% for all trials. 
The dataset assigns unique identities for all properties, such as conditions, trials, neurons, etc. In this paper, we refer to those identities directly.
We analyzed mouse session \texttt{798911424}.
The stimulus condition identities are:
\texttt{246},
\texttt{254},
\texttt{256},
\texttt{263},
\texttt{267},
\texttt{276},
\texttt{281},
\texttt{284}.
Each condition has 15 repeated trials, 120 trials in total.
A trial lasts for 3 sec with a 2-second stimulus and a 1-second blank screen.
5 brain areas were recorded by separate Neuropixels probes simultaneously.
The number of recorded neurons in each visual cortical area is roughly 100. 
We selected the top 50\% most active neurons (thresholded by the mean firing rate) including 
47 V1 neuron,
39 LM neurons,
23 RL neurons,
44 AL neurons,
39 AM neurons.

\subsection{Goodness-of-fit} \label{appendix:ks_test}
The goodness-of-fit test was assessed with the Kolmogorov-Smirnov (KS) test based on the time-rescaling theorem \cite{bowsher2007modelling, brown2002time, haslinger2010discrete}.
The theorem states that the transformed inter-spike intervals follow the unit exponential distribution. The KS test is used to compare the empirical distribution and the target distribution.
A good fit should have a straight curve along the diagonal in the Q-Q plot.
Figure \ref{fig:ks_test} shows the results between all pairs of brain areas.
The model does a good job with all curves staying along the diagonal without large deviations.

\begin{figure}[H]
\centering
\begin{subfigure}{0.9\textwidth}
\centering
\includegraphics[width=1\linewidth]{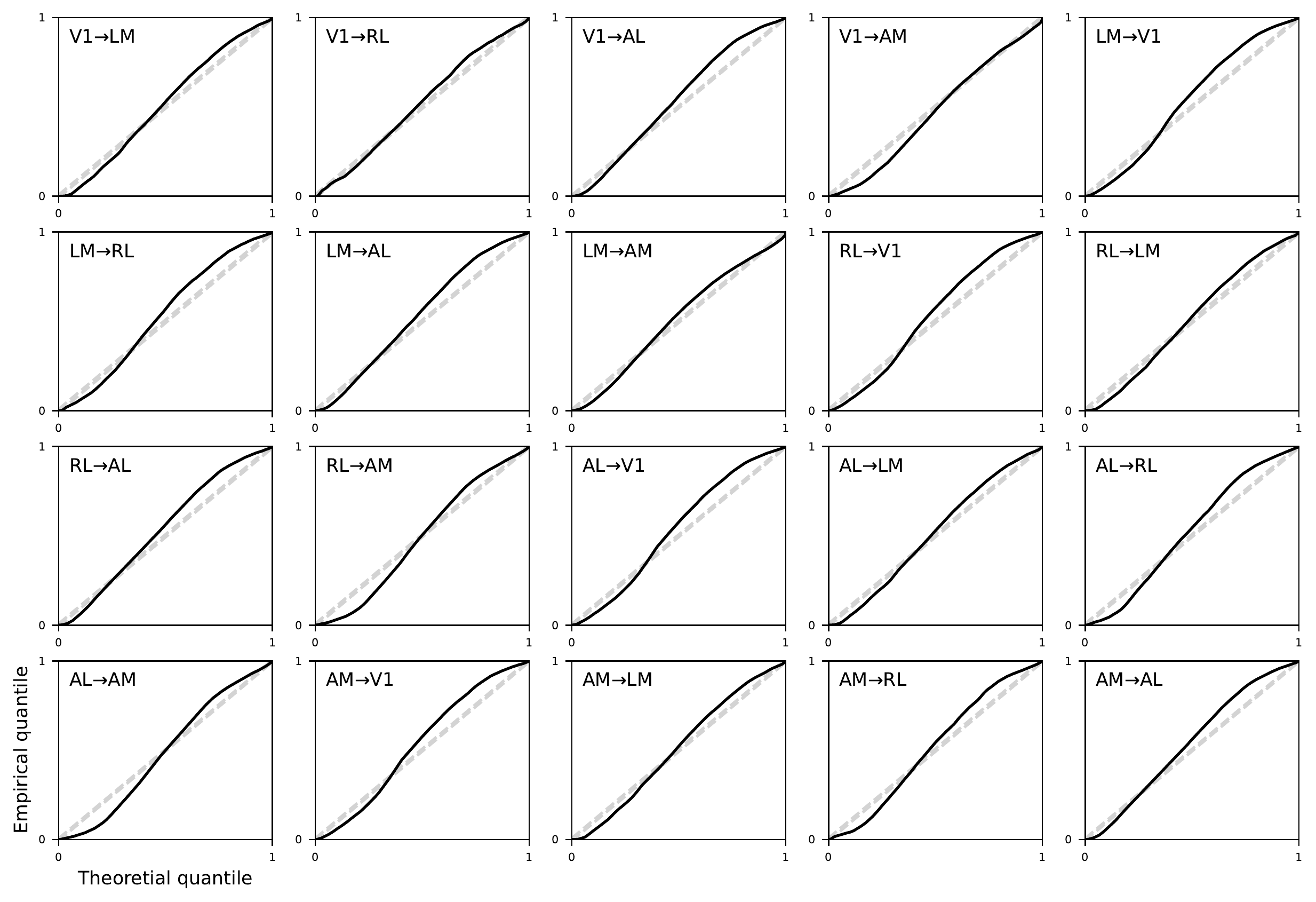}
\end{subfigure}
\caption{Goodness-of-fit tests between all pairs of brain areas. 
Each plot shows the results of an impact function of a pair of neurons. 
The connection direction is labeled at the corner.
The grey dashed lines are 99\% CI. 
A good fit should have a straight curve along the diagonal.
}
\label{fig:ks_test}
\end{figure}

Next, we show examples by comparing the fitted impact function and conditional inference-based CCG as another way of verification.
The CCG together with non-parametric fitting can be used to explore the timescale of the coupling effect.
Figure \ref{fig:ccg_filters} shows an example of excitatory, inhibitory, or neural coupling effects.
Similar to Figure \ref{fig:sim_demo} and \ref{fig:neural_demo}, the jitter-based CCG method may not be sensitive enough to detect the weak signals although it shows some clue of the coupling effect.
When the coupling effects are fitted using square windows in Figure \ref{fig:ccg_filters} second row, it will show a more significant excitatory or inhibitory coupling effect.
We also estimate those effects using non-parametric fitting as shown in Figure \ref{fig:ccg_filters} last row.
Our method allows us to aggregate all the information in a lag window to estimate the impact function with one parameter using a square window, or a few parameters using B-splines. The results will be more effective and significant than the method using the pointwise statistic.
We admit modeling using square windows loses some details of the coupling effect, but it is effective in capturing the general properties of the coupling effect with a limited dataset and large noise.

\begin{figure}[H]
\centering
\begin{subfigure}{0.9\textwidth}
\centering
\includegraphics[width=1\linewidth]{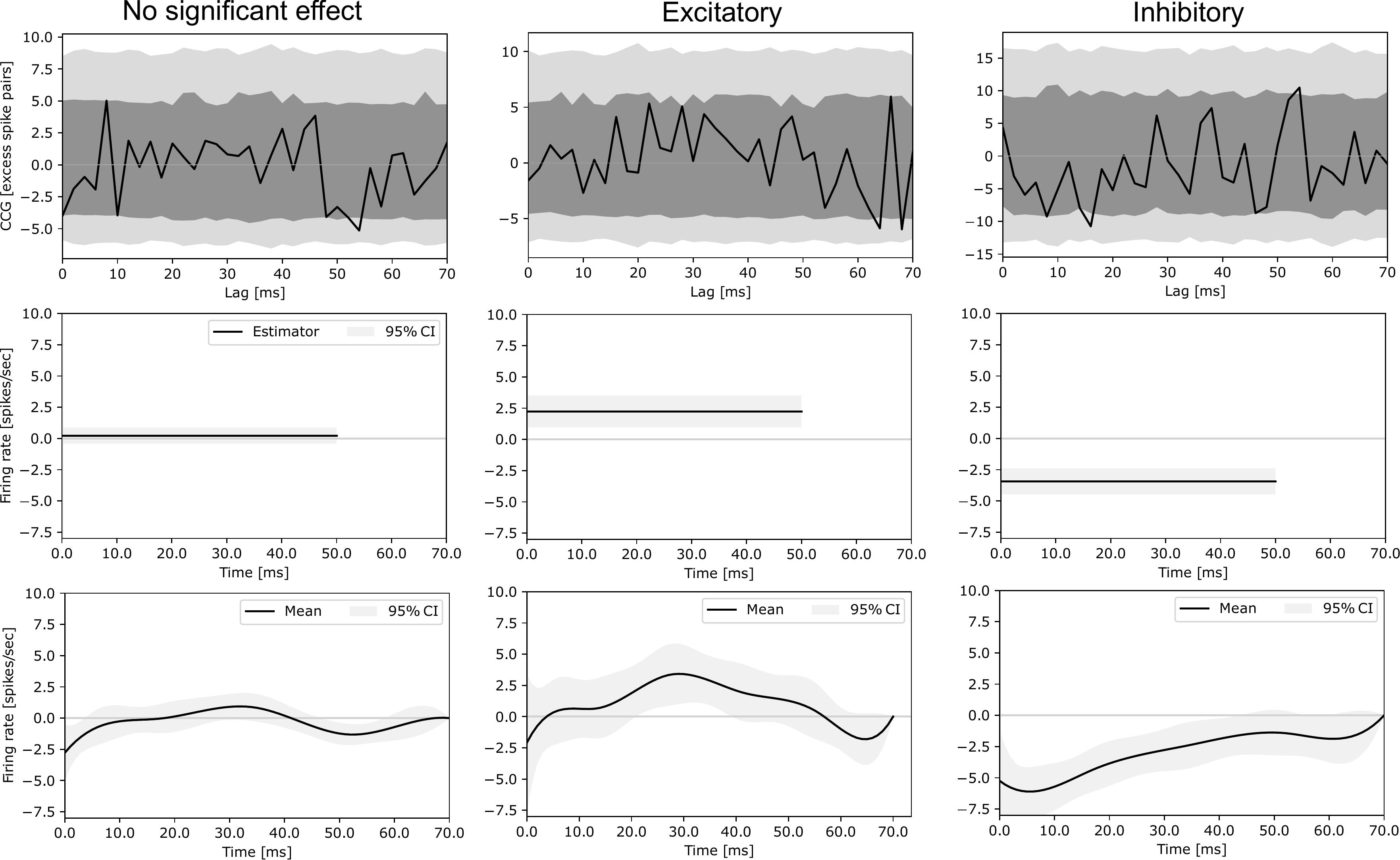}
\end{subfigure}
\caption{Comparison between impact functions and jitter-based CCG.
The calculation of the jitter-based CCG in the first row is the same as Figure \ref{fig:sim_demo}.
The second row shows the fitted impact functions using square windows. The third row shows the fitted impact functions using the non-parametric method.
}
\label{fig:ccg_filters}
\end{figure}


\end{document}